\crefname{claim}{Claim}{Claims}
\DeclareMathOperator*{\expectation}{\mathbb{E}}
\DeclareMathOperator*{\probability}{\mathbb{P}}
\newcommand\eps{\epsilon}
\renewcommand\epsilon{\varepsilon}
\newcommand{\prob}{\probability\probarg}
\DeclarePairedDelimiterX{\probarg}[1]{(}{)}{%
	\ifnum\currentgrouptype=16 \else\begingroup\fi
	\activatebar#1
	\ifnum\currentgrouptype=16 \else\endgroup\fi
}
\newcommand{\expect}{\expectation\expectarg}
\DeclarePairedDelimiterX{\expectarg}[1]{[}{]}{%
	\ifnum\currentgrouptype=16 \else\begingroup\fi
	\activatebar#1
	\ifnum\currentgrouptype=16 \else\endgroup\fi
}
\newcommand{\expectover}[1]{\expectation_{#1}\expectarg}
\newcommand{\innermid}{\nonscript\;\delimsize\vert\nonscript\;}
\newcommand{\activatebar}{%
	\begingroup\lccode`\~=`\|
	\lowercase{\endgroup\let~}\innermid 
	\mathcode`|=\string"8000
}
\newcommand{\innermids}{\nonscript\;\delimsize\vert\delimsize\vert\nonscript\;}
\newcommand{\activatebars}{%
	\begingroup\lccode`\~=`\|
	\lowercase{\endgroup\let~}\innermids 
	\mathcode`|=\string"8000
}
\newcommand\opt{\textsc{Opt}\xspace}
\newcommand\alg{\textsc{Alg}\xspace}
\newcommand\defeq{ \xspace \stackrel{\tiny \mathclap{\mbox{(def)}}}{=} \xspace}
\theoremstyle{plain}
\newtheorem{theorem}{Theorem}[section]
\newtheorem{lemma}[theorem]{Lemma}
\newtheorem{observation}[theorem]{Observation}
\newtheorem{claim}[theorem]{Claim}
\newtheorem{corollary}[theorem]{Corollary}
\newtheorem{assumption}[theorem]{Assumption}
\newlength{\continueindent}
\newcommand*{\ALG@customparshape}{\parshape 2 \leftmargin \linewidth \dimexpr\ALG@tlm+\continueindent\relax \dimexpr\linewidth+\leftmargin-\ALG@tlm-\continueindent\relax}
\apptocmd{\ALG@beginblock}{\ALG@customparshape}{}{\errmessage{failed to patch}}
\def\thm@space@setup{%
	\thm@preskip=\parskip \thm@postskip=0pt
}
\newcommand{\ALGtikzmarkcolor}{black}
\newcommand{\ALGtikzmarkextraindent}{4pt}
\newcommand{\ALGtikzmarkverticaloffsetstart}{-.5ex}
\newcommand{\ALGtikzmarkverticaloffsetend}{-.5ex}
\newcounter{ALG@tikzmark@tempcnta}
\newcommand\ALG@tikzmark@start{%
	\global\let\ALG@tikzmark@last\ALG@tikzmark@starttext%
	\expandafter\edef\csname ALG@tikzmark@\theALG@nested\endcsname{\theALG@tikzmark@tempcnta}%
	\tikzmark{ALG@tikzmark@start@\csname ALG@tikzmark@\theALG@nested\endcsname}%
	\addtocounter{ALG@tikzmark@tempcnta}{1}%
}
\def\ALG@tikzmark@starttext{start}
\newcommand\ALG@tikzmark@end{%
	\ifx\ALG@tikzmark@last\ALG@tikzmark@starttext
	\else
	\tikzmark{ALG@tikzmark@end@\csname ALG@tikzmark@\theALG@nested\endcsname}%
	\tikz[overlay,remember picture] \draw[\ALGtikzmarkcolor] let \p{S}=($(pic cs:ALG@tikzmark@start@\csname ALG@tikzmark@\theALG@nested\endcsname)+(\ALGtikzmarkextraindent,\ALGtikzmarkverticaloffsetstart)$), \p{E}=($(pic cs:ALG@tikzmark@end@\csname ALG@tikzmark@\theALG@nested\endcsname)+(\ALGtikzmarkextraindent,\ALGtikzmarkverticaloffsetend)$) in (\x{S},\y{S})--(\x{S},\y{E});%
	\fi
	\gdef\ALG@tikzmark@last{end}%
}
\apptocmd{\ALG@beginblock}{\ALG@tikzmark@start}{}{\errmessage{failed to patch}}
\pretocmd{\ALG@endblock}{\ALG@tikzmark@end}{}{\errmessage{failed to patch}}
\renewcommand{\Pr}[1]{\probability\left[#1\right]}
\newcommand{\Ex}[1]{\expectation\left[#1\right]}
\newcommand\given{\;\middle\vert\;}
\newcommand\bbR{\mathbb{R}}
\newcommand\bbN{\mathbb{N}}
\newcommand\ty{\tilde{y}}
\newcommand\rts{\tilde{r}^*}
\newcommand\rc{\check{r}}
\newcommand\cD{\mathcal{D}}
\newcommand\cDp{\mathcal{D}'}
\def \bX {\mathbf{X}}
\def \bT {\mathbf{T}}
\newcommand{\MAX}{\mathsf{MAX}}
\newcommand{\OPT}{\mathsf{OPT}}
\newcommand{\ALG}{\mathsf{ALG}}
\newcommand{\threephase}{\textsc{ThreePhase}}
\newcommand{\CDF}{\mathsf{CDF}}
\title{Prophet and Secretary at the Same Time}
\author{Gregory Kehne\thanks{Computer Science \& Engineering, Washington University in St. Louis, St. Louis, MO 63130. Email: {\tt kehne@wustl.edu}.}
\and Thomas Kesselheim\thanks{Institute of Computer Science and Lamarr Institute for Machine
Learning and Artificial
Intelligence, University of Bonn, Germany. Email:  {\tt thomas.kesselheim@uni-bonn.de}.}} 
\date{}
\begin{document}

	\maketitle
	
	\begin{abstract}
        Many online problems are studied in stochastic settings for which inputs are samples from a known distribution, given in advance, or from an unknown distribution. 
        Such distributions model both beyond-worst-case inputs and, when given, partial foreknowledge for the online algorithm.
        But how robust can such algorithms be to misspecification of the given distribution?
        When is this detectable, and when does it matter?
        When can algorithms give good competitive ratios both when the input distribution is as specified, and when it is not?

        We consider these questions in the setting of optimal stopping, where the cases of known and unknown distributions correspond to the well-known prophet inequality and to the secretary problem, respectively.
        Here we ask: Can a stopping rule be competitive for the i.i.d.\ prophet inequality problem and the secretary problem at the same time? 
        We constrain the Pareto frontier of simultaneous approximation ratios $(\alpha, \beta)$ that a stopping rule can attain.

        We introduce a family of algorithms that give nontrivial joint guarantees and are optimal for the extremal i.i.d.\ prophet and secretary problems.
        We also prove impossibilities, identifying $(\alpha, \beta)$ unattainable by any adaptive stopping rule.
        Our results hold for both $n$ fixed arrivals and for arrivals from a Poisson process with rate $n$. 
        We work primarily in the Poisson setting, and provide reductions between the Poisson and $n$-arrival settings that may be of broader interest.
	\end{abstract}

\section{Introduction} \label{sec:intro}

Many online problems have been studied under stochastic input models.
It is a standard assumption that in each step the input is drawn i.i.d.\ from a \emph{known} distribution. Alternatively, one can assume that the distribution is \emph{unknown}, which is similar (and in certain cases equivalent) to the input being specified by an adversary and arriving in a random order.
Examples include online matching and the Adwords problem for both the known \citep{feldman09online} and unknown \citep{devanur19near} settings, as well as online covering problems, where known vs. unknown i.i.d.\ guarantees may coincide, as they do for set cover \citep{grandoni2008set,gupta2022random}, or diverge, as for Steiner tree \citep{garg08stochastic}.
Unfortunately, algorithms designed for the known i.i.d.\ setting are often reliant on this input, and do not perform well when the distribution is unknown.
This invites the question of robustness: Can we design algorithms that exploit knowledge of the input distribution but are still competitive if the input is misspecified?

In this paper, we address this question for the i.i.d.\ prophet inequality problem.
This problem is central in the theory of optimal stopping, and has received much attention due in part to its implications for the design of online posted-price mechanisms.
This motivates both the setting where underlying distribution is known and where it is unknown.
The known-distribution setting was resolved only recently; the competitive ratio was brought above $1-\frac{1}{e}$ first to $0.738$ by \citet{abolhassani17beating}, then to $\beta_0 \approx 0.745$ by \citet{correa17posted}.\footnote{We refer to this optimal approximation ratio $\beta_0$ as the \emph{Kertz constant} \citep{kertz1986stop}. It is the unique solution to the equation $\int_0^1 \left( (\beta_0^{-1} - 1) - y (\ln y - 1)\right)^{-1} dy = 1$.}
On the other hand when the distribution is unknown, this setting was shown to coincide with the secretary problem \citep{correa19prophet}, itself a canonical problem in decision-making under uncertainty.

In this paradigm i.i.d.\ samples from a value distribution $\cD$ arrive,
and for each sample a stopping rule must decide whether to halt and receive its value or reject it forever. 
When $\cD$ is unknown, the optimal policy adopts a `wait-and-see' approach, and in the worst case receives a $\frac{1}{e}$ proportion of the expected maximum value as its (expected) reward.
On the other hand if the distribution $\cD$ is known, then the stopping rule can do better.
By maintaining a carefully chosen and decreasing acceptance threshold, it can guarantee a $\beta_0$ proportion of the expected maximum value as its reward \citep{correa17posted}.

But what can a stopping rule facing the i.i.d.\ prophet inequality problem do if its distributional foreknowledge is unreliable?
If its advice distribution $\cD'$ may not be the true value distribution $\cD$, then it can disregard $\cD'$ entirely, implement the secretary-optimal `wait-and-see' policy, and be assured of a competitive ratio of $\frac{1}{e}$. 
However this is unsatisfying, since it may only guarantee $\frac{1}{e}$ even in the event that $\cD'=\cD$ was correct after all.
In particular, the secretary-optimal policy dictates passing on the first $\frac{1}{e}$ arrivals unconditionally, even if they are excellent choices according to $\cD'$, and even if $\cD'=\cD$ is correct.
Furthermore, attempting to determine whether $\cD' = \cD$ or $\cD' \neq \cD$ itself may require passing on arrivals which the i.i.d.-prophet-optimal policy would happily choose.
Can a stopping rule strictly beat $\frac{1}{e}$ when its advice $\cD$ is correct, without forgoing all reward in the event that it is not?

To investigate this, we adopt the language of consistency and robustness  \citep{lykouris18competitive}.
For an online problem and a form of unreliable `advice' about upcoming arrivals, for which combinations of $(\alpha, \beta)$ can an algorithm be both
\begin{itemize}
    \item ($\alpha$-Consistent) $\alpha$-competitive with $\OPT$ when the advice is accurate, and
    \item ($\beta$-Robust) $\beta$-competitive with $\OPT$ when the advice is inaccurate?
\end{itemize}
While the dichotomy of $\cD' = \cD$ or $\cD' \neq \cD$ may seem blunt, the thought experiment above illustrates that the instances for which $\cD'$ and $\cD$ differ only slightly may be among the hardest.
We defer a proof to this effect and more discussion of the sensitivity of i.i.d.\ prophet stopping rules to distribution misspecification to \Cref{sec:smoothness}.
Furthermore, we argue that this dichotomy is a practically useful way of classifying inputs; for instance, what if inputs are generated via an un-auditable process which seems to perform well, but has failure modes that are not well-characterized?
Algorithmic inputs with this providence are increasingly prevalent.

We therefore set out to answer the following central questions:
\begin{quote}
    \emph{What joint consistency-robustness guarantees are possible for the i.i.d.\ prophet problem with unreliable distributional knowledge? What joint guarantees are impossible?}
\end{quote}

Note that the existing bounds can be understood as $\beta_0 \approx 0.745$-consistent and $0$-robust for the known i.i.d.\ setting, and both $\frac{1}{e}$-consistent and $\frac{1}{e}$-robust for the unknown i.i.d.\ setting. As we can randomize between the two, we immediately get the diagonal blue line in \Cref{fig:frontiers-unpop}. We also know that no algorithm is better than $\beta_0$-consistent or better than $\frac{1}{e}$-robust. We therefore want to understand what trade-offs between the parameters are possible, and what, if any, are necessary.

One should remark that---as an interesting contrast to most literature in the consistency-robustness framework (see \Cref{sec:relwork})---neither of the two extreme cases of our problem is trivial: even knowing the distribution, the correct decisions to obtain a competitive ratio of $\beta_0 \approx 0.745$ are complex, and so too are those necessary to obtain $\frac{1}{e}$ without knowing the distribution.

\subsection{Our Results}

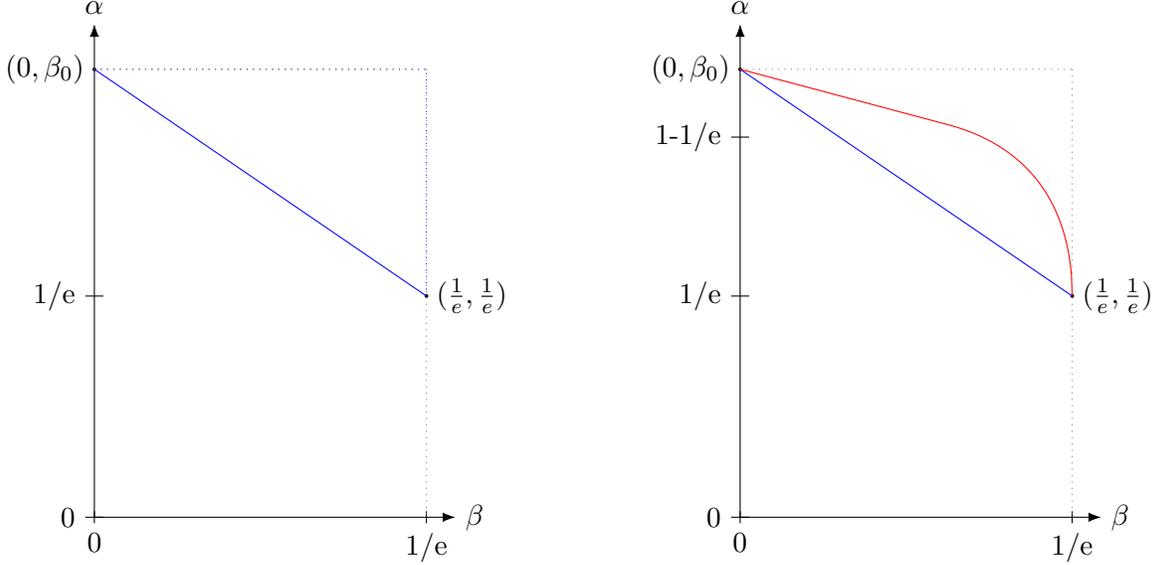
\begin{figure}
\begin{subfigure}{0.48\textwidth}
  \centering
  \begin{tikzpicture}[x=12cm, y=8cm]
    \draw[-{Latex}] (0,0) -- (0.4,0) node[right] {$\beta$};
    \draw[-{Latex}] (0,0) -- (0,0.82) node[above] {$\alpha$};
    
    \foreach \x in {0,1/e}
        \draw (\x,0.01) -- (\x,-0.01) node[below] {\x};
    
    \foreach \y in {0,1/e}
        \draw (0.01,\y) -- (-0.01,\y) node[left] {\y};
        
    \draw[fill=black] (0,0.745) circle[radius=0.5pt] node[left, black] {$(0,\beta_0)$};
    \draw[fill=black] ({1/exp(1)},{1/exp(1)}) circle[radius=0.5pt] node[right, black] {$(\frac{1}{e},\frac{1}{e})$};

    \draw[blue] (0,0.745) -- ({1/exp(1)},{1/exp(1)});
    \draw[blue,dotted] (0,0.745) -- ({1/exp(1)},0.745);
    \draw[blue,dotted] ({1/exp(1)},{1/exp(1)}) -- ({1/exp(1)},0.745);

    \draw[gray,dotted] (0,0.745) -- ({1/exp(1)},0.745);
    \draw[gray,dotted] ({1/exp(1)},0.745) -- ({1/exp(1)},0);
\end{tikzpicture}
  \caption{The unpopulated Pareto landscape of $(\alpha,\beta)$ consistency-robustness guarantees.}
  \label{fig:frontiers-unpop}
\end{subfigure}%
\hfill
\begin{subfigure}{0.48\textwidth}
  \centering
  \begin{tikzpicture}[x=12cm, y=8cm]
    \draw[-{Latex}] (0,0) -- (0.4,0) node[right] {$\beta$};
    \draw[-{Latex}] (0,0) -- (0,0.82) node[above] {$\alpha$};
    
    \foreach \x in {0,1/e}
        \draw (\x,0.01) -- (\x,-0.01) node[below] {\x};
    
    \foreach \y in {0, 1/e, 1-1/e}
        \draw (0.01,\y) -- (-0.01,\y) node[left] {\y};
        
    \draw[fill=black] (0,0.745) circle[radius=0.5pt] node[left, black] {$(0,\beta_0)$};
    \draw[fill=black] ({1/exp(1)},{1/exp(1)}) circle[radius=0.5pt] node[right, black] {$(\frac{1}{e},\frac{1}{e})$};

    \draw[blue] (0,0.745) -- ({1/exp(1)},{1/exp(1)});

    \draw[red, smooth, tension=0.5] plot[smooth] file {tikz_lambert-pts-suffix.txt};

    \draw[red] (0,0.745) -- (0.230501,0.652724);

    \draw[gray,dotted] (0,0.745) -- ({1/exp(1)},0.745);
    \draw[gray,dotted] ({1/exp(1)},0.745) -- ({1/exp(1)},0);
    
\end{tikzpicture}
  \caption{The joint $(\alpha,\beta)$ consistency-robustness guarantees we obtain via \threephase{} (\Cref{thm:alg-prophet-guarantee}).}
  \label{fig:frontiers-pop-final}
\end{subfigure}
\caption{The unpopulated Pareto landscape from which we start, and our algorithmic guarantees that dominate the interpolation of the two extremes. Here $\beta_0 \approx 0.745$ is the Kertz constant.}
\label{fig:combined}
\end{figure}

On the positive side, we present a family of algorithms that provide joint guarantees.
\begin{theorem}[informal statement of \Cref{thm:alg-prophet-guarantee}]\label{thm:informal-positive-result}
    For any $\gamma \in [0, \frac{1}{e}]$, an $(\alpha,\beta)$-consistency-robustness guarantee is possible for
    \begin{align*}
        \alpha &= \gamma + e^{-\tau_1(\gamma)} - e^{-\tau_2(\gamma)} \qquad \text{and} \qquad \beta = \gamma. 
    \end{align*}
\end{theorem}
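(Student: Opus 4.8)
The plan is to exhibit and analyze the $\threephase$ rule directly in the Poisson model (arrivals from a rate‑$n$ Poisson process on $[0,1]$ carrying i.i.d.\ values, $n\to\infty$), normalizing so that $\OPT=\Ex{\MAX}$ for $\MAX$ the largest value that arrives. Fix the target robustness $\gamma\in[0,\tfrac1e]$. After reparametrizing time by a clock $\tau$ under which a memoryless threshold has survival probability $e^{-\tau}$, $\threephase$ is governed by two breakpoints $\tau_1(\gamma)\le\tau_2(\gamma)$: in Phase~1 it rejects everything; in Phase~2 (clock time in $[\tau_1,\tau_2]$) it accepts the first arrival that is simultaneously a running record and above the threshold the $\beta_0$-optimal prophet rule for the advice $\cDp$ would use at that time; in Phase~3 it accepts the first running record. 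The record test is the distribution-free safety net, and the $\cDp$-threshold in Phase~2 is the only use of the advice; I would pin $\tau_1(\gamma)$ by the requirement that the Phase‑1 observation length has secretary-value exactly $\gamma$, and choose $\tau_2(\gamma)$ by optimizing the Phase‑2 yield subject to not breaking robustness — this last optimization is what produces the Lambert‑$W$-type closed forms for $\tau_1,\tau_2$.

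Both halves of the analysis rest on the standard i.i.d.-prophet level-set identity, which I would establish (or cite) first: with $\bar F=1-\CDF_\cD$ and $\lambda(v)=n\bar F(v)$ one has $\OPT=\int_0^\infty(1-e^{-\lambda(v)})\,dv$, while a rule that at time $t$ accepts records of value above $\theta_t$ has expected reward $\int_0^\infty\!\int_0^1 S(t)\,\rho_t(v)\,dt\,dv$, where $S(t)$ is the survival probability and $\rho_t(v)=\tfrac1t(1-e^{-nt\bar F(\max(v,\theta_t))})$ is the rate of accepting records of value $\ge v$ at $t$. The clock reparametrization is exactly the change of variables that turns $S$ into $e^{-\tau}$ on the phases where a threshold binds, and it is what will make the Phase‑2 contribution telescope into $e^{-\tau_1}-e^{-\tau_2}$.

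For consistency ($\cDp=\cD$): here the Phase‑2 threshold is the true Kertz threshold, and the key structural claim to verify is that on the quantile range traversed during Phase~2 this threshold sits above the running record (so the record test never binds there, up to lower-order terms), whence Phase~2 behaves exactly like the $\beta_0$-optimal rule restricted to the clock window $[\tau_1,\tau_2]$. Feeding $\theta_t$ into the level-set identity and integrating over $v$, the per-level ratio — just as in the Kertz analysis, where the optimal schedule equalizes it across levels — comes out to a constant times $1-e^{-\lambda(v)}$, and summing gives Phase‑2 reward $=(e^{-\tau_1}-e^{-\tau_2})\cdot\OPT$. Since ``stop in Phase~2'' and ``reach Phase~3'' are disjoint events, this adds to the $\gamma\cdot\OPT$ that the robustness analysis already guarantees (and which, for $\cDp=\cD$, is realized by the mass that reaches Phase~3 and meets the record rule), yielding $\Ex{\threephase}\ge(\gamma+e^{-\tau_1(\gamma)}-e^{-\tau_2(\gamma)})\cdot\OPT$.

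For robustness (arbitrary $\cDp$): I would argue that the competitive ratio of $\threephase$ is minimized over $\cDp$ in a degenerate configuration — either the advice threshold is driven below the running record at all times, collapsing Phases~2--3 into ``accept the first record after $\tau_1$,'' or it is driven so high (or $\cD$ is arranged so adversely) that Phase~2 accepts nothing and the rule collapses to ``accept the first record after $\tau_2$.'' For a pure record rule with a fixed observation cutoff, the event of accepting $\MAX$ depends only on arrival times and the relative order of the i.i.d.\ values, hence is independent of $\MAX$, so its reward is at least $\Pr{\text{accept }\MAX}\cdot\Ex{\MAX}$, i.e.\ the classical secretary value of its cutoff; choosing $\tau_1(\gamma)$ (and, as a side constraint on $\tau_2(\gamma)$, keeping the $\tau_2$-cutoff on the favorable side of the secretary curve) makes both of these $\ge\gamma$, with equality. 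The main obstacle is precisely this step: showing that no intermediate $\cDp$ does worse than the two degenerate extremes requires a monotonicity/exchange argument on the level-set reward as a function of the threshold schedule, and it must be carried out carefully because $\threephase$ must serve the consistent and robust regimes with one and the same schedule — the very schedule that telescopes cleanly under $\cDp=\cD$ must also be the one whose worst case over $\cDp$ is exactly $\gamma$. The boundary cases ($\gamma=0$ recovering $\beta_0$ through a degenerate Phase~1, $\gamma=\tfrac1e$ recovering $\tfrac1e$ through a degenerate Phase~2) then follow by inspection of the formula.
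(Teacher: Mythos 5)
Your proposal matches the paper's high-level architecture (three phases, secretary fallback, Lambert-$W$ parametrization, secretary bound $\min\{\tau_1\log(1/\tau_1),\tau_2\log(1/\tau_2)\}$) but it analyzes a \emph{different} Phase-2 rule, and this matters for the consistency half. In the paper, $\threephase$ uses a \emph{static, stochastic} Phase-2 threshold: it simulates a ``Phase 0'' of arrivals from $\cDp$ at rate $n$ over $[-z,0)$ and sets the threshold to the $(z+2)$-th largest of those draws. Your Phase-2 threshold is the time-varying Kertz schedule for $\cDp$. (As the related-work discussion notes, a deterministic decreasing threshold is the choice of \citet{bai2025optimal}, not of this paper.) The constant $e^{-\tau_1}-e^{-\tau_2}$ is exactly the telescoping contribution of a \emph{fixed} high-quantile threshold over $[\tau_1,\tau_2]$, the Ehsani-et-al.\ term; it is not what the Kertz schedule restricted to a clock window yields, and your structural claim that the Kertz threshold dominates the running record throughout Phase~2 is itself far from clear (the Kertz threshold $\tilde r^\ast(t)\to 0$ while the running record $\to\MAX$), so the argument that the record test ``never binds'' needs substantial work. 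More importantly, by choosing a random threshold that is the $\ell$-th order statistic of \emph{simulated draws from the same distribution} (when $\cDp=\cD$), the paper reduces the consistency analysis to purely \emph{ordinal} order-statistic probabilities over $\bX_0\cup\bX_+$ (Lemmas \ref{lem:condnl-prob-low-j}--\ref{lem:prophet-prob-ratio-bottom}), which it can then convert to an expectation bound precisely because acceptance depends only on relative order and arrival times, independently of realized values (\Cref{thm:alg-prophet-guarantee}'s proof). Your deterministic Kertz threshold breaks this independence — the probability that a record crosses $\theta_t$ depends on its value — so the level-set calculation you sketch would need a different device to close.

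For robustness, the obstacle you flag (showing the two degenerate threshold configurations are extremal over all $\cDp$) is real for the route you describe, but the paper does not need it. \Cref{lem:three-phase-prob-ratio-bound} conditions on $\MAX(\bX_+)=y$ and on the realized Phase-2 threshold $L$, then does a two-case analysis: if $y\le L$ the max can only be caught in Phase~3 and the probability is exactly $\tau_2\log(1/\tau_2)$; if $y>L$ one drops the ``below-$L$'' requirement in Phase~2, which only shrinks the acceptance probability, and gets $\ge\tau_1\log(1/\tau_1)$. No exchange or monotonicity-over-$\cDp$ argument is required, and the bound holds pointwise in $y$ and $L$, which is what lets the secretary guarantee lift from conditional probabilities to expectations. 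Also, $\tau_1$ and $\tau_2$ are \emph{both} pinned by the robustness constraint \eqref{eq:ass-tau1-tau2}, not by maximizing Phase-2 yield as you suggest; the Lambert-$W$ forms come from inverting $\tau\mapsto\tau\log(1/\tau)$ at level $\gamma$ on its two branches.
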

\noindent 
Here $\tau_1$ and $\tau_2$ are time thresholds given by $\tau_1(\gamma) \defeq e^{W_{-1}(-\gamma)}$ and $\tau_2(\gamma) \defeq e^{W_{0}(-\gamma)}$, where $W_0$ and $W_{-1}$ are branches of the Lambert W function.
A more complete exposition appears in \Cref{sec:algos}.

For $\gamma = 0$, the known-distribution guarantee furnished by \Cref{thm:informal-positive-result} is $\alpha=1-\frac{1}{e}$ (the known-distribution i.i.d.\ prophet setting), and it ranges to $\frac{1}{e}$ in the unknown i.i.d./secretary setting when $\gamma =\frac{1}{e}$. Interestingly, as we increase $\gamma$ from $0$, at first $\alpha$ also increases to strictly more than $1-\frac{1}{e}$, before decreasing to $\frac{1}{e}$.
Our final Pareto frontier is derived by interpolating between this and the known-distribution optimal policy of \citet{correa17posted}, and is shown in \Cref{fig:frontiers-pop-final}.

For impossibilities, we parameterize all stopping rules by a parameter $\delta \in [0,1]$, which is a rule's conditional probability of rejecting a large arrival in the starting range $t \in [0,\eps]$. 
We show:
\begin{theorem}[informal statement of \Cref{thm:negative-result}]\label{thm:informal-negative-result}
    For any (possibly adaptive) policy, the consistency-robustness guarantees must satisfy
    \begin{align*}
		\alpha \leq \beta_0 - \frac{1}{76}\left(\eps \delta  -\frac{15}{4} \eps^3 \right)^3
        \qquad \text{and} \qquad 
        \beta \leq 1 - (1 - \epsilon)(\delta^{\frac{\delta}{1-\delta}} - \delta^{\frac{1}{1-\delta}}). \notag
    \end{align*}
\end{theorem}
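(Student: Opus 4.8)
The plan is to fix an arbitrary (possibly adaptive) stopping rule and confront it with two hard instances that share a common advice distribution $\cDp$: one on which the advice is correct (which will bound the consistency $\alpha$) and one on which it is incorrect (which will bound the robustness $\beta$). The parameter $\delta\in[0,1]$ is the single knob linking the two: it is the rule's conditional probability of rejecting a \emph{large} arrival --- one whose value falls in the extreme upper tail of $\cDp$ --- occurring during the initial window $t\in[0,\eps]$. The first task is to make this a well-defined number: I would take both hard instances to agree with $\cDp$ except in the extreme upper tail and couple them so that, restricted to $[0,\eps]$, their arrival streams are indistinguishable to the rule up to the first large arrival (the ``small'' values differ but no sensible rule ever stops on them). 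Hence the rule commits to the same conditional behaviour on large early arrivals in both scenarios, summarised by $\delta$, and the two bounds can then be established separately against their respective instances. Everything is carried out in the Poisson$(n)$ model, with the $n$-arrival statement following via the reductions of the paper.

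\emph{Robustness.} For the $\beta$ bound I would use the secretary-type instance: a scale-invariant family of decoys grafted onto the upper tail of $\cDp$, arranged so that the genuine optimum is realised only by the rarest/largest value while the incorrect advice $\cDp$ gives the rule no way to recognise a decoy as such until too late, so that a rule which stops too eagerly on an early record is stuck with a decoy worth a vanishing fraction of $\OPT$. I decompose the rule's expected reward by the arrival time of the optimal value: with probability $\eps$ it falls in $[0,\eps]$, where the rule collects it with conditional probability exactly $1-\delta$; with probability $1-\eps$ it falls in $[\eps,1]$, and there I bound the rule's reward by the maximum, over all non-increasing ``still-active'' schedules consistent with the $\delta$-constraint at time $\eps$ and with the $\Theta(1/t)$ record-arrival intensity of the instance, of the probability of eventually catching the optimum. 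This is a one-parameter calculus optimization whose value is $(1-\delta)\,\delta^{\delta/(1-\delta)}$ --- the product form being transparent once one uses the identity $\delta^{\delta/(1-\delta)}-\delta^{1/(1-\delta)}=(1-\delta)\,\delta^{\delta/(1-\delta)}$, the first factor reading ``accept now'' and the second ``have stayed active long enough to do so''. Combining the two cases gives $\beta\le 1-(1-\eps)\bigl(\delta^{\delta/(1-\delta)}-\delta^{1/(1-\delta)}\bigr)$.

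\emph{Consistency.} For the $\alpha$ bound I take the advice, and the true distribution, to be the Kertz-extremal i.i.d.\ instance underlying \citet{correa17posted,kertz1986stop}, for which the optimal ratio $\beta_0$ is attained by a threshold rule whose acceptance threshold is highest at time $0$ and strictly decreasing; in particular the optimal rule accepts the very largest arrivals in $[0,\eps]$ essentially surely. Our fixed rule instead rejects such an arrival with probability $\delta$, and I would show this creates a quantifiable, unrecoverable deficit. The ``effective'' lost acceptance mass over $[0,\eps]$ is at least $\eps\delta$ up to a lower-order correction $\tfrac{15}{4}\eps^3$, which accounts for a clever rule confining its rejections in $[0,\eps]$ to arrivals that, while large, still lie below the threshold the optimal rule actually uses there, and for the mild smoothing afforded by the length of the window. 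Feeding this perturbation into the recursion (equivalently, the infinite-dimensional LP) that characterises $\beta_0$ and propagating the deficit forward over the remaining horizon, I would show the achievable value drops by at least $\tfrac{1}{76}\bigl(\eps\delta-\tfrac{15}{4}\eps^3\bigr)^3$. The third power is the delicate point: because the Kertz threshold is a stationary point of the $\beta_0$-objective, the leading-order loss is not linear in the deficit, and a careful accounting of how a deficit created in $[0,\eps]$ must be absorbed by accepting progressively inferior items over the rest of the horizon --- each substitution itself costing in proportion to the running deficit --- is what produces the cube and pins down the constants. This yields $\alpha\le\beta_0-\tfrac{1}{76}\bigl(\eps\delta-\tfrac{15}{4}\eps^3\bigr)^3$.

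\emph{Main obstacle.} Once the decoy instance and the meaning of $\delta$ are fixed, the robustness side is essentially a clean one-dimensional optimization. The real work is on the consistency side: one must perturb the already-delicate tight analysis of the Kertz constant and argue --- with explicit constants --- that no adaptive reallocation of acceptance probability over $[\eps,1]$ can recover a deficit created in $[0,\eps]$, which is exactly where the cubic dependence and the constants $\tfrac1{76}$ and $\tfrac{15}{4}$ have to be extracted. A secondary but necessary technical point is the coupling in the first step, which legitimises treating $\delta$ as a single parameter shared across both instances.
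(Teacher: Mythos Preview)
Your high-level plan---parameterize by $\delta$, then bound $\alpha$ and $\beta$ separately on two linked instances---matches the paper. But both halves of your execution diverge from what the paper actually does, and in each case your sketch has a real gap.

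\textbf{Robustness.} You decompose by the arrival time of the \emph{optimal value} and claim the rule collects it with conditional probability $1-\delta$ when it lands in $[0,\eps]$. But $\delta$ is the rejection probability on the \emph{first arrival above a threshold $b$} in $[0,\eps]$, not on the optimum; these coincide only when the optimum is also the first large arrival, which your instance does not guarantee. The paper instead fixes the function $f(t) = \Pr[\text{stop on first arrival above } b \mid \text{it arrives at time } t]$, notes this is invariant across any instance that agrees with $\cDp$ below $b$ (the policy has seen only sub-$b$ arrivals up to that moment), and then builds a modified instance in which arrivals above $b$ come at a \emph{scaled rate} $\alpha\lambda'$ together with rare enormous values at rate $\lambda''\to 0$. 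The constraint $\int_0^\eps f(t)\,\frac{\lambda' e^{-\lambda' t}}{1-e^{-\lambda'\eps}}\,dt \ge 1-\delta$ then lower-bounds the unconditional probability of stopping before~$\eps$ on the modified instance, and the choice $\alpha = \frac{\ln(1/\delta)}{(1-\delta)\lambda'\eps}$ is exactly what produces $\delta^{\delta/(1-\delta)}-\delta^{1/(1-\delta)}$. No record-arrival analysis or $\Theta(1/t)$ intensity appears.

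\textbf{Consistency.} Your attribution of the cube to ``stationarity of the Kertz threshold'' is not the mechanism, and stationarity by itself would point to a \emph{quadratic} loss. The paper's cube arises differently. The core lemma bounds the constrained policy's reward by $\tilde r^*\bigl(\tfrac{\kappa P(1-\gamma)}{2(1+\kappa)}\bigr)$, where $P=\int_0^\eps p(t)\,dt$ is the integrated rejection probability and
\[
\kappa = \frac{(b-a)\,(1-\CDF(b))}{\int_{\tilde r^*(\eps)}^\infty (1-\CDF(x))\,dx}.
\]
On the hard instance, with the choices $a=\tilde r^*(2P/3)$ and $b=\tilde r^*(P/3)$, both $(b-a)$ and $(1-\CDF(b))$ scale linearly in $P$, so $\kappa\propto P^2$ and the loss is $\propto \kappa P \propto P^3$. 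The engine behind this lemma is not an LP perturbation or a forward ``propagation of deficit'' but the \emph{inverse function theorem}: one passes from the recursive differential inequality $\check r'(t)\ge -\int_{\check r(t)}^\infty n(1-\CDF(x))\,dx$ to the non-recursive bound $(\check r^{-1})'(s)\le ((\tilde r^*)^{-1})'(s)$, which lets one compare how much \emph{time} the constrained and optimal policies spend at each reward level $s$. Without this device you will not be able to handle arbitrary adaptive policies or extract the explicit constants.
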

\noindent
For suitable combinations of $\eps$ and $\delta$, this implies both $\alpha < \beta_0$ and $\beta < \frac{1}{e}$.

\subsection{Our Techniques}
For our positive result, we design an algorithm consisting of three phases: In the first phase, from time $0$ to time $\tau_1$, all arrivals are rejected. In the second phase, from time $\tau_1$ to $\tau_2$, we accept all arrivals that are the best in the sequence so far and meet a threshold derived from $\mathcal{D}'$. In the third phase, from time $\tau_2$ to $1$, we drop this latter requirement and accept any arrival that is the best in the sequence so far. The idea behind this choice is that regardless of $\mathcal{D}'$ we only accept arrivals that are the best in the sequence so far. This ensures robustness. At the same time, if $\mathcal{D}'$ is indeed the correct distribution, we mimic the $(1-\frac{1}{e})$-competitive algorithm for i.i.d.\ prophet inequality, which sets a fixed threshold for the entire sequence (and so also only accepts arrivals that are best so far).

In the analysis, we observe that the worst incorrect advice $\mathcal{D}'$ gives a competitive ratio of $\min\{\tau_1 \log(1/\tau_1), \tau_2 \log(1/\tau_2)\}$. This is the minimum of the performances of two different secretary algorithms: One for which only the first phase serves as the observation phase, and one for which the first and the second phases serve as the observation phase. By adjusting the phase lengths, we equalize these two bounds. Our remaining degree of freedom is then the length of the second phase. To show consistency, meaning to analyze the algorithm in the case that $\mathcal{D}'$ is indeed the correct distribution, we simulate a ``Phase 0'' before the execution of the algorithm, from time $-z$ to $0$, following the same Poisson arrival rate and draws from $\mathcal{D}'$. Then, as the distributions are the same, we can view all arrivals as appearing at uniformly random times in $[-z, 1]$ across all phases.

For our negative result, we study an instance showing that one cannot get better than $\beta_0 \approx 0.745$ in the known-distribution setting defined by \cite{liu21variable}. We modify it slightly to hold in the Poisson setting with finite rate. We then argue that on this instance, any robustness will lead to strictly worse than $\beta_0$ consistency. The intuition is that, for robustness, decisions should not be too hasty because one first needs to observe the sequence. However, this instance is so tight that not accepting any arrival in the first $\epsilon$-fraction of time will necessarily lead to a competitive ratio of $\beta_0 - \Omega(\epsilon)$.

Our argument is much more involved than this because we argue about every possible policy, including adaptive policies. We show that the competitive ratio of a policy is necessarily bounded away from $\beta_0$ if it accepts the first arrival above a certain value $b$ with significant probability. Our modified distribution then introduces arrivals above $b$ at a much higher rate. This causes any policy that has good consistency to stop early on the modified instance. Additionally, in the modified instance, extremely high values arrive at a quite small rate. So these will be lost by stopping early. Importantly, there is no way to detect this higher rate, because we only argue about the acceptance of the first such arrival.

On a technical level, arguing about suboptimal policies is complicated. Letting $r(t)$ denote the expected reward of the policy after time $t$, any policy fulfills $r'(t) \geq - \int_{r(t)}^\infty n (1 - \CDF(x)) dx$, with equality for the optimal policy. This differential equation is generally quite challenging to reason about. Instead, we invoke the inverse function theorem and obtain $(r^{-1})'(s) \leq \frac{1}{- \int_s^\infty n (1 - \CDF(x)) dx}$, thus removing the recursion. This way, for each possibly adaptive policy, we can argue about how much time it spends before the expected remaining reward reaches some particular value.

\subsection{Related Work}\label{sec:relwork}

We consider the i.i.d.\ prophet inequality paradigm both in the setting where the underlying distribution is known and unknown.
The known distribution setting was resolved only recently; the competitive ratio was brought above $1-\frac{1}{e}$ first to $0.738$ by \citet{abolhassani17beating}, then to $\beta_0 \approx 0.745$ by \citet{correa17posted}, matching hard instances of \citet{hill1982comparisons} and establishing the optimal competitive ratio.
When the distribution is unknown, \citet{correa19prophet} show that the problem is essentially equivalent to the secretary problem, and show that the optimal competitive ratio is $\frac{1}{e}$ even given $o(n)$ samples from the distribution.

Since each extremal problem is well-studied, most relevant prior work falls into one of two camps: work on the i.i.d.\ prophet inequality and relaxations of the assumption of distributional foreknowledge, and work on the secretary problem augmented by additional information or advice.

In the first camp, a long line of work continues to assume the veracity of distributional prior information, but weakens the stopping rule's access to it. 
This includes the question of how many samples are needed to approach worst-case optimality.
Remarkably, one sample per arrival suffices for the prophet inequality \citep{rubinstein20optimal}.
In the i.i.d.\ setting
$1-\frac{1}{e}$ is possible using one sample per arrival \citep[Section 4.2]{correa19prophet}, and $O(1)$ samples per arrival suffice to get arbitrarily close \citep{rubinstein20optimal}. 
In a similar vein, \citet{li2023prophet} ask how useful a single quantile query to $\cD$ is, and present a stopping rule that beats $1-\frac{1}{e}$ and bears some semblance to ours (\Cref{sec:algos}).
Separately, \citet{dutting19posted} consider the case that $\cD$ and $\cD'$ differ by $\eps$ according to various statistical distances.
They normalize all arrival values to $[0,1]$ and provide stopping rules with additive guarantees in terms of $\eps$ and $n$. 

The second camp takes a perspective of algorithms with unreliable advice or predictions. The consistency-robustness paradigm for online algorithms with advice was introduced by \citet{lykouris18competitive} in the context of online caching. It has inspired hundreds of follow-up works for all kinds of problems; see \citet{mitzenmacher22algorithms} for a survey, or the Website for Algorithms with Predictions\footnote{\url{https://algorithms-with-predictions.github.io}} for a comprehensive list. In particular, augmenting the secretary problem with advice has been studied.
One line of recent work considers advice in the form of estimates of unseen arrival values, with advice inaccuracy parameterized by $\eps$.
In the consistency-robustness framework, these works aim for competitive ratio $\alpha \rightarrow 1$ as $\eps \rightarrow 0$, while maintaining robustness $\beta = \Omega(1)$ \citep{antoniadis20secretary, fujii24secretary, choo24short, balkanski24fair}.
An exception is \citet{braun24secretary}, who use a prediction of the difference between the largest and $k^{\text{th}}$-largest values to be $(1 - \Omega(1))$-competitive.

Another line of work provides distributional information to the secretary problem.
For instance, \citet{correa21secretary} and \citet{kaplan20competitive} take advice to be a number of samples without replacement from an adversarially chosen collection of values, the remainder of which then arrive in random order, and \citet{dutting21secretaries} develop an LP-based approach to handle this and other augmentations.
These models assume that augmentations are generated as prescribed, and do not allow the possibility that this distributional side-information is adversarial or otherwise faulty.

While it is arguably more common to assume a fixed number of arrivals in optimal stopping problems, there is also quite some work assuming Poisson arrivals. In particular, 
\citet{allaart07prophet} undertakes a careful analysis of the optimal stopping problems studied in the discrete-time setting by \citet{krengel1978semiamarts} and \citet{hill1982comparisons} in the limit as the Poisson arrival rate becomes large, and \cite[Section 10.3]{singla18combinatorial} presents a concise proof of the i.i.d.\ optimal guarantee in the Poisson setting.
It is also the natural setting for recent work on steady-state analogs of these stopping problems \citep{kessel22stationary}. 
We find the Poisson setting better suited to proving impossibilities, and present our results for Poisson arrivals. 
However this is without loss of generality, and our positive and negative results also hold in the standard $n$-arrival model as $n$ becomes large.
We formalize this via reductions between the settings in \Cref{sec:prelims}. 
While the observation these settings are roughly equivalent is not novel, we are unaware of prior reductions that are comparably general.

Finally, in independent work, \citet{bai2025optimal} also jointly consider the secretary and prophet inequality problems from the perspective of consistency-robustness tradeoffs.
Like us, they consider the expected competitive ratio, which they refer to as MaxExp, as well as MaxProb, the objective of maximizing the probability of stopping on the maximum value in  the arrival sequence.
Their family of stopping rules for the expected competitive ratio is similar to ours: it accepts arrivals that are best-so-far and operates in three phases, where the first phase rejects all arrivals and the second phase applies an additional threshold that arrivals must surpass in order to be accepted. 
In our work this middle phase threshold is stochastic but constant, while for \citeauthor{bai2025optimal} it is deterministic but decreasing.
Both approaches eventually interpolate with Hill and Kertz' optimal strategy in order to attain sufficiently high consistency.
While our stopping rules are similar, our impossibility results are complementary: we show a hardness result for the expected competitive ratio, while they show hardness for the MaxProb objective.

\subsection{Organization}
The rest of the paper proceeds as follows.
In \Cref{sec:prelims} we formally introduce the Poisson and $n$-arrival settings the notation we will use, and reductions between the two settings.
In \Cref{sec:algos} we present our three-phase stopping rule \threephase{}, analyze its performance, and use it to establish nontrivial lower bounds on the Pareto frontier.
In \Cref{sec:hardness} we present consistency-robustness impossibilities for this problem, showing that there are pairs of distributions $\cD$, $\cD'$ for which \emph{no} adaptive stopping rule can simultaneously achieve worst-case optimal competitive ratios for the known- and unknown-distribution i.i.d.\ prophet inequality problems.
\Cref{sec:smoothness} briefly addresses whether stopping rules can use distributional advice $\cD' \neq \cD$ if $\cD'$ and $\cD$ are close, and uses the reductions between settings together with prior $n$-arrival impossibility results to provide a negative answer.
We conclude by discussing the generalization of our approach to related settings and future directions in \Cref{sec:conclusion}.

\section{Preliminaries} \label{sec:prelims}

The policy $\ALG$ receives some advice value distribution $\cDp$ which is possibly distinct from the true value distribution $\cD$.
We will assume that the value distributions $\cD$ and $\cDp$ are non-atomic; since we consider the class of randomized stopping rules (which may implement randomized tie-breaking), this is without loss of generality.

For a given rate/number of arrivals, a policy $\ALG$ that knows $\cD$ is \emph{$\alpha$-consistent} if 
\[
    \expectover{\bX \sim \cD}{\ALG(\bX, \cD)} \geq \alpha \cdot \expectover{\bX \sim \cD}{\MAX(\bX)}
\] 
for all value distributions $\cD$, where we use $\bX\sim \cD$ to denote a random sequence of arrivals $\bX$, each of which is an i.i.d.\ sample from $\cD$, and the expectation is also over the randomness of $\ALG$.
A policy is \emph{$\beta$-robust} if 
\[
    \expectover{\bX \sim \cD}{\ALG(\bX, \cDp)} \geq \beta \cdot \expectover{\bX \sim \cD}{\MAX(\bX)}
\] 
for all value distributions $\cD$ and $\cDp$, where again the expectation is over $\bX$ and $\ALG$.

What is the process by which $\bX$ is generated from $\cD$?
We will consider two choices.

\subsection{The \texorpdfstring{$n$}{n}-Arrival Model}
Here we consider $n$ discrete time steps $i \in [n]$ and $\bX \defeq (X_1, \ldots, X_n)$, where each $X_i$ is an i.i.d.\ sample from $\cD$.
Although distributional assumptions differ, this is the most common arrival model in which the secretary problem, the prophet inequality, and related problems are studied.

\subsection{The Poisson Arrival Model}
Here $\bX$ is comprised of arrivals that follow a Poisson point process with rate $\lambda=n$.
For each arrival a value sampled i.i.d.\ from a distribution $\cD$ at a given rate $\lambda$, as time ranges over $t \in [0,1]$.
In other words, in each infinitesimal time interval $dt$, an independent sample from $\cD$ arrives with probability $\lambda \cdot dt$.
This gives rise to a sequence of arrival values and times. 
We let $\bT$ denote the random set of arrival times according to the Poisson process, and $\bX \defeq (X_t)_{t \in \bT}$ the random sequence of values at these times.

\subsection{Reductions Between i.i.d.\ Arrival Models}
\label{sec:reductions}

We now present a pair of reductions between these arrival models.
These will enable us to state algorithmic and impossibility results for both models. 
These reductions may well be folklore; however we are unaware of prior appearances, and so present them here.
Proofs appear in \Cref{app:prelims}.

The first reduction shows that a competitive stopping rule in the Poisson setting implies a competitive stopping rule in the $n$-arrival setting. 
\begin{restatable}[$n$ to Poisson]{lemma}{ntopoisson}
	\label{lem:good-poisson-implies-good-fixed-n}
	Given $n \in \mathbb{N}$, let $\lambda = n(1-\eps)$ for $\eps = \sqrt{\frac{\log n}{n}}$. 
    Suppose $\ALG_\lambda$ is an $\alpha$-competitive Poisson stopping rule for rate $\lambda$. 
	Let $\ALG_n$ be the $n$-arrival stopping rule that simulates $\ALG_\lambda$ by first sampling Poisson arrival times and then taking the first $n$ values to be $X_1, \ldots, X_n$.
	Then $\ALG_n$ is $(\alpha - 4\eps)$-competitive in the $n$-arrival setting.
\end{restatable}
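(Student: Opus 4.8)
The plan is to exhibit $\ALG_n$ as a faithful online simulation of $\ALG_\lambda$ on a (conditional) Poisson instance, and to pay only for the rare event that the simulated Poisson process overshoots $n$ arrivals. On input $\bX=(X_1,\dots,X_n)$, the rule $\ALG_n$ generates a rate-$\lambda$ Poisson process on $[0,1]$ with arrival times $t_1<t_2<\cdots$ (drawn incrementally from exponential interarrival times, so $t_i$ is available exactly when $X_i$ arrives), lets $N$ be the number of arrivals in $[0,1]$, and if $N\le n$ runs $\ALG_\lambda$ on the instance that assigns value $X_i$ to time $t_i$ for $i\le N$, outputting $\ALG_\lambda$'s selection; if $N>n$ it selects nothing. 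Since $\ALG_\lambda$ is online, its decision at $t_i$ depends only on $(t_1,X_1),\dots,(t_i,X_i)$, so this is a legitimate online $n$-arrival rule. Coupling with a genuine rate-$\lambda$ Poisson-$\cD$ instance $\bY$ that reuses these arrival times and takes values $X_1,\dots,X_{\min(N,n)}$ (plus fresh i.i.d.\ $\cD$-samples when $N>n$), one has $\ALG_n(\bX)=\ALG_\lambda(\bY)$ on $\{N\le n\}$, hence $\ALG_n(\bX)\ge \ALG_\lambda(\bY)\,\mathbbm{1}[N\le n]$ pointwise, and so, by the guarantee for $\ALG_\lambda$ and $\ALG_\lambda(\bY)\le\MAX(\bY)$,
\[
\Ex{\ALG_n(\bX)}\;\ge\;\alpha\,\Ex{\MAX(\bY)}-\Ex{\MAX(\bY)\,\mathbbm{1}[N>n]}.
\]

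It then remains to compare both terms on the right to $\Ex{\MAX(\bX)}$. Write $m_k$ for the expected maximum of $k$ i.i.d.\ $\cD$-samples, so $m_k=\int_0^\infty(1-F(x)^k)\,dx$ for nonnegative $\cD$ with CDF $F$. The elementary pointwise inequalities $n(1-p^k)\ge k(1-p^n)$ for $k\le n$ and $n(1-p^k)\le k(1-p^n)$ for $k\ge n$ (each verified by checking $p=0,1$ and signing one derivative) give $m_k\ge\tfrac kn m_n$ for $k\le n$ and $m_k\le\tfrac kn m_n$ for $k\ge n$. Summing against the law of $N\sim\mathrm{Poisson}(\lambda)$, and using $\Ex{N}=\lambda$ together with the Poisson identity $\Ex{N\,\mathbbm{1}[N>n]}=\lambda\,\Pr{N\ge n}$, this gives $\Ex{\MAX(\bY)}\ge\tfrac{m_n}{n}\,\Ex{N\,\mathbbm{1}[N\le n]}=\tfrac{m_n}{n}\,\lambda\,(1-\Pr{N\ge n})$ and $\Ex{\MAX(\bY)\,\mathbbm{1}[N>n]}\le\tfrac{m_n}{n}\,\lambda\,\Pr{N\ge n}$. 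A Poisson upper-tail Chernoff bound gives $\Pr{N\ge n}=\Pr{\mathrm{Poisson}(n(1-\eps))\ge n}\le e^{-n\eps^2/2}=n^{-1/2}\le\eps$ for the chosen $\eps=\sqrt{\log n/n}$; plugging in $\lambda=n(1-\eps)$ and $m_n=\Ex{\MAX(\bX)}$ yields $\Ex{\MAX(\bY)}\ge(1-2\eps)\,\Ex{\MAX(\bX)}$ and $\Ex{\MAX(\bY)\,\mathbbm{1}[N>n]}\le\eps\,\Ex{\MAX(\bX)}$.

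Combining these with $\alpha\le1$ gives $\Ex{\ALG_n(\bX)}\ge\alpha(1-2\eps)\Ex{\MAX(\bX)}-\eps\,\Ex{\MAX(\bX)}\ge(\alpha-3\eps)\,\Ex{\MAX(\bX)}\ge(\alpha-4\eps)\,\Ex{\MAX(\bX)}$; since $\cD$ was arbitrary and $\ALG_n$ merely passes along to $\ALG_\lambda$ whatever advice distribution it received, the argument is indifferent to the consistency-versus-robustness distinction, which proves the lemma. The one genuinely delicate point is controlling the overflow contribution $\Ex{\MAX(\bY)\,\mathbbm{1}[N>n]}$ when $\cD$ is heavy-tailed: a crude estimate of the form $\Pr{N>n}\cdot(\text{const})$ is unavailable because values are unbounded, which is exactly why the $m_k$-versus-$\tfrac kn m_n$ comparisons (rather than a union bound) are needed; checking that the online simulation is valid and the Poisson tail estimate are routine, and one verifies separately that the statement is vacuous for small $n$.
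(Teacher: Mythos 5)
Your proof is correct, and it takes a genuinely different route from the paper's. The paper argues via a per-threshold approximate stochastic dominance decomposition $\Ex{\ALG_n}\ge\int_0^\infty\bigl(\Pr{\ALG_\lambda\ge x}-\Pr{\OPT_\lambda^{>n}\ge x}\bigr)\,dx$, bounds $\Pr{\OPT_\lambda^{>n}\ge x}$ by a union bound over blocks of size $n$ together with Poisson tail bounds, and separately bounds $\Ex{\OPT_\lambda}$ from below by $\Ex{\OPT_n}$ via an explicit coupling that tracks whether a specified high arrival of the $n$-sample instance lands among the Poisson arrivals. You instead work entirely at the level of the sequence $m_k=\Ex{\max\text{ of }k\text{ i.i.d.\ samples}}$ and use the two elementary pointwise inequalities $1-p^k\ge\tfrac kn(1-p^n)$ for $k\le n$ and $1-p^k\le\tfrac kn(1-p^n)$ for $k\ge n$ (each a one-line monotonicity check), which simultaneously give the undershoot lower bound $\Ex{\MAX(\bY)}\ge\tfrac{m_n}{n}\lambda(1-\Pr{N\ge n})$ and the overshoot upper bound $\Ex{\MAX(\bY)\mathbbm{1}[N>n]}\le\tfrac{m_n}{n}\lambda\Pr{N\ge n}$; the Poisson identity $\Ex{N\mathbbm{1}[N>n]}=\lambda\Pr{N\ge n}$ then closes things. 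This subsumes the paper's $\lceil k/n\rceil$ union bound (yours is the sharper $\tfrac kn$) and its coupling step in one algebraic stroke, at the modest cost of working with expectations rather than tail probabilities, which is fine since the lemma is stated in expectation. Your Poisson tail estimate $\Pr{N\ge n}\le e^{-n\eps^2/2}=n^{-1/2}\le\eps$ matches the paper's Lemma B.1, and your final accounting $\alpha(1-2\eps)-\eps\ge\alpha-3\eps\ge\alpha-4\eps$ is correct.
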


We can also make claims in the other direction: a competitive stopping rule in the $n$-arrival setting implies a competitive rule for the Poisson setting.
Here we again make no assumptions on the form of the $n$-arrival stopping rule, but the derived Poisson rule is for a much smaller rate.

\begin{restatable}[Poisson to $n$]{lemma}{poissonton}
\label{lem:good-fixed-n-implies-good-poisson}
    Suppose $\alg_n$ is stopping rule for $n$ fixed arrivals that is $\alpha$-competitive.
    Then for any $\eps \in (0,1]$, we may use it to construct a Poisson stopping rule $\alg_\lambda$ that is $\alpha\cdot(1-\eps/2)$-competitive on Poisson arrivals at rate $\lambda = \eps n$.
\end{restatable}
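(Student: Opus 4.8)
The plan is to have $\alg_\lambda$ simulate $\alg_n$ by partitioning $[0,1]$ into $n$ equal \emph{slots} $I_i = \bigl(\tfrac{i-1}{n},\tfrac{i}{n}\bigr]$ and feeding $\alg_n$ exactly one value per slot, in lockstep with the Poisson stream. When the \emph{first} arrival of slot $I_i$ occurs, at time $t\in I_i$ with value $x$, $\alg_\lambda$ hands $\alg_n$ the value $Y_i := x$; if instead slot $I_i$ ends (at time $i/n$) with no arrival, $\alg_\lambda$ hands $\alg_n$ the dummy value $Y_i := 0$; any further arrivals inside a slot are simply rejected. If, upon receiving some $Y_i$ that originated from a real arrival, $\alg_n$ decides to stop, then $\alg_\lambda$ accepts that arrival — note this decision is committed at the very time $t$ the arrival occurs, so it is a legitimate online choice; if $\alg_n$ stops on a dummy, $\alg_\lambda$ just halts with reward $0$. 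Since by the time $\alg_\lambda$ feeds $Y_i$ it has already fed $Y_1,\dots,Y_{i-1}$ (each at its own slot's first‑arrival time, or at the slot boundary if empty), the simulation is consistent; and because in every case the reward $\alg_\lambda$ collects equals the value $\alg_n$ obtains on the sequence $\bY=(Y_1,\dots,Y_n)$, we have $\Ex{\alg_\lambda}=\Ex{\alg_n(\bY)}$.

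Next I would identify the law of $\bY$. As the Poisson process has rate $\lambda = \eps n$, the number of arrivals in $I_i$ is $\mathrm{Poisson}(\eps)$, independently across $i$, and conditioned on $I_i$ being nonempty the value of its first arrival is a fresh draw from $\cD$ independent of everything else. Thus $Y_1,\dots,Y_n$ are i.i.d.\ from the mixture $\cDp := e^{-\eps}\,\delta_0 + (1-e^{-\eps})\,\cD$, and applying the hypothesis that $\alg_n$ is $\alpha$-competitive to input distribution $\cDp$ gives $\Ex{\alg_n(\bY)} \ge \alpha\,\Ex{\MAX(\bY)}$. (The atom at $0$ is harmless: by the remark in \Cref{sec:prelims} we may assume value distributions are non‑atomic without loss of generality, and replacing $\delta_0$ by $\mathrm{Unif}[0,\eta)$ and letting $\eta\downarrow 0$ recovers the displayed inequality via dominated convergence.)

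It remains to show $\Ex{\MAX(\bY)} \ge (1-\tfrac{\eps}{2})\,\Ex{\MAX(\bX)}$. Every $Y_i$ is either $0$ or the value of an actual arrival, so $\MAX(\bY)\le\MAX(\bX)$; moreover equality holds unless the (a.s.\ unique) arrival $A$ attaining $\MAX(\bX)$ fails to be the first arrival in its slot, in which case we still have $0 \le \MAX(\bX)-\MAX(\bY)\le\MAX(\bX)$. Therefore $\Ex{\MAX(\bX)} - \Ex{\MAX(\bY)} \le \Ex{\MAX(\bX)\cdot\mathbbm{1}[A\text{ is not first in its slot}]}$, and the crux is to bound this by $\tfrac{\eps}{2}\Ex{\MAX(\bX)}$. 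I would do so by conditioning on $\MAX(\bX)=v$: by standard properties of the marked Poisson process, given this event $A$ sits at a uniformly random time $\sigma\in[0,1]$, while the remaining arrivals form an \emph{independent} Poisson process on $[0,1]$ of rate $\lambda\,\CDF(v)\le\lambda$. The event ``$A$ not first in its slot'' requires one of these other arrivals to land in the window from the left endpoint of $\sigma$'s slot up to $\sigma$, whose length is uniform on $[0,\tfrac1n]$; hence its conditional probability is at most $\lambda\cdot\tfrac{1}{2n} = \tfrac{\eps}{2}$ — and, crucially, this bound does not depend on $v$. Integrating over the law of $\MAX(\bX)$ yields $\Ex{\MAX(\bX)\cdot\mathbbm{1}[A\text{ not first}]}\le\tfrac{\eps}{2}\Ex{\MAX(\bX)}$, hence $\Ex{\MAX(\bY)}\ge(1-\tfrac{\eps}{2})\Ex{\MAX(\bX)}$. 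Chaining this with the two relations above gives $\Ex{\alg_\lambda}\ge\alpha(1-\tfrac{\eps}{2})\Ex{\MAX(\bX)}$, as required.

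The main obstacle is precisely this last bound. A naive union bound over the $n$ slots only shows that \emph{some} two arrivals collide in a slot with probability $O(n\eps^2)=O(\eps\lambda)$, which is worthless once $\lambda\gg1$. What rescues the argument is that a collision costs us nothing unless it involves the maximizing arrival, and — the key insight — conditioning on the \emph{value} of the maximum leaves its time uniform and the competing arrivals Poisson at rate at most $\lambda$, so the harmful event has conditional probability at most $\eps/2$ no matter how large that maximum is. The remaining points (that $\alg_\lambda$'s acceptances are genuinely online, and that feeding dummies at slot boundaries can only make $\alg_\lambda$ stop with reward $0$) are routine bookkeeping given the lockstep description.
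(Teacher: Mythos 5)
Your proposal is correct and matches the paper's high-level architecture — slot the interval $[0,1]$ into $n$ pieces, feed $\alg_n$ the first arrival (or a dummy $0$) from each slot, and observe that the resulting $\bY$ is i.i.d.\ from the mixture $e^{-\eps}\delta_0 + (1-e^{-\eps})\cD$ — but you prove the key step $\Ex{\MAX(\bY)} \ge (1-\eps/2)\Ex{\MAX(\bX)}$ by a genuinely different argument. The paper works entirely at the level of CDFs: it writes $1 - \tilde F(x) = (1-F(x))(1-e^{-\eps})$, raises to the $n$th power to bound $\tilde F^\uparrow(x)$, observes $1-e^{-\eps}\ge\eps(1-\eps/2)$, and pulls the factor $1-\eps/2$ out of $\int_0^\infty\bigl(1-\tilde F^\uparrow(x)\bigr)dx$ via the concavity of $a\mapsto 1-e^{-a}$ at $0$. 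You instead condition on the value of the maximizing arrival, invoke the marked-Poisson decomposition to say that its location is uniform while the competing sub-$v$ arrivals form an independent rate-$\le\lambda$ process, and bound the probability of a collision-in-slot by $\lambda/(2n)=\eps/2$ uniformly in $v$. Both routes are valid and of comparable length; the paper's CDF computation is slightly more elementary in that it avoids the (routine but fiddly) justification of conditioning on the continuous event $\MAX(\bX)=v$, whereas your probabilistic argument makes the source of the $\eps/2$ factor more transparent (it is literally the expected offset of a uniform point within its slot). One small remark: to be fully rigorous you should phrase the conditioning as being on $\MAX(\bX)\in[v,v+dv]$ (or directly on the value-$\ge v$ and value-$<v$ sub-processes), but this is standard and not a gap.
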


Note this reduction holds for both the known-distribution and unknown-distribution settings.
If $\alg_n$ is a known-distribution stopping rule, then given Poisson arrivals from $\cD$, then $\alg_\lambda$ proceeds by simulating $\alg_n$ on arrivals from $\tilde \cD$. 

For impossibilities, we will use the contrapositive of \cref{lem:good-fixed-n-implies-good-poisson}.
\begin{corollary} \label{lem:poisson-imposs-implies-fixed-n-imposs}
    If no $\alpha$-competitive stopping rule exists for the i.i.d.\ prophet inequality problem with Poisson arrivals at rate $\lambda$ from a known (unknown) distribution, then no $\frac{\alpha}{1-\eps/2}$-competitive stopping rule exists for the prophet inequality problem with $n = \lambda/\eps$ i.i.d.\ arrivals from a known (unknown) distribution.
\end{corollary}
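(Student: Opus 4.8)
The plan is to argue by contraposition, reading off the claim directly from \cref{lem:good-fixed-n-implies-good-poisson}. Suppose, toward a contradiction, that for $n = \lambda/\eps$ there does exist an $\frac{\alpha}{1-\eps/2}$-competitive stopping rule $\alg_n$ for the $n$-arrival i.i.d.\ prophet inequality. Apply \cref{lem:good-fixed-n-implies-good-poisson} with this value of $\eps$; the hypothesis of that lemma requires $\lambda = \eps n$, which holds by our choice $n = \lambda/\eps$. The lemma then yields a Poisson stopping rule $\alg_\lambda$ at rate $\lambda$ whose competitive ratio is at least $\frac{\alpha}{1-\eps/2}\cdot(1-\eps/2) = \alpha$. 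This contradicts the standing hypothesis that no $\alpha$-competitive Poisson rule at rate $\lambda$ exists, which proves the corollary.

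The one point that warrants care is the parenthetical ``(known/unknown)'': one must verify that the construction in \cref{lem:good-fixed-n-implies-good-poisson} preserves whether the derived rule consults the distribution. This is exactly the content of the remark following that lemma — if $\alg_n$ is a known-distribution rule, then $\alg_\lambda$ is obtained by simulating $\alg_n$ on arrivals drawn from the appropriately rescaled distribution $\tilde\cD$, and so is itself a known-distribution rule; if $\alg_n$ ignores the distribution, then so does $\alg_\lambda$. Hence the implication holds verbatim in both regimes, and it suffices to invoke the correspondingly specialized version of \cref{lem:good-fixed-n-implies-good-poisson} in each case.

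There is no substantive obstacle here: the statement is a logical repackaging of an already-established lemma, so the proof is a one-line contrapositive plus the bookkeeping check above. If anything, the only thing to watch is the direction of the scaling — the Poisson rate $\lambda$ is a factor $\eps$ \emph{smaller} than the fixed arrival count $n$, so that the $n$-arrival impossibility one derives is at the parameter $n = \lambda/\eps$ and the competitive-ratio loss $(1-\eps/2)$ is absorbed by passing from $\alpha$ to $\frac{\alpha}{1-\eps/2}$.
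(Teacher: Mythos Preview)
Your proof is correct and matches the paper's own treatment: the corollary is stated as the contrapositive of \cref{lem:good-fixed-n-implies-good-poisson}, and the known/unknown caveat is handled exactly as in the remark following that lemma. There is nothing to add.
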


\section{Beating the Linear Frontier} \label{sec:algos}

How can we do better than randomly interpolating between the secretary-optimal stopping rule and the known-distribution i.i.d.\ stopping rule, and improve upon the blue line in \Cref{fig:frontiers-unpop}? 
Indeed, why should this be possible?
Working in the Poisson arrivals model, we now introduce an algorithm which beats the line. 

\subsection{A Three-Phase Algorithm}

Our algorithm handles the Poisson arrivals for $t\in [0,1]$ by partitioning the arrival interval into three phases, and treating arrivals differently depending on the phase in which they arrive. 
This is preceded by a `Phase 0' of simulated arrivals from the advice distribution $\cDp$, which condition the algorithm with information about the advice. 
Our algorithm, \threephase{}, does the following:
\begin{enumerate}
    \item[P0.] $t \in [-z,\: 0)$: Simulate arrivals from $\cDp$ at rate $n$.
    \item[P1.] $t \in [0,\:\tau_1)$: Do not accept.
    \item[P2.] $t \in [\tau_1,\:\tau_2)$: Accept if greater than the $\ell$-th largest arrival in P0 and all arrivals in P1. 
    \item[P3.] $t \in [\tau_2,\:1]$: Accept if greater than all arrivals in P1 and P2.
\end{enumerate}
The parameters $z$, $\ell(z)$, $\tau_1$, and $\tau_2$ will be specified in \Cref{sec:alg-approach}. 
We will set $\ell(z)$, $\tau_1$, and $\tau_2$ in pursuit of provable guarantees, and then take the limit as $z$ becomes large.

\subsubsection{Notation}
Throughout this section, we will make use of specific notation. We split the arrivals from the Poisson process with rate $n$ and distribution $\cD$ into $\bX_1$, $\bX_2$, and $\bX_3$ depending on the phase during which they arrive. 
In this section, we will consider a (simulated) Poisson process with rate $n$ and distribution $\cDp$ generating arrivals $\bX_0$ over the time period $t \in [-z, 0]$. 

For each group of arrivals $\bX_i$, we let $x_i^{(j)}$ denote the \emph{top} order statistics in $\bX_i$, so that $x_i^{(j)}$ is the $j^{th}$ \emph{largest} value in $\bX_i$, or $0$ if $|\bX_j| < j$. 
Then $x_i^{(1)}$ is the largest arrival in $\bX_i$. 
Furthermore, for $\bX = \bX_0 \cup \bX_1 \cup \bX_2 \cup \bX_3$ the arrivals across all phases, let $x^{(j)}$ denote the $j^{th}$ largest arrival in $\bX$.
For arrivals with $t \geq 0$, let $\bX_+ \defeq \bX_1 \cup \bX_2 \cup \bX_3$, and let $x^{(j)}_+$ be the $j^{th}$ largest arrival in $\bX_+$.

\subsubsection{Intuition} \label{sec:three-phase-intuition}

To understand why these phases improve upon the Pareto frontier, it is helpful to informally take a Goldilocks perspective and consider three cases: when the advice $\cDp$ is too low relative to the true distribution $\cD$, when it is too high, and when it is accurate---that is, when it is just right.
\begin{itemize}
    \item When $\cDp$ is `too low', the $\ell$-th order statistic $x_0^{(\ell)}$ is too small to serve as a meaningful threshold for selection in Phase 2. In this case Phase 2 and Phase 3 are the same, and we are running a secretary strategy with threshold $\tau_1$; that is, Phase 1 is the observation phase, and Phases 2 and 3 are the selection phase.
    \item If $\cDp$ is `too high', then $x_0^{(\ell)}$ is too large, and the algorithm makes no selections during Phase 2. In this case, we are again running a secretary strategy, where Phases 1 and 2 are the observation phase and Phase 3 is the selection phase.
    \item If $\cDp$ is `just right', then $\cDp = \cD$ or at least the relevant top quantiles match up. In this case, $x_0^{(\ell)}$ represents a good static threshold. If the maximum $x_1^{(1)}$ of Phase 1 is smaller than it, then we implement our good threshold for Phase 2 and relax to the lower `best so far' threshold for Phase 3. On the other hand if $x_1^{(1)} > x_0^{(\ell)}$ we use it as our threshold for Phase 2 and Phase 3.
\end{itemize}
The first two cases provide an intuitive grounding for our secretary guarantee in \Cref{sec:three-phase-secretary}, while the third explains why we see improved performance in the prophet setting in \Cref{sec:three-phase-prophet}.

\subsubsection{Special Cases} \label{sec:special-cases}
It bears mentioning that we recover the secretary-optimal stopping rule and a sample-based variant of the $(1-\frac{1}{e})$-competitive fixed-threshold algorithm of \cite{ehsani18prophet}, depending on our choices of $\tau_1$ and $\tau_2$. 
If $\tau_1 = \tau_2 = \frac{1}{e}$ then we fully disregard the samples from $\cDp$ and recover the secretary algorithm; on the other hand if $\tau_1 = 0$ and $\tau_2 = 1$ then we recover the form and the guarantee of the static quantile-threshold prophet algorithm.

But notably, while this choice of $\tau_1 =0$, $\tau_2 = 1$, and $\ell(z) \approx z$ recovers the competitive ratio of $1-\frac{1}{e}$, it is \emph{not} the prophet-optimal algorithm in this family. Setting $\tau_1=0$, $\tau_2<1$, and $\ell(z)\approx z$ gives a strictly better guarantee, as do a range of choices of $\tau_1>0$ and $\tau_2<1$. 
While this is perhaps surprising, it is consistent with the work of \citet{li2023prophet}, who show one can strictly outperform $1-\frac{1}{e}$ in the i.i.d.\ prophet setting using only a single quantile query to $\cD$ via a stopping rule that deploys a fixed threshold followed by a `best so far' phase. This corresponds to an algorithm in our family with parameters $\tau_1 = 0$ and $\tau_2 < 1$.

In order to simultaneously provide a secretary guarantee, we will relate our choices of $\tau_1$ and $\tau_2$ and choose $\tau_1 > 0$.
Nevertheless \threephase{} still outperforms $1-\frac{1}{e}$ in a manner similar to the work of \citeauthor{li2023prophet} for a large range of values of $\tau_1$ and $\tau_2$, as illustrated in \cref{fig:frontiers-populated}.

\subsection{Approach and Parameters} \label{sec:alg-approach}
We will relate our choice of phase boundaries $\tau_1$ and $\tau_2$ according to
\begin{equation} \label{eq:ass-tau1-tau2}
    \tau_1 \log \left(\frac{1}{\tau_1}\right) = \tau_2 \log \left(\frac{1}{\tau_2}\right) = \gamma
\end{equation}
for $\gamma \in [0,\frac{1}{e}]$. 
So for a chosen $\gamma$, $\tau_1$ and $\tau_2$ are given by 
\[
    \tau_1(\gamma) \defeq e^{W_{-1}(-\gamma)}, \qquad \qquad \tau_2(\gamma) \defeq e^{W_{0}(-\gamma)},
\]
where $W_0$ and $W_{-1}$ are branches of Lambert's W function, which give inverses to the function $y(x) = x e^x$. We repurpose this to give inverses to $y(x) = x \log x$ for $x \in [0,1]$, as in \cref{fig:taus}.

\begin{figure}[t]
    \centering
    \begin{tikzpicture}[scale=8]
        \draw[->] (0,0) -- (1.2,0) node[right] {$t$};
        \draw[->] (0,0) -- (0,0.5) node[above] {$\gamma$};
        
        \draw (0,0.02) -- (0,-0.02) node[below] {$0$};
        \draw (1,0.02) -- (1,-0.02) node[below] {$1$};
        
        \draw (0.02,0) -- (-0.02,0) node[left] {$0$};
        \draw (0.02,0.25) -- (-0.02,0.25) node[left] {$\gamma(\tau_1) = \gamma(\tau_2)$};
        
        \draw[smooth, thick, blue, domain=0.001:1, samples=200] 
            plot (\x,{\x*ln(1/\x)}) 
            node[above, yshift=1cm] {$\gamma=\tau\log(1/\tau)$};
        
        \def\e{2.718281828459045}
        \coordinate (P) at ({1/\e},{(1/\e)*ln(\e)});
        \filldraw (P) circle (0.2pt);
        \draw ({1/\e},0.02) -- ({1/\e},-0.02) node[below] {$\frac{1}{e}$};
        \draw (0.02,{1/\e}) -- (-0.02,{1/\e}) node[left] {$\frac{1}{e}$};
        
        \coordinate (P1) at (0.116101,0.25);
        \coordinate (P2) at (0.699491,0.25);
        \filldraw (P1) circle (0.2pt);
        \filldraw (P2) circle (0.2pt);
        \draw (0.116101,0.02) -- (0.116101,-0.02) node[below] {$\tau_1$};
        \draw (0.699491,0.02) -- (0.699491,-0.02) node[below] {$\tau_2$};
        
        \draw[dashed, gray] (0.116101,0) -- (0.116101,0.25) -- (0.699491,0.25) -- (0.699491,0);
    \end{tikzpicture}

    \caption{The relationship between choices of $\tau_1$ and $\tau_2$ and the secretary competitive ratio guarantee $\gamma$ given by \cref{lem:three-phase-prob-ratio-bound}. Observe that if $\tau_1 = \tau_2 = 1/e$ then we recover the secretary optimal stopping rule and the secretary guarantee.}
    \label{fig:taus}
\end{figure}
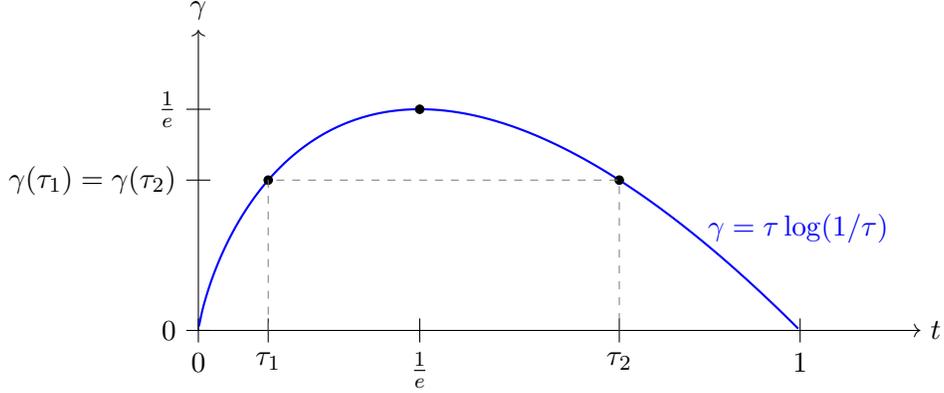

For this choice we will prove our joint guarantees. 
For the secretary setting, we will show

\begin{restatable}{theorem}{algSecretaryGuarantee}
\label{thm:alg-secretary-guarantee}
    If $\cD \neq \cD'$ then \threephase{} with $\tau_1(\gamma)$ and $\tau_2(\gamma)$ set according to \cref{eq:ass-tau1-tau2} satisfies
    \[
        \expect{\ALG(\bX)} \geq \gamma \cdot \expect{\MAX(\bX_+)}.
    \]
\end{restatable}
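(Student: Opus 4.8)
The plan is to control \threephase{}'s reward purely through the probability that it stops on the largest arrival in $\bX_+$. Since the algorithm accepts at most one arrival, and only from $t \ge 0$, we have the pointwise inequality $\ALG(\bX) \ge \MAX(\bX_+)\cdot\mathbbm{1}[\,\ALG \text{ stops on } \argmax \bX_+\,]$. Conditioning on the realized multiset of values of $\bX_+$ — which fixes $\MAX(\bX_+)$ — it therefore suffices to show that this conditional stopping probability is at least $\gamma$; taking expectations and applying the tower rule then yields $\expect{\ALG(\bX)} \ge \gamma\,\expect{\MAX(\bX_+)}$.

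To estimate the conditional stopping probability I would additionally condition on the value of the Phase-2 threshold $\theta := x_0^{(\ell)}$. Because the Phase-0 arrivals form a Poisson process driven by $\cDp$ that is independent of everything at times $t\ge 0$, $\theta$ is independent of $\bX_+$, so this conditioning changes neither the law of $\bX_+$ nor $\MAX(\bX_+)$. I then split on whether $x_+^{(1)} > \theta$. In the case $x_+^{(1)} > \theta$, I claim \threephase{} stops on $x_+^{(1)}$ on the event that (a) $x_+^{(1)}$ arrives at a time $t^\star \ge \tau_1$, and (b) the largest-valued arrival of $\bX_+$ strictly before $t^\star$, if any, arrives before $\tau_1$. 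Under (b) every $\bX_+$-arrival in $[\tau_1,t^\star)$ is dominated by that early maximum, hence is never a best-so-far arrival within $\bX_+$, so \threephase{} makes no acceptance before $t^\star$; under (a), $x_+^{(1)}$ is a record of $\bX_+$ exceeding $\theta$ and so passes whichever of the Phase-2/Phase-3 tests applies. A standard exchangeability computation (condition on the uniform time $t^\star$ and on the number of arrivals in $\bX_+$) shows this event has probability at least $\int_{\tau_1}^{1}\frac{\tau_1}{t}\,dt = \tau_1\log(1/\tau_1)$. In the complementary case $x_+^{(1)}\le\theta$, no arrival of $\bX_+$ clears the Phase-2 threshold, so Phase 2 accepts nothing and \threephase{} reduces to the classical secretary rule with observation window $[0,\tau_2)$, which stops on $x_+^{(1)}$ with probability at least $\tau_2\log(1/\tau_2)$. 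By \cref{eq:ass-tau1-tau2} both bounds equal $\gamma$, so the conditional stopping probability is $\ge\gamma$ in either case.

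I expect the technical heart to be the first sub-case: verifying carefully that the sufficient event (a)–(b) truly precludes any premature acceptance once one unwinds the Phase-2 acceptance rule — which tests an arrival against $\theta$ \emph{and} the Phase-1 maximum but, as literally stated, not against earlier Phase-2 arrivals — and the Phase-3 rule, in particular checking that the first arrival accepted in Phase 2 or Phase 3 necessarily exceeds all earlier arrivals of $\bX_+$ (so that "best-so-far within $\bX_+$" is indeed the right proxy for being a candidate for acceptance). Everything else — the pointwise reduction and the two secretary-style integrals — is routine. I note that the hypothesis $\cD\neq\cD'$ is not actually used in this bound; it merely signals that when $\cD=\cD'$ the stronger guarantee of \Cref{thm:alg-prophet-guarantee} supersedes it.
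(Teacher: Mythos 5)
Your proposal is correct and follows essentially the same route as the paper: the paper establishes the same two-case conditional stopping-probability bound (its Lemma~\ref{lem:three-phase-prob-ratio-bound}, proved by conditioning on the realized threshold $L=x_0^{(\ell)}$ and then integrating $\frac{\tau_1}{t}$ or $\frac{\tau_2}{t}$ over the arrival time of $x_+^{(1)}$) and then converts it to a competitive ratio via approximate stochastic dominance rather than your equivalent multiset-conditioning and tower-rule step. The subtlety you flag about the Phase-2 rule not comparing against earlier Phase-2 arrivals is exactly what the paper's Observation~\ref{obs:best-so-far} addresses, and it resolves the way you expect.
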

\noindent

For the prophet setting we will show
\begin{restatable}{theorem}{threephaseProphetGuarantee}
\label{thm:alg-prophet-guarantee}
    If $\cD = \cD'$ then \threephase{} with $\tau_1(\gamma)$ and $\tau_2(\gamma)$ set according to \cref{eq:ass-tau1-tau2} with $\gamma \in [0, \frac{1}{e})$ satisfies
    \[
        \expect{\ALG(\bX_0, \bX_+)} \geq \left( \gamma + e^{-\tau_1(\gamma)} - e^{-\tau_2(\gamma)} - O\left(z^{-1}\right) \right) \cdot \expect{\MAX(\bX_+)}
    \]
    for all sufficiently large $z$.
\end{restatable}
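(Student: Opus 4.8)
Since the theorem assumes $\cD=\cDp$, the plan is to merge the simulated Phase~0 arrivals $\bX_0$ and the real arrivals $\bX_+$ into a single rate-$n$ Poisson process on $[-z,1]$ with i.i.d.\ values from $\cD$: sorting $\bX_0\cup\bX_+$ by value, the order statistics then carry arrival times that are i.i.d.\ $\mathrm{Unif}[-z,1]$ and independent of the values. In this coupling the Phase-2 threshold $v=x_0^{(\ell)}$ is simply the $\ell$-th largest value among arrivals with negative time; choosing $\ell(z)$ with $\ell(z)/z\to 1$ makes the survival level $\mu(v):=n\,\Pr{X>v}$ concentrate at $1$, and since only smooth, moment-generating-function-type functionals of $\mu(v)$ will enter the final bound, a second-order Taylor expansion around its mean turns this concentration into the stated $O(z^{-1})$ additive error (the variance of $\mu(v)$ is $\Theta(1/z)$, as is the bias of its mean). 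Write $M:=\MAX(\bX_+)$, $\mu(s):=n\,\Pr{X>s}$, and let $\theta$ be the limiting threshold with $\mu(\theta)=1$. Throughout I would use the layer-cake identities $\expect{M}=\int_0^\infty(1-e^{-\mu(s)})\,ds\le\theta+\int_\theta^\infty\mu(s)\,ds$ and $\expect{X\mid X>\theta}=\theta+\int_\theta^\infty\mu(s)\,ds$, so that $\expect{X\mid X>v}\to\expect{X\mid X>\theta}\ge\expect{M}$.

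The next step is to put \threephase{} in a transparent form: as long as it has not yet accepted, its effective threshold is $+\infty$ on $[0,\tau_1)$, the \emph{constant} $\max(v,x_1^{(1)})$ on $[\tau_1,\tau_2)$ (every Phase-2 arrival seen so far must lie below it, or \threephase{} would already have stopped), and the running maximum of $\bX_+$ on $[\tau_2,1]$ (at most $\max(v,x_1^{(1)})$ when $\tau_2$ is reached), and it accepts the first arrival exceeding the current threshold. In particular every accepted value is a running maximum of $\bX_+$, so the accepted value exceeds $s$ precisely when \threephase{} accepts at or after the time of the first real arrival above $s$.

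The heart of the argument is to condition on $K$, the number of real arrivals exceeding $v$ (so $K\sim\mathrm{Poisson}(\mu(v))$ and $M>v\iff K\ge1$), and, when $K\ge1$, on the time $T_W$ of the first-in-time such arrival, whose value $W\sim\cD\mid X>v$ is, given $K$, independent of $T_W$. When $K=0$ the threshold never binds, so \threephase{} is exactly the classical secretary rule with observation window $[0,\tau_2)$: the usual running-maximum computation gives $\Pr{\text{accept }M\mid K=0}=\int_{\tau_2}^1(\tau_2/t)\,dt=\tau_2\log(1/\tau_2)=\gamma$, contributing at least $\gamma\,\expect{M;K=0}$. For $K=1$ (so $M$ is the unique candidate) a clean split on $T_W$ works: if $T_W\in[0,\tau_1)$ the candidate is rejected and nothing can later beat it, so the reward is $0$ (probability $\tau_1$); if $T_W\in[\tau_1,\tau_2)$ \threephase{} accepts $M$ (probability $\tau_2-\tau_1$); if $T_W\in[\tau_2,1]$ a Phase-3 secretary computation accepts $M$ with conditional probability $\gamma/(1-\tau_2)$; whence, using $M\perp(T_W,\text{acceptance})$ given $K=1$, $\expect{\ALG\mid K=1}\ge(\gamma+\tau_2-\tau_1)\,\expect{M\mid K=1}\ge(\gamma+e^{-\tau_1}-e^{-\tau_2})\,\expect{M\mid K=1}$, where the last step is the elementary inequality $e^{-\tau_1}-e^{-\tau_2}=\int_{\tau_1}^{\tau_2}e^{-x}\,dx\le\tau_2-\tau_1$. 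For general $K=k\ge1$ the analogous split isolates the same Phase-2 ``prophet boost'': on $\{T_W\in[\tau_1,\tau_2)\}$, which has probability $(1-\tau_1)^k-(1-\tau_2)^k$ and summed over $k$ gives $e^{-\mu(v)\tau_1}-e^{-\mu(v)\tau_2}\to e^{-\tau_1}-e^{-\tau_2}$, \threephase{} accepts the first candidate for expected value $\expect{X\mid X>v}\ge\expect{M}$; the residual events ($T_W<\tau_1$ or $T_W\ge\tau_2$) together with $\{K=0\}$ are handled by secretary-style running-maximum arguments. Combining the boost with the secretary contributions then yields $\expect{\ALG(\bX_0,\bX_+)}\ge(\gamma+e^{-\tau_1}-e^{-\tau_2}-O(z^{-1}))\,\expect{M}$.

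The main obstacle is this last combination for general $K$. Unlike the $K=1$ case, for $k\ge2$ the boosted Phase-2 acceptance returns a value distributed as $\cD\mid X>v$ rather than as $M\mid K=k$ (these differ substantially for heavy-tailed $\cD$), and a first candidate rejected in $[0,\tau_1)$ can still forfeit the entire reward, so one cannot prove $\expect{\ALG\mid K=k}\ge(\gamma+e^{-\tau_1}-e^{-\tau_2})\,\expect{M\mid K=k}$ case by case; instead the slack $\expect{X\mid X>v}-\expect{M}$ in the layer-cake bound, which the boost term carries uniformly in $k$, must be pooled to pay for the $\gamma$-mass that is lost on $\{K=0\}$ (where only $\gamma$, not $\gamma+e^{-\tau_1}-e^{-\tau_2}$, is achieved) and on the ``rejected first candidate'' events. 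A more mechanical alternative is to prove the pointwise bound $\Pr{R>s}\ge(\gamma+e^{-\tau_1}-e^{-\tau_2}-O(z^{-1}))(1-e^{-\mu(s)})$ for every $s$ by running the same case analysis on the real arrivals exceeding $\max(s,v)$ and then integrating; the per-$s$ Poisson record computations are routine but longer.
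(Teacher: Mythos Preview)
Your approach is genuinely different from the paper's, and while it correctly isolates the key ``Phase-2 boost'' term $e^{-\tau_1}-e^{-\tau_2}$, the part you yourself flag as ``the main obstacle'' is in fact a real gap.

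\textbf{What the paper does.} The paper never conditions on the threshold value $v$ or on the number $K$ of candidates. Instead it works entirely in the \emph{ordinal} framework: it proves (\Cref{lem:prophet-prob-ratio}) that $\Pr{\ALG\ge x^{(\ell')}}\ge\alpha(\tau_1,\tau_2,z)\,\Pr{\MAX(\bX_+)\ge x^{(\ell')}}$ for every order-statistic index $\ell'$ of the merged process $\bX=\bX_0\cup\bX_+$. The constant $\alpha$ comes out of a single time-integration (\Cref{lem:condnl-prob-low-j,lem:condnl-prob-high-j}) in which the $\gamma=\tau_1\log(1/\tau_1)$ contribution and the $e^{-\tau_1}-e^{-\tau_2}$ contribution appear together inside the same integral, so there is no combination step to perform. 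Because \threephase{} decides only by ordinal comparisons, the event $\{\ALG\ge x^{(\ell')}\}$ is independent of the realized values, which lets the paper pass from ordinal to scalar stochastic dominance in one line.

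\textbf{Where your main line breaks.} Your decomposition by $(K,T_W)$ gives the clean boost term $(e^{-\mu(v)\tau_1}-e^{-\mu(v)\tau_2})\,\expect{X\mid X>v}$, and indeed this exceeds $(e^{-\tau_1}-e^{-\tau_2})\,\expect{M}$. But on the complementary events you still need to manufacture $\gamma\,\expect{M}$, and this does \emph{not} follow from secretary-style arguments applied conditionally: for instance on $\{K\ge 2,\ T_W<\tau_1\}$ the maximum's arrival time $T_M$ is correlated with $T_W$, the conditional law of $T_M$ is not uniform, and the usual $\tau/t$ running-max computation does not give $\gamma$ here. Your proposed fix (``pool the slack $\expect{X\mid X>v}-\expect{M}$'') is not quantified; it is not clear that this slack, of size $\int_0^v e^{-\mu(s)}\,ds+\int_v^\infty(\mu(s)-1+e^{-\mu(s)})\,ds$, dominates the shortfall on the residual events uniformly over $\cD$. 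The $K=1$ case works cleanly precisely because $W=M$ there; that identity is what fails for $K\ge 2$, and it is the crux.

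\textbf{Your alternative.} The pointwise bound $\Pr{\ALG>s}\ge\alpha\,(1-e^{-\mu(s)})$ is essentially the paper's route rephrased in value space rather than order-statistic index, and it would work; but the ``routine but longer'' computations you allude to are exactly the content of \Cref{lem:prophet-prob-ratio-top,lem:prophet-prob-ratio-bottom}, which split on whether $s$ lies above or below the threshold level and are not entirely routine. In addition, your treatment of the random $v$ via Taylor expansion around $\mu(v)\approx 1$ is plausible but requires an argument; the paper's ordinal framework sidesteps this because the threshold is the order statistic $x_0^{(\ell)}$ and its position is handled combinatorially rather than via concentration.
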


The remainder of the section is dedicated to proving these theorems.

To prove that $\expect{\ALG(\bX_+)} \geq \alpha \cdot \expect{\ALG(\bX_+)}$ for a sequence of arrivals $\bX_+$, it is often helpful to prove a stronger \emph{approximate stochastic dominance} property:
\begin{equation}
    \Pr{\ALG(\bX_+) \geq y} \geq \alpha \cdot \Pr{\MAX(\bX_+) \geq y} \qquad \qquad \text{ for all } y \in \bbR_{\geq 0}. \label{eq:approx-stoch-dom}
\end{equation}
This can be shown by reasoning about $\ALG$ in terms of the quantiles of the distribution of arrivals; integrating both sides over the distribution gives a competitive ratio guarantee. 
This is an established approach for proving competitive ratios in both the i.i.d.\ prophet and prophet secretary settings \cite{singla18combinatorial,correa19prophet}. 

For the prophet setting, in \Cref{sec:three-phase-prophet} we will actually provide an `ordinal' approximate stochastic dominance guarantee of the form  
\begin{equation}
    \Pr{\ALG(\bX_0, \bX_+) \geq x^{(i)}} \geq \alpha \cdot \Pr{\MAX(\bX_+) \geq x^{(i)}} \qquad \forall i \in \bbN,  \label{eq:approx-stoch-dom-ordered}
\end{equation}
where the order statistics $x^{(i)}$ are over $\bX = \bX_0 \cup \bX_+$, and we say that $x^{(i)} = 0$ if $i > |\bX|$.

Note that \eqref{eq:approx-stoch-dom-ordered} need not imply a corresponding competitive ratio guarantee, since the event that $\ALG$ exceeds $x^{(i)}$ could positively correlate with lower realized values of the random variable $x^{(i)}$. 
However because $\ALG$ only makes (pairwise) ordinal comparisons within a single Poisson process sequence $\bX$ in the prophet setting, the likelihood that it accepts any given $x^{(i)}$ is independent of the value that $x^{(i)}$ takes.
In \Cref{sec:alg-competitive-ratios} we show that in the prophet setting, \eqref{eq:approx-stoch-dom-ordered} implies scalar stochastic dominance of the form \eqref{eq:approx-stoch-dom}, and therefore gives a competitive ratio guarantee.

This problem is more acute for our analysis in the secretary setting.
For the standard secretary-optimal stopping rule in the unknown i.i.d.\ setting, a guarantee of the form 
\begin{equation}
    \Pr{\ALG(\bX_0, \bX_+) = \MAX(\bX_+)} \geq \gamma   \label{eq:secretary-prob-guarantee}    
\end{equation}
translates directly to a competitive ratio lower bound, for the same reason: because it decides whether to stop based only on ordinal comparisons between arrivals in $\bX_+$, its probability of stopping on $x_+^{(1)}$ is independent of the value that $x_+^{(1)}$ takes.
However \threephase{} uses a threshold derived from $\bX_0$, and does not decide whether to stop based only on ordinal comparisons within $\bX_+$.
We circumvent this by proving in \Cref{sec:three-phase-secretary} that \eqref{eq:secretary-prob-guarantee} holds even conditioned on the maximum value.

In the following proofs, it will be convenient to assume that \emph{all} order statistics are defined (letting $x^{(i)} = 0$ for all $i > |\bX|$) and that $x^{(i)} > x^{(j)}$ for all such $i < j$ for the purposes of the algorithm. 
Furthermore, it will be useful to assume that each such `zero arrival' is at a uniformly random time. 
This allows us to suppose that each arrival in $[0,1]$ has a largest prior arrival $x^{(k)}$, and that the threshold $x_0^{(\ell)}$ is always well-defined.
Therefore the analysis that follows is technically for a variant of \threephase{} that has some probability of halting early in Phase 2 with reward 0 if $\bX_0 = \emptyset$ and $\bX_1 = \emptyset$, or halting early in Phase 3 with reward 0 if $\bX_1 = \emptyset$ and $\bX_2 = \emptyset$, and otherwise behaves identically. 
As such, the guarantees will be bona fide lower bounds on its performance.

Our analysis will also make use of the fact that \threephase{} has the \emph{best so far} property (or `regret-free' \cite{buchbinder10secretary}), which it shares with both the fixed-threshold and the secretary stopping rules (but notably \emph{not} the optimal i.i.d.\ stopping rule):
\begin{observation} \label{obs:best-so-far}
    If \threephase{} stops on an arrival $x_t$, then it is the largest arrival in $[0,t]$.
\end{observation}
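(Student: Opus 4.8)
The plan is a direct case analysis on the phase in which the accepted arrival $x_t$ occurs, exploiting the fact that the acceptance criterion in each of Phases~2 and~3 is monotone in the arrival value. First I would observe that Phase~0 arrivals are simulated and never accepted, and that Phase~1 arrivals are never accepted, so if \threephase{} stops on an arrival $x_t$ then necessarily $t \in [\tau_1, 1]$; that is, $x_t$ is accepted either in Phase~2 or in Phase~3. I would then handle these two cases in turn, recalling that since $\cD$ is non-atomic and the algorithm breaks ties at random we may assume all arrival values are distinct (consistent with the distinctness / zero-arrival conventions already adopted for the analysis of \threephase{}).

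For the Phase~2 case, suppose $x_t$ is accepted with $t \in [\tau_1, \tau_2)$. By the Phase~2 rule, $x_t$ exceeds the threshold $x_0^{(\ell)}$ and exceeds every Phase~1 arrival; it remains only to show $x_t$ exceeds every Phase~2 arrival $x_s$ with $\tau_1 \le s < t$. If some such $x_s$ had $x_s \ge x_t$, then $x_s > x_0^{(\ell)}$ and $x_s$ also exceeds every Phase~1 arrival, so the Phase~2 rule would have accepted $x_s$ at time $s < t$, contradicting that \threephase{} runs until time $t$. Hence $x_t$ is the largest arrival in $[0,t]$.

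For the Phase~3 case, suppose $x_t$ is accepted with $t \in [\tau_2, 1]$. By the Phase~3 rule, $x_t$ exceeds every Phase~1 and every Phase~2 arrival, so it suffices to compare $x_t$ with the Phase~3 arrivals $x_s$, $\tau_2 \le s < t$. If some such $x_s$ had $x_s \ge x_t$, then $x_s$ likewise exceeds every Phase~1 and Phase~2 arrival, so the Phase~3 rule would have accepted $x_s$ earlier, a contradiction. Thus $x_t$ is again the largest arrival in $[0,t]$, completing the argument.

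I do not anticipate any genuine obstacle here: the statement is an immediate unfolding of the phase definitions, and the ``same-phase'' sub-cases are dispatched by the standard ``the algorithm did not stop earlier'' observation together with monotonicity of the acceptance tests. The only points that warrant a sentence of care are notational: that the claim concerns only arrivals with $t \ge 0$, so the simulated Phase~0 arrivals (and the threshold $x_0^{(\ell)}$ they induce) do not themselves appear in the conclusion, and that the harmless distinctness assumption lets us read all the inequalities as strict.
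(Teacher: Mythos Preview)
Your proposal is correct. The paper does not give a proof of this observation at all—it is stated as self-evident from the phase definitions—and your case analysis is exactly the natural verification one would supply if asked to spell it out.
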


For the case when $\cDp = \cD$, our analysis proceeds by analyzing the probability that $\ALG$ accepts each of $x^{(1)}, x^{(2)}, \ldots, x^{(\ell)}$, conditioned on it arriving at time $t \in [0,1]$. That is, for each $j \in \{1, \ldots, \ell\}$ and $t \in [\tau_1,1]$, we will proceed by getting a handle on the conditional probability 
\[
    \Pr{\ALG(\bX) = x^{(j)}\given x_t = x^{(j)}},
\]
and use this to provide a bound along the lines of \eqref{eq:approx-stoch-dom-ordered}.

For our proofs we will further by restrict the parameters of \threephase{}. 
It is possible performance guarantees could be improved by changing either of these assumptions.
\begin{assumption} \label{ass:a-value}
    The Phase 0 threshold order statistic is given by $\ell \defeq z + 2$; that is, for $\ell - 1 = a(z+1)$ we have that $a = 1$.
\end{assumption}

Throughout the remainder of section, omitted proofs appear in \Cref{app:algos}.

\subsection{\threephase{} in the Secretary Setting (\texorpdfstring{$\cDp \neq \cD$}{D' != D})} \label{sec:three-phase-secretary}

We first consider the case when $\cD \neq \cDp$. 
When the advice distribution is inaccurate, we will recover a guarantee for \threephase{} that corresponds to the secretary policy, using $t = \tau_1$ or $t = \tau_2$ as the cutoff between the observation and selection phases.

\begin{lemma} \label{lem:three-phase-prob-ratio-bound}
    For any $\cD$ and $\cD'$, and for any $y \in \bbR_{\geq 0}$, \threephase{} guarantees that 
    \[
         \Pr{\ALG(\bX_0, \bX_+) = \MAX(\bX_+) \given \MAX(\bX_+) = y} \geq \gamma(\tau_1, \tau_2), 
    \]
    where $\bX_0$ is Poisson arrivals from $\cDp$, $\bX_+$ is  Poisson arrivals from $\cD$, and
    \[
        \gamma(\tau_1, \tau_2) \defeq \min\left(\tau_1 \log \left( \frac{1}{\tau_1} \right), \: \tau_2 \log \left( \frac{1}{\tau_2} \right)\right).
    \]
\end{lemma}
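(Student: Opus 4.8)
The plan is to condition on the realized arrival set and analyze, for a fixed value $y$ of $\MAX(\bX_+)$, the probability that \threephase{} stops precisely on the (a.s.\ unique) arrival attaining this maximum. The key observation is that, regardless of how the Phase~2 threshold $x_0^{(\ell)}$ falls relative to the true arrivals, the algorithm's decision on the maximum arrival depends only on \emph{ordinal} data within $\bX_+$ together with the single bit ``is the maximum arrival above the Phase~2 threshold or not?'' — and by the \textbf{best so far} property (\Cref{obs:best-so-far}), once we reach the time of the maximum arrival and are still running, we will accept it provided we are in Phase~3, or in Phase~2 and it clears the threshold. Since we condition on $\MAX(\bX_+) = y$, I would split into the two cases $y > x_0^{(\ell)}$ and $y \le x_0^{(\ell)}$; in each case the survival-until-the-maximum event becomes a purely order-theoretic event about where the maximum of $\bX_+$ sits relative to the earlier arrivals.

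First I would set up the Poissonization picture: arrivals in $\bX_+$ land at i.i.d.\ uniform times in $[0,1]$ (conditioned on their count), and $\bX_0$ is independent. Fix the arrival carrying value $y$; call its time $U \sim \mathrm{Unif}[0,1]$. Case~(i): if $y > x_0^{(\ell)}$ (advice ``too high'' relative to the max, or just right), then Phase~2 as well as Phase~3 are live for this arrival, so \threephase{} stops on it iff (a) $U \ge \tau_1$ and (b) no arrival in $(0,U)$ was accepted earlier. Because nothing in Phase~1 is accepted, and an arrival in Phase~2 is accepted only if it beats \emph{all} earlier $\bX_+$-arrivals and the threshold, while an arrival in Phase~3 is accepted only if it beats all earlier $\bX_+$-arrivals, a standard secretary argument shows the algorithm survives to time $U$ iff the running maximum of $\bX_+ \cap (0,U)$ first appears before time $\tau_1$ (so it gets rejected in the observation phase) — exactly the event in the classical secretary analysis with observation window $[0,\tau_1]$. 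Integrating over $U$ gives the bound $\tau_1 \log(1/\tau_1)$. Case~(ii): if $y \le x_0^{(\ell)}$ (advice ``too low''), then this arrival can only be accepted in Phase~3 (it never clears the Phase~2 threshold), so we need $U \ge \tau_2$ and survival to $U$; here both Phase~1 and Phase~2 act as an observation window, since no $\bX_+$-arrival smaller than $y$ can be accepted in Phase~2 either (it would be below the max, hence not best-so-far, OR below threshold — I will need to be slightly careful and argue that if a \emph{smaller} arrival were accepted in Phase~2 we'd have a contradiction with it being the running max and below $y$; in fact the cleanest route is: the first accepted arrival, wherever it is, is best-so-far, so if any arrival before $U$ is accepted it exceeds $y$, impossible since $y$ is the max of $\bX_+$). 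This yields survival iff the running maximum of $\bX_+\cap(0,U)$ first appears before $\tau_2$, giving the bound $\tau_2\log(1/\tau_2)$ upon integrating.

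Combining the two cases, in \emph{every} realization the conditional stopping probability is at least $\min\{\tau_1\log(1/\tau_1),\,\tau_2\log(1/\tau_2)\} = \gamma(\tau_1,\tau_2)$, which is what we want; note the bound holds pointwise over the conditioning, so it survives taking the expectation over which case occurs and over $|\bX_+|$. The main obstacle I anticipate is the bookkeeping in Case~(ii): ruling out that a \emph{below-threshold} Phase~2 arrival could nonetheless be accepted (it cannot, by best-so-far, but this must be stated cleanly), and handling the degenerate ``zero arrival'' conventions introduced before the lemma so that $x_0^{(\ell)}$ and the running maxima are always well-defined — with these conventions in place the event descriptions are exactly the classical secretary events and the two integrals $\int_\tau^1 \frac{\tau}{u}\,du = \tau\log(1/\tau)$ finish it. A secondary subtlety worth a sentence: we are conditioning on the \emph{value} $y$ but the events above are value-independent given the order structure and the sign of $y - x_0^{(\ell)}$, and $x_0^{(\ell)}$ is independent of $\bX_+$, so conditioning on $\MAX(\bX_+)=y$ does not bias the relevant order statistics — this is precisely the point flagged in the text about needing the guarantee ``even conditioned on the maximum value.''
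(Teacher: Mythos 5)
Your plan matches the paper's proof in all essentials: condition on the value $y=\MAX(\bX_+)$, split into cases by the sign of $y - x_0^{(\ell)}$, view each case as a secretary argument over the uniform arrival time of the maximum, and integrate $\int_\tau^1 (\tau/u)\,du = \tau\log(1/\tau)$. (The ``too high''/``too low'' labels are swapped relative to the paper's intuition section---if $y > x_0^{(\ell)}$ the advice distribution is too \emph{low}---but that is cosmetic.)

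Two statements would not survive being written out. First, in your Case~(i) the ``iff'' is false: the algorithm also survives to time $U$ when the running maximum of $\bX_+\cap(0,U)$ first appears in $[\tau_1,\tau_2)$ but lies \emph{below} the threshold $x_0^{(\ell)}$ (Phase~2 then rejects it just as Phase~1 would). What is true is the one-sided implication ``running max before $\tau_1$ $\Rightarrow$ survival,'' which gives a lower bound rather than an exact probability; this is precisely the step where the paper's proof drops the threshold condition and writes an inequality, so the conclusion $\geq \tau_1\log(1/\tau_1)$ is still right but the reasoning should be stated as an inequality. Second, your ``cleanest route'' for Case~(ii) is a non-sequitur: being best-so-far at one's own arrival time does not imply exceeding $y$. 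The correct and simple observation is that in Case~(ii) every arrival in $\bX_+\cap(0,U)$ is strictly below $y\leq x_0^{(\ell)}$, hence below the Phase~2 threshold, hence cannot be accepted in Phase~2 at all. With that, Phases~1 and~2 together genuinely form the observation phase and the survival event is \emph{exactly} ``running max of $\bX_+\cap(0,U)$ first appears before $\tau_2$,'' yielding $\tau_2\log(1/\tau_2)$ as an equality (as in the paper). Your closing remark about why conditioning on the value $y$ does not bias the order statistics, and about the zero-arrival conventions, correctly addresses the subtlety the paper flags.
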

In fact we will show something stronger: that for all $\cD$ and \emph{any} threshold $L\in \bbR_{\geq 0}$, if $L = x_0^{(\ell)}$ is the threshold chosen by \threephase{} in Phase 0, then
\[
     \Pr{\ALG(L, \bX_+) = \MAX(\bX_+) \given \MAX(\bX_+) = y} \geq \gamma(\tau_1, \tau_2). 
\]
\begin{proof}
    We will write $x_+^{(1)} = \MAX(\bX_+)$ and $L = x_0^{(\ell)}$ for convenience. 
    For fixed $t \in [0,1]$, let $x^{(k)}$ denote the largest arrival in $[0,t)$ in $\bX_+$.
    We have two cases, corresponding to whether $y \leq L$.
    
    If $y \leq L$ then since the arrival time of $x_+^{(1)}$ is uniformly distributed in $[0,1]$, 
    \begin{align}
        \Pr{\ALG(L,\bX_+)= x^{(1)}_+ \given x_+^{(1)} = y } 
        &= \int_0^1 \Pr{\ALG(\bX)= x^{(1)}_+ \given x_t = x_+^{(1)} \wedge x_+^{(1)} = y} \:dt \notag \\
        &= \int_{\tau_2}^1 \Pr{x^{(k)} \text{ arrives in } [0,\tau_2] \given x_t = x_+^{(1)} \wedge x_+^{(1)} = y} \:dt \notag \\
        &= \int_{\tau_2}^1 \frac{\tau_2}{t} \:dt = \tau_2 \log\left(\frac{1}{\tau_2}\right), \label{eq:secbound-high-thresh}
    \end{align}
    since for \threephase{} the arrival $x_t = x_+^{(1)} \leq L$ is selected if and only if $t \geq \tau_2$ and the largest arrival prior to $x_t = x_+^{(1)}$ falls in $[0,\tau_2]$. 

    On the other hand, if $y > L$ then since the arrival of $x_+^{(1)}$ is uniformly distributed in $[0,1]$, 
    \begin{align}
        &\Pr{\ALG(L,\bX_+)= x_+^{(1)} \given x_+^{(1)} = y }
        = \int_0^1 \Pr{\ALG(\bX)= x^{(1)}_+ \given x_t = x_+^{(1)} \wedge x_+^{(1)} = y} \:dt \notag \\
        &\qquad\qquad= \int_{\tau_1}^1 1 - \Pr{x^{(k)} \text{ arr. in } [\tau_1,\min(\tau_2,t)) \wedge x^{(k)} > L \given  x_t = x_+^{(1)} \wedge x_+^{(1)} = y } \:dt \notag\\
        &\qquad\qquad\qquad\qquad - \int_{\tau_2}^{1} \Pr{x^{(k)} \text{ arr. in } [\tau_2, t] \given x_t = x_+^{(1)} \wedge x_+^{(1)} = y } \:dt . \notag
        \intertext{This is because $\ALG$ fails to accept $x_+^{(1)}$ upon arrival in $[\tau_1, 1]$ if and only if either $x^{(k)}$ arrives in $[\tau_1, \tau_2]$ \emph{and} beats the threshold $L$, or if $x^{(k)}$ arrives in $[\tau_2, t]$. This probability is only smaller if we remove the condition that $x^{(k)}$ must beat the threshold in order to be accepted in Phase 2:}
        &\qquad\qquad\geq \int_{\tau_1}^1 1 - \Pr{x^{(k)} \text{ arr. in } [\tau_1, t) \given x_t = x_+^{(1)} \wedge x_+^{(1)} = y } \:dt \notag \\
        &\qquad\qquad= \int_{\tau_1}^1 \frac{\tau_1}{t} \:dt = \tau_1 \log\left(\frac{1}{\tau_1}\right). \label{eq:secbound-low-thresh}
    \end{align}
    
    Taken together, \eqref{eq:secbound-high-thresh} and \eqref{eq:secbound-low-thresh} imply that for any $y \in \bbR_{\geq 0}$, 
    \begin{align}
        \Pr{\ALG(L,\bX_+)= \MAX(\bX_+) \given \MAX(\bX_+) = y}
        &\geq \min \left( \tau_1 \log \left(\frac{1}{\tau_1}\right) , \: \tau_2 \log \left(\frac{1}{\tau_2}\right) \right) = \gamma(\tau_1,\tau_2). \label{eq:sec-condnl-prob-guarantee} \qedhere
    \end{align}
\end{proof}

We next analyze the performance of the algorithm in the prophet setting.

\subsection{\threephase{} in the Prophet Setting (\texorpdfstring{$\cDp = \cD$}{D' = D})} \label{sec:three-phase-prophet}

We will do case analysis based on whether the value that $\ALG$ stops on is above or below the Phase 0 threshold $x_0^{(\ell)}$.
We first bound the probability \threephase{} stops on $x^{(j)}$ for $j \leq \ell$
\begin{restatable}{lemma}{conditionalprobforlowXj}
\label{lem:condnl-prob-low-j}
For all $j \leq \ell$ and all $t \geq \tau_1$, we have
\[
    \Pr{\ALG(\bX) = x^{(j)}\given x_t = x^{(j)}} \geq \left(1- \frac{t}{z + 1}\right)^{j - 1} \cdot \frac{\tau_1}{t}  + \left( 1- \frac{t}{z + 1}\right)^{\ell - 1} \cdot \left( \frac{\min(\tau_2, t)}{t} - \frac{\tau_1}{t}\right).
\]
\end{restatable}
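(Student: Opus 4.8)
The plan is to condition on the arrival time $t \geq \tau_1$ of $x^{(j)}$ and compute the probability that \threephase{} accepts it, exploiting the fact that since $\cDp = \cD$, all arrivals of $\bX = \bX_0 \cup \bX_+$ can be regarded as appearing at i.i.d.\ uniform times in $[-z, 1]$. For \threephase{} to accept $x^{(j)}$ when it arrives at time $t$, by \Cref{obs:best-so-far} it must be the largest arrival in $[0, t]$ — equivalently, none of $x^{(1)}, \dots, x^{(j-1)}$ arrives in $[0, t]$ — and moreover the relevant threshold condition must be met. I would split into two sub-events according to whether $t \in [\tau_1, \tau_2)$ (Phase 2) or $t \in [\tau_2, 1]$ (Phase 3), which is what produces the $\min(\tau_2, t)$ in the statement.

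First I would handle the "best-so-far in $[0,t]$" requirement: since each of the $\ell - 1$ larger arrivals $x^{(1)}, \dots, x^{(\ell-1)}$ (which all exist by the zero-arrival convention) independently lands at a uniform time in $[-z, 1]$, the probability that a fixed one of them avoids $[0, t]$ is $1 - \frac{t}{z+1}$. For $x^{(j)}$ to be accepted in Phase 3 ($t \geq \tau_2$), it suffices that all of $x^{(1)}, \dots, x^{(\ell-1)}$ avoid $[0,t]$ \emph{and} in fact I want the largest prior arrival overall to have landed before $\tau_2$ so that Phase 3's "best in P1 and P2" condition is automatically consistent — here I need to be slightly careful and argue that requiring $x^{(1)}, \dots, x^{(\ell-1)}$ to all land in $[-z, \tau_2) \cup (t, 1]$ (avoiding $[\tau_2, t]$ as well would be too strong; rather the condition is that the largest arrival in $[0,t)$ landed before $\tau_2$, which is implied by none of the top $\ell-1$ landing in $[\tau_2, t)$ combined with $x^{(j)}$ being best-so-far). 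This is where the two terms of the bound come from: the first term $\left(1 - \frac{t}{z+1}\right)^{j-1} \cdot \frac{\tau_1}{t}$ accounts for the case where the largest prior arrival lands in $[0, \tau_1)$ — then only $x^{(1)}, \dots, x^{(j-1)}$ need to be controlled (they must avoid $[\tau_1, t]$, which the "best-so-far" phrasing subsumes), and the conditional probability that the largest arrival in $[0,t)$ falls in $[0,\tau_1)$ given it exists is $\frac{\tau_1}{t}$; the second term handles the largest prior arrival landing in $[\tau_1, \min(\tau_2,t))$, which additionally requires the threshold $x_0^{(\ell)}$ to be cleared, and I lower-bound the probability that $x_0^{(\ell)}$ is below all of $x^{(1)}, \dots, x^{(\ell-1)}$ by requiring all $\ell - 1$ of these to avoid $[0, t]$, i.e.\ to have landed in $[-z, 0) \cup (t, 1]$, which contributes the $\left(1 - \frac{t}{z+1}\right)^{\ell-1}$ factor, times the probability $\frac{\min(\tau_2,t) - \tau_1}{t}$ that the largest prior arrival lands in the window $[\tau_1, \min(\tau_2, t))$.

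Concretely, I would write $\Pr{\ALG(\bX) = x^{(j)} \mid x_t = x^{(j)}}$ as a sum over the location of the largest arrival in $[0, t)$: the contribution from it landing in $[0, \tau_1)$, and the contribution from it landing in $[\tau_1, \min(\tau_2, t))$ while also being below the Phase 0 threshold. In the first case acceptance is guaranteed once $x^{(1)}, \dots, x^{(j-1)}$ all avoid $[0,t]$; in the second case acceptance additionally needs the Phase 2 threshold to be beaten, and I pessimistically insist that $x^{(1)}, \dots, x^{(\ell-1)}$ all avoid $[0,t]$ so that $x_0^{(\ell)} < x^{(\ell-1)} \leq$ (largest arrival in $[0,t)$). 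Decoupling "where does the largest prior arrival land" from "do the top $j-1$ (resp.\ $\ell-1$) arrivals avoid $[0,t]$" requires a short independence argument, which is clean because conditioned on $x_t = x^{(j)}$ arriving at time $t$, the times of the other arrivals are i.i.d.\ uniform on $[-z,1]$ and the identity of the largest among those landing in $[0,t)$ is independent of its time.

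\textbf{Main obstacle.} The delicate point is the Phase 3 bookkeeping: when $t \geq \tau_2$, acceptance requires $x^{(j)}$ to beat all arrivals in P1 and P2, and I must make sure my event "$x^{(1)},\dots,x^{(\ell-1)}$ avoid $[0,t]$, largest prior arrival lands in $[\tau_1, \tau_2)$" correctly implies this — in particular that no uncontrolled arrival in $[\tau_1, t)$ exceeds $x^{(j)}$. Since I've forced all of $x^{(1)}, \dots, x^{(j-1)}$ out of $[0,t]$, every arrival in $[\tau_1, t)$ is smaller than $x^{(j)}$, so the best-so-far condition holds, and the only thing that could block acceptance in Phase 2 is the threshold, which the second-term event handles and the first-term event ($t$ itself could be $<\tau_2$) sidesteps. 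Reconciling these cases via the $\min(\tau_2, t)$ expression, and verifying the corner cases where some $\bX_i$ is empty (handled by the zero-arrival convention stated before the lemma), is the bulk of the careful work; everything else is the uniform-time computation $\int$ or ratio $\frac{\tau_1}{t}$, $\frac{\min(\tau_2,t)-\tau_1}{t}$.
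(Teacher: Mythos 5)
Your plan is structurally the same as the paper's: condition on $x^{(j)}$ landing at time $t$, reduce to a statement about the largest prior arrival $x^{(k)} = \max_{[0,t)} \bX_+$, and split into two contributions that become the two terms of the bound. The paper organizes the split by the \emph{value} of $x^{(k)}$ (whether $x^{(k)} \geq x^{(\ell)}$, or $x^{(k)} < x_0^{(\ell)}$, discarding the middle regime $\Xi^2$), whereas you organize it by the \emph{time} at which $x^{(k)}$ lands ($[0,\tau_1)$ vs.\ $[\tau_1,\min(\tau_2,t))$); after expanding the paper's sums over $j'$ these are the same bookkeeping. Your first term, the decoupling of ``top $j-1$ avoid $[0,t]$'' from ``$x^{(k)}$ lands before $\tau_1$,'' and the Phase-3 reasoning via \Cref{obs:best-so-far} all check out and match the paper.

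The place where your argument actually fails to close is exactly the place you flagged as the ``main obstacle,'' and the specific justification you offer there is wrong. You claim that forcing $x^{(1)},\dots,x^{(\ell-1)}$ out of $[0,t]$ yields ``$x_0^{(\ell)} < x^{(\ell-1)} \leq$ (largest arrival in $[0,t)$).'' That chain is backwards: under the stated event the largest arrival in $[0,t)$ has global rank $\geq \ell$ and hence value $\leq x^{(\ell)} < x^{(\ell-1)}$. More importantly, what Phase 2 actually needs is $x^{(k)} \leq x_0^{(\ell)}$ (so that the threshold is \emph{not} beaten), and from ``top $\ell-1$ avoid $[0,t]$'' you only get $x^{(k)} \leq x^{(\ell)}$. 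Since $x_0^{(\ell)} \leq x^{(\ell)}$ (the $\ell$-th order statistic of $\bX_0$ is dominated by the $\ell$-th global order statistic), the implication $x^{(k)} \leq x^{(\ell)} \Rightarrow x^{(k)} \leq x_0^{(\ell)}$ does not hold; indeed if $x^{(k)} = x^{(\ell+1)}$ and $j \leq \ell$ then at most $\ell-1$ of the top $\ell$ global values can lie in $\bX_0$, so $x_0^{(\ell)} \leq x^{(\ell+1)} = x^{(k)}$ and the threshold \emph{is} beaten. It is worth noting that the paper's own proof elides the same point: it equates $\Pr{\Xi_i^3(j)} = \Pr{x^{(k)} < x_0^{(\ell)}}$ with $\Pr{\bigvee_{j' > \ell}[x^{(k)} = x^{(j')}]} = \Pr{x^{(k)} < x^{(\ell)}}$, which again silently substitutes $x^{(\ell)}$ for $x_0^{(\ell)}$. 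So your proposal is a faithful reproduction of the paper's approach, gap included, but the threshold-clearing step as you wrote it is not a valid argument and would need a genuinely different justification (or an explicit $O(1/z)$ correction of the kind the paper later absorbs in \Cref{lem:prophet-prob-ratio-top}).
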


The intuition for this bound is as follows: suppose we at time $t$ and the $j$-th best value of the entire sequence arrives. 
What needs to happen for \threephase{} to accept it? 
None of the top $j-1$ best elements may arrive between $0$ and $t$. 
Additionally, we may not have accepted anything else so far because it was not the best so far or because it did not meet the threshold.

We establish a similar bound on the probability that \threephase{} stops on $x^{(j)}$ for $j > \ell$.
\begin{restatable}{lemma}{conditionalprobforhighXj}
\label{lem:condnl-prob-high-j}
    For $j > \ell$ and $t \geq \tau_2$, we have
    \begin{align*}
        \Pr{\ALG(\bX) = x^{(j)}\given x_t = x^{(j)}} \geq \left(1 - \frac{t}{z+1}\right)^{j-1} \frac{\tau_2}{t}.
    \end{align*}
\end{restatable}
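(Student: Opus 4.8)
The plan is to exploit the coupling that the hypothesis $\cD=\cD'$ makes available. Since the simulated Phase-0 arrivals over $[-z,0)$ and the real arrivals over $[0,1]$ are both Poisson processes of rate $n$ with i.i.d.\ $\cD$-values, the union $\bX=\bX_0\cup\bX_+$ is a single Poisson process on $[-z,1]$, and conditioned on the number of points its arrival times are i.i.d.\ uniform on $[-z,1]$. Conditioning further on $\{x_t=x^{(j)}\}$, the $j-1$ arrivals of larger value have i.i.d.\ uniform$[-z,1]$ times, independently of $t$, so the event $A\defeq\{\text{none of }x^{(1)},\dots,x^{(j-1)}\text{ lands in }[0,t)\}$ has probability exactly $\bigl(1-\tfrac{t}{z+1}\bigr)^{j-1}$. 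This produces the first factor in the claimed bound, and the remainder of the argument is carried out conditionally on $A$.

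Under $A$, the arrival $x^{(j)}$ is the largest in $[0,t]$, so by \Cref{obs:best-so-far} it clears the Phase-3 acceptance test (``greater than all arrivals in P1 and P2'') and is accepted provided \threephase{} has not stopped before time $t$; note that $t\ge\tau_2$ places us in Phase~3 at time $t$. It therefore suffices to show that, conditionally on $A$, the algorithm does not stop during $[\tau_1,t)$ with probability at least $\tau_2/t$. Let $M$ be the largest arrival in $[0,t)$ (using the zero-arrival convention if there is none); conditionally on $A$ we have $M<x^{(j)}$, and the arrival time of $M$ is uniform on $[0,t)$. I would then case on where $M$ falls. If $M$ lands in $[0,\tau_1)$, the algorithm cannot stop in $[\tau_1,t)$: every arrival in $[\tau_1,\tau_2)$ is dominated by $M$ (which lies in Phase~1), hence cannot satisfy ``greater than all arrivals in P1'', and every arrival in $[\tau_2,t)$ is dominated by $M\in[0,\tau_2)$, hence cannot satisfy the Phase-3 test. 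If instead $M$ lands in $[\tau_1,\tau_2)$, then by \Cref{obs:best-so-far} the only way to stop in $[\tau_1,t)$ is for Phase~2 to fire; since $M$ is the largest arrival in $[0,t)$ that clears ``greater than all arrivals in P1'', Phase~2 fires if and only if $M$ also exceeds the Phase-0 threshold $x_0^{(\ell)}$, and when it does not fire, Phase~3 cannot fire before $t$ because $M\in[0,\tau_2)$ exceeds every arrival in $[\tau_2,t)$. So on this sub-event the algorithm reaches $t$ exactly when $M\le x_0^{(\ell)}$. Since the position of $M$ is uniform on $[0,t)$, the first case contributes $\tau_1/t$, and---provided the comparison $M\le x_0^{(\ell)}$ holds throughout the second case---the second contributes $(\tau_2-\tau_1)/t$, for a total of $\tau_2/t$.

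The threshold comparison is the crux, and the step I expect to be the main obstacle; it is precisely where the hypothesis $j>\ell$ enters. Since $x^{(j)}$ has rank $j>\ell$ among all of $\bX$, at least $\ell$ arrivals have value exceeding $x^{(j)}>M$, and under $A$ each of these, being larger than the $[0,t)$-maximum $M$, avoids $[0,t)$ and so lies either in the Phase-0 window $[-z,0)$ (of length $z$) or in $[t,1]$ (of length $1-t\le 1-\tfrac{1}{e}$, since $\tau_2\ge\tfrac{1}{e}$). Because the Phase-0 window is long, one should be able to show that enough of these high arrivals land in Phase~0 to force the $\ell$-th largest Phase-0 value $x_0^{(\ell)}$ above $M$; pinning down the corresponding estimate---using \Cref{ass:a-value} ($\ell=z+2$) and the $z\to\infty$ limit, which is where the $O(z^{-1})$ slack of \Cref{thm:alg-prophet-guarantee} comes from---is the delicate part. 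The remaining ingredients (the Poisson coupling, the uniformity of $M$'s location, and the two-case domination argument via \Cref{obs:best-so-far}) are routine.
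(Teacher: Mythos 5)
Your decomposition---couple to a single Poisson process, extract the first factor $\Pr{A}=(1-\tfrac{t}{z+1})^{j-1}$ from the positions of ranks $1,\dots,j-1$, then case on the phase in which the maximum $M$ of $[0,t)$ lands---is exactly what the paper's direct summation over the rank of the best-so-far encodes, so you have the right skeleton. But the step you flag as delicate is where both you and the paper are on thin ice, and it is worth seeing exactly why the fix you sketch would not close the gap.

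The paper dispatches this step in one clause, ``Since $j > \ell$ by assumption, no prior arrivals will have been accepted in Phase 2,'' and offers no further justification; that claim is not unconditionally true under $A$. You correctly observe that the $j-1 \ge \ell$ arrivals above $x^{(j)}$ avoid $[0,t)$ and hence split between $[-z,0)$ and $[t,1]$, but with $\ell=z+2$ and $[t,1]$ of constant length, the probability that at least $\ell$ of ranks $1,\dots,j-1$ land in $[-z,0)$ is, already for $j=\ell+1$, equal to $\bigl(\tfrac{z}{z+1-t}\bigr)^{\ell}\approx e^{-(1-t)}$, a constant bounded away from one. So ``enough of these high arrivals land in Phase~0'' is simply not guaranteed if ``these'' means only the arrivals above $x^{(j)}$. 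What actually rescues the lemma is that every arrival ranked strictly between $j$ and $\kappa$ (where $\kappa$ is the overall rank of $M$) also exceeds $M$ and also avoids $[0,t)$; since $\kappa-j$ is geometric with parameter $\tfrac{t}{z+1}$, typically $\kappa-j=\Theta(z)$ and there are far more than $\ell$ candidates to populate Phase~0. The threshold comparison $M\le x_0^{(\ell)}$ fails only when $\kappa-j$ is $O(1)$, an event of probability $O(1/z)$, and integrating the conditional failure probabilities against the geometric law of $\kappa$ gives a total slack of $\Theta(1/z)$. Consequently the lemma as written, without an error term, is not literally correct: its right-hand side should carry an extra $-O(1/z)$, which is then harmlessly absorbed by the $O(z^{-1})$ already present in \Cref{lem:prophet-prob-ratio-bottom} and \Cref{thm:alg-prophet-guarantee}. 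In short, your skepticism is well-placed, the estimate you sketch (confined to ranks above $x^{(j)}$) does not suffice, and the paper's own proof does not supply the missing accounting either.
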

This reflects that if we have reached Phase 3, then the $j$-th largest value is accepted provided that it is the largest so far, and also that the largest arrival in $[0,t)$ prior to it arrives outside of $[\tau_2, t)$.

\subsubsection{Ordinal Approximate Stochastic Dominance}
With these conditional probabilities in hand, we are ready to make claims about the relative probabilities that $\ALG$ and $\MAX$ are within the top order statistics $x^{(1)}, \ldots, x^{(\ell')}$ for various $\ell'$. 

\begin{restatable}{lemma}{prophetProbabilityRatioTop}
\label{lem:prophet-prob-ratio-top}
    For any fixed $\tau_1 \neq \tau_2$ and sufficiently large $z$, for all $\ell' \leq \ell$, the probability that $\ALG$ chooses an arrival among the top $\ell'$ is at least 
    \[
        \Pr{\ALG(\bX) \geq x^{(\ell')}} \geq \alpha(\tau_1, \tau_2, z) \cdot \Pr{\MAX(\bX_+) \geq x^{(\ell')}},
    \]
    where 
    \[
        \alpha(\tau_1, \tau_2, z) \defeq \tau_1 \log\left( \frac{1}{\tau_1}\right) + e^{-\tau_1} - e^{-\tau_2} -O\left(z^{-1}\right).
    \]
\end{restatable}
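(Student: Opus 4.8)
The plan is to sum the conditional acceptance probabilities from Lemmas~\ref{lem:condnl-prob-low-j} and~\ref{lem:condnl-prob-high-j} over the top $\ell'$ order statistics and integrate over the uniform arrival time, then compare against the analogous sum for $\MAX(\bX_+)$. First I would observe that, since $\cDp=\cD$, all of $\bX=\bX_0\cup\bX_+$ are i.i.d.\ draws from the same distribution at a common Poisson rate, so each $x^{(j)}$ (the $j$-th largest across \emph{all} phases) arrives at a time that is uniformly distributed on $[-z,1]$, independently of which arrival it is. Conditioning on $x^{(j)}$ landing in $[0,1]$ at time $t$, that time is uniform on $[0,1]$, and this happens with probability $\tfrac{1}{z+1}$. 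Writing $P_j\defeq\Pr{\ALG(\bX)=x^{(j)}}$, I would split $P_j=\tfrac{1}{z+1}\int_0^1\Pr{\ALG(\bX)=x^{(j)}\given x_t=x^{(j)}}\,dt$ (the contribution from $x^{(j)}$ landing in Phase~0 is zero since nothing is accepted there), and plug in the two lemmas: for $j\le\ell'\le\ell$ the integrand over $t\in[\tau_1,1]$ is the stated two-term bound, and the first term already dominates what we need.

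Next, for the numerator I would restrict attention to the first (threshold-$\tfrac{\tau_1}{t}$) term of Lemma~\ref{lem:condnl-prob-low-j} for $j\in\{1,\dots,\ell'\}$, giving
\[
\Pr{\ALG(\bX)\ge x^{(\ell')}}=\sum_{j=1}^{\ell'}P_j\ \ge\ \frac{1}{z+1}\sum_{j=1}^{\ell'}\int_{\tau_1}^{1}\Bigl(1-\tfrac{t}{z+1}\Bigr)^{j-1}\tfrac{\tau_1}{t}\,dt,
\]
plus the contribution of the second term of Lemma~\ref{lem:condnl-prob-low-j} for the single index $j=\ell'$ (or better, for all $j\le\ell'$), namely $\tfrac{1}{z+1}\int_{\tau_1}^{1}\bigl(1-\tfrac{t}{z+1}\bigr)^{\ell-1}\bigl(\tfrac{\min(\tau_2,t)}{t}-\tfrac{\tau_1}{t}\bigr)dt$. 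Swapping sum and integral, $\sum_{j=1}^{\ell'}(1-\tfrac{t}{z+1})^{j-1}=\tfrac{z+1}{t}\bigl(1-(1-\tfrac{t}{z+1})^{\ell'}\bigr)$, so the first block becomes $\int_{\tau_1}^1\tfrac{\tau_1}{t^2}\bigl(1-(1-\tfrac{t}{z+1})^{\ell'}\bigr)dt$. As $z\to\infty$ with $\ell'\le\ell=z+2$, we have $(1-\tfrac{t}{z+1})^{\ell'}\to e^{-t\ell'/(z+1)}$; for the denominator comparison it is cleanest to \emph{lower}-bound this block using $1-(1-\tfrac{t}{z+1})^{\ell'}\ge 1-e^{-t\ell'/(z+1)}\ge$ (a linear-in-$\ell'/(z+1)$ bound), which I would carry out with an explicit $O(z^{-1})$ error term. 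Simultaneously, $\Pr{\MAX(\bX_+)\ge x^{(\ell')}}=\Pr{x^{(\ell')}\text{ lands in }[0,1]}=\tfrac{\ell'}{z+1}$ (one of the top $\ell'$ across all phases is the max of $\bX_+$ iff at least one of them lands in $[0,1]$; actually $\MAX(\bX_+)\ge x^{(\ell')}$ exactly when at least one of $x^{(1)},\dots,x^{(\ell')}$ lands in $[0,1]$, which has probability $1-(1-\tfrac{1}{z+1})^{\ell'}=\tfrac{\ell'}{z+1}+O(z^{-1})$ — I'd compute this carefully too). Dividing, the $\tfrac{1}{z+1}$ and $\ell'$ factors arrange so the ratio tends to a quantity independent of $\ell'$, and one reads off $\tau_1\log(1/\tau_1)$ from the $\int_{\tau_1}^1\tfrac{\tau_1}{t^2}\,dt$-type term and $e^{-\tau_1}-e^{-\tau_2}$ from the Phase-2 correction term $\int_{\tau_1}^{\tau_2}e^{-t}\,dt$ (using $(1-\tfrac{t}{z+1})^{\ell-1}\to e^{-t}$ and $\tfrac{\min(\tau_2,t)-\tau_1}{t}$ integrated against $\tfrac{1}{t}\cdot$nothing — I need to track that the second term contributes $\int_{\tau_1}^{\tau_2}e^{-t}(1-\tfrac{\tau_1}{t})\,dt\cdot(\ldots)$, and it is precisely here that the weighting by $\ell'/(z+1)$ versus the $\ell-1\approx z+1$ exponent must be reconciled — this is the delicate bookkeeping step).

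The main obstacle, as flagged, is the uniform-in-$\ell'$ control of the $z\to\infty$ limit: one must show the ratio is bounded below by $\alpha(\tau_1,\tau_2,z)$ \emph{for every} $\ell'\le\ell$ with a single $O(z^{-1})$ slack, which means the binomial-to-exponential approximations $(1-\tfrac{t}{z+1})^{m}\approx e^{-tm/(z+1)}$ must be made quantitative and monotone in the right direction, and the ratio $\tfrac{1-(1-\tfrac{t}{z+1})^{\ell'}}{\ell'/(z+1)}$ must be lower-bounded by something of the form $\tfrac{1-e^{-t}}{t}$-flavored but valid for all $\ell'$ — the worst case being $\ell'=\ell$, where the numerator's exponent matches the $\bX_0$-avoidance exponent and the two $e^{-t}$ terms combine. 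I expect the clean way is: show the ratio $P(\ell')\defeq\Pr{\ALG\ge x^{(\ell')}}/\Pr{\MAX(\bX_+)\ge x^{(\ell')}}$ is \emph{minimized at $\ell'=\ell$} (larger $\ell'$ dilutes the strong small-$j$ terms), reduce to that single case, and there evaluate the limiting integrals $\int_{\tau_1}^1\tfrac{\tau_1}{t^2}\,dt=\tfrac{1}{\tau_1}-1$... wait, rescaled correctly it is $\tau_1\log(1/\tau_1)$ after accounting for the $e^{-t}$-weight — so the honest statement is that at $\ell'=\ell$ the ratio $\to\tau_1\log(1/\tau_1)+e^{-\tau_1}-e^{-\tau_2}$, matching $\alpha$, and monotonicity finishes all smaller $\ell'$. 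Verifying that monotonicity (or, failing a clean monotonicity, a direct term-by-term domination) is where the real work lies; everything else is substitution and dominated convergence with explicit rates.
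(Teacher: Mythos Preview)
Your high-level setup matches the paper's: sum the bound of Lemma~\ref{lem:condnl-prob-low-j} over $j\le\ell'$, integrate over the uniform arrival time, and compare against $\Pr{\MAX(\bX_+)\ge x^{(\ell')}}$. But you are missing the two moves that make the calculation close, and your proposal stalls precisely where they are needed.

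First, rather than summing $\sum_j(1-\tfrac{t}{z+1})^{j-1}$ \emph{inside} the integral (which yields your $\tfrac{\tau_1}{t^2}\bigl(1-(1-\tfrac{t}{z+1})^{\ell'}\bigr)$ and the $\ell'$-dependent mess you flag), the paper simply lower-bounds $(1-\tfrac{t}{z+1})^{j-1}\ge(1-\tfrac{1}{z+1})^{j-1}$ for $t\le1$. This decouples the $j$-sum from the $t$-integral: the sum becomes a geometric series equal to $1-(1-\tfrac{1}{z+1})^{\ell'}$, and the integral is just $\int_{\tau_1}^1\tfrac{\tau_1}{t}\,dt=\tau_1\log(1/\tau_1)$. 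Crucially, the paper uses \emph{two} upper bounds for the denominator --- both the exact $1-(1-\tfrac{1}{z+1})^{\ell'}$ and the union bound $\tfrac{\ell'}{z+1}$ --- and the first of these matches this numerator piece exactly. So the ratio contributed by this term is $\tau_1\log(1/\tau_1)$ \emph{uniformly in $\ell'$}, with no monotonicity argument needed. Your confusion about ``$\int_{\tau_1}^1\tfrac{\tau_1}{t^2}\,dt$ versus $\tau_1\log(1/\tau_1)$'' is a direct symptom of not having made this substitution.

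Second, the $j$-independent piece $(1-\tfrac{t}{z+1})^{\ell-1}\bigl(\tfrac{\min(\tau_2,t)}{t}-\tfrac{\tau_1}{t}\bigr)$ contributes $\ell'$ identical copies. The paper lower-bounds the negative $-\tfrac{\tau_1}{t}$ part (and the $[\tau_2,1]$ part) by replacing $(1-\tfrac{t}{z+1})^{\ell-1}$ with $(1-\tfrac{1}{z+1})^{\ell-1}$, yielding $\int_{\tau_1}^{\tau_2}(1-\tfrac{t}{z+1})^{\ell-1}\,dt$ plus $(1-\tfrac{1}{z+1})^{\ell-1}\bigl(\int_{\tau_2}^1\tfrac{\tau_2}{t}\,dt-\int_{\tau_1}^1\tfrac{\tau_1}{t}\,dt\bigr)$. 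By the constraint \eqref{eq:ass-tau1-tau2} --- which you never invoke --- this last bracket is \emph{exactly zero}. What remains is $\tfrac{\ell'}{z+1}\int_{\tau_1}^{\tau_2}e^{-t}\,dt=\tfrac{\ell'}{z+1}(e^{-\tau_1}-e^{-\tau_2})$ up to $O(z^{-1})$, and dividing by the union bound $\tfrac{\ell'}{z+1}$ gives the second term of $\alpha$, again uniformly in $\ell'$. Without using \eqref{eq:ass-tau1-tau2} you cannot cleanly extract $e^{-\tau_1}-e^{-\tau_2}$; your proposed monotonicity-in-$\ell'$ fallback might be made to work, but it is substantially more effort than these two substitutions.
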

We derive this by integrating the probability guarantees from \Cref{lem:condnl-prob-low-j} over time in order to obtain unconditional probabilities. 
The claim follows from computing and carefully bounding these integrals, then identifying two distinct upper bounds on the cumulative maximum probability term.

We will also need to establish a lower bound for $\ell' > \ell$.
For the constituent terms $x^{(j)}$ for $j \leq \ell$, we may use \eqref{eq:prophet-prob-ratio} and \cref{lem:prophet-prob-ratio-top}. 
But for $j > \ell$ we will use the bound from \Cref{lem:condnl-prob-low-j} in order to get a handle on the $\Pr{\ALG(\bX) = x^{(j)}}$ terms.

\begin{restatable}{lemma}{prophetProbRatioBottom}
\label{lem:prophet-prob-ratio-bottom}
    For any fixed $\tau_1 \neq \tau_2$ and sufficiently large $z$, for all $\ell' > \ell$, the probability that $\ALG$ chooses an arrival among the top $\ell'$ is again at least 
    \[
        \Pr{\ALG(\bX) \geq x^{(\ell')}} \geq \alpha(\tau_1, \tau_2, z) \cdot \Pr{\MAX(\bX_+) \geq x^{(\ell')}}.
    \]
\end{restatable}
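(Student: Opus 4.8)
The plan is to extend the argument of Lemma~\ref{lem:prophet-prob-ratio-top} past the index $\ell$ by splitting the event $\{\ALG(\bX) \ge x^{(\ell')}\}$ into the contribution from the top $\ell$ order statistics, which is already controlled, and the contribution from indices $\ell < j \le \ell'$, for which we have the separate conditional bound of Lemma~\ref{lem:condnl-prob-high-j}. Concretely, I would write
\[
    \Pr{\ALG(\bX) \ge x^{(\ell')}} = \Pr{\ALG(\bX) \ge x^{(\ell)}} + \sum_{j=\ell+1}^{\ell'} \Pr{\ALG(\bX) = x^{(j)}},
\]
using that $\ALG$ has the best-so-far property (Observation~\ref{obs:best-so-far}), so the events $\{\ALG(\bX)=x^{(j)}\}$ are disjoint and exhaust $\{\ALG(\bX)\ge x^{(\ell')}\}$. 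The first term is at least $\alpha(\tau_1,\tau_2,z)\cdot\Pr{\MAX(\bX_+)\ge x^{(\ell)}}$ by Lemma~\ref{lem:prophet-prob-ratio-top} applied with $\ell'=\ell$.

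For each summand with $j>\ell$, I would integrate the conditional bound from Lemma~\ref{lem:condnl-prob-high-j} over the uniformly-random arrival time $t\in[\tau_2,1]$ of $x^{(j)}$, giving
\[
    \Pr{\ALG(\bX)=x^{(j)}} \ge \int_{\tau_2}^1 \left(1-\tfrac{t}{z+1}\right)^{j-1}\frac{\tau_2}{t}\,dt.
\]
The matching term on the $\MAX$ side is $\Pr{\MAX(\bX_+)=x^{(j)}}$, i.e.\ the probability that the $j$-th largest arrival overall is the largest among $\bX_+$; since $x^{(j)}$ lands in $\bX_+$ with probability roughly $1-O(z^{-1})$ and conditioned on that its time is uniform in $[0,1]$, this probability is comparable to $\int_{\tau_2}^1(\cdots)\,dt$-type quantities as well. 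The key structural point is that for $j>\ell$ the factor $\bigl(1-\tfrac{t}{z+1}\bigr)^{j-1}$ is already small, so both the $\ALG$ probability and the $\MAX$ probability of selecting exactly $x^{(j)}$ decay geometrically in $j$; I would bound the ratio of the two partial sums term-by-term, showing each ratio is at least $\alpha(\tau_1,\tau_2,z)$ in the large-$z$ limit. Summing a geometric-type series and comparing with the corresponding series for $\MAX$ yields $\sum_{j=\ell+1}^{\ell'}\Pr{\ALG(\bX)=x^{(j)}}\ge \alpha(\tau_1,\tau_2,z)\cdot\bigl(\Pr{\MAX(\bX_+)\ge x^{(\ell')}} - \Pr{\MAX(\bX_+)\ge x^{(\ell)}}\bigr)$, and adding this to the first term telescopes to the claimed bound.

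The main obstacle I anticipate is the bookkeeping around the ``$\MAX$ side'' for the high-index terms: unlike in Lemma~\ref{lem:prophet-prob-ratio-top}, where the relevant comparison is against $\Pr{\MAX(\bX_+)\ge x^{(\ell')}}$ with $\ell'\le\ell$, here I need a clean lower bound on $\Pr{\MAX(\bX_+)=x^{(j)}}$ for $j>\ell$ that is uniformly comparable to the integral bound on $\Pr{\ALG(\bX)=x^{(j)}}$, and the $O(z^{-1})$ error terms coming from the event $|\bX_0|\neq z+1$ and from ``zero arrivals'' must be shown not to accumulate as $j$ ranges up to $\ell'$ (which may itself grow with $z$). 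Controlling this requires either a uniform geometric tail bound in $j$ or a direct comparison of the two full series $\sum_j \Pr{\ALG=x^{(j)}}$ and $\sum_j \Pr{\MAX(\bX_+)=x^{(j)}}$; I would aim for the latter, since both sides are dominated by their first few terms and the tails are negligible, so the per-term ratio $\alpha(\tau_1,\tau_2,z)$ — the same constant as for the top regime — carries through with only an additional $O(z^{-1})$ loss.
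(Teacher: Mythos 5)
Your decomposition and use of Lemma~\ref{lem:condnl-prob-high-j} match the paper's opening move, but the per-term comparison you propose for the new summands does not go through. For $j>\ell$ the bound you obtain is
\[
\Pr{\ALG(\bX)=x^{(j)}} \;\gtrsim\; \left(1-\tfrac{1}{z+1}\right)^{j-1}\frac{1}{z+1}\cdot \tau_2\log\!\left(\tfrac{1}{\tau_2}\right),
\]
while $\Pr{\MAX(\bX_+)=x^{(j)}} \approx (1-\tfrac{1}{z+1})^{j-1}\frac{1}{z+1}$. So the per-term ratio for $j>\ell$ is only about $\gamma = \tau_2\log(1/\tau_2)$, which is \emph{strictly less} than $\alpha(\tau_1,\tau_2,z) = \gamma + (e^{-\tau_1}-e^{-\tau_2}) - O(z^{-1})$ whenever $\tau_1\neq\tau_2$. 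Your claim that ``each ratio is at least $\alpha$ in the large-$z$ limit'' is therefore false, and the term-by-term argument cannot yield the stated conclusion.

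What actually makes the lemma true is that the missing fraction $(e^{-\tau_1}-e^{-\tau_2})$ is already supplied by the \emph{fixed} Phase-2 contribution $\frac{\ell}{z+1}\left(e^{-\tau_1}-e^{-\tau_2}\right)$ appearing in $\Pr{\ALG(\bX)\ge x^{(\ell)}}$ (Eq.~\eqref{eq:low-l'-alg-bound} at $\ell'=\ell$): since $\ell\ge z+1$ by Assumption~\ref{ass:a-value}, $\frac{\ell}{z+1}\ge 1 \ge \Pr{\MAX(\bX_+)\ge x^{(\ell')}}$ for every $\ell'$, so this one chunk continues to cover the $(e^{-\tau_1}-e^{-\tau_2})$ share of $\Pr{\MAX(\bX_+)\ge x^{(\ell')}}$ no matter how large $\ell'$ gets. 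The new summands for $j>\ell$ then only need to cover the $\gamma$ share of the increment, which they do exactly because $\tau_2\log(1/\tau_2)=\tau_1\log(1/\tau_1)=\gamma$ by Eq.~\eqref{eq:ass-tau1-tau2}: the geometric series telescopes so that the coefficient of $\gamma$ grows from $1-(1-\tfrac{1}{z+1})^{\ell}$ to $1-(1-\tfrac{1}{z+1})^{\ell'}$. In short, the argument is not a per-term ratio comparison but a split of $\alpha$ into a $\gamma$ part that scales with $\ell'$ and an $(e^{-\tau_1}-e^{-\tau_2})$ part that is already saturated; your proposal misses this structure and would not close.
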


This proof proceeds starting from \eqref{eq:low-l'-alg-bound} with $\ell' = \ell$, which satisfies the condition $\ell'\leq \ell$. 
We will apply \Cref{lem:condnl-prob-high-j} and again integrate over $t \in [\tau_2, 1]$ to get probability bounds for $j > \ell$. 
Taking $a = 1$ in accordance with \Cref{ass:a-value} and collecting terms yields the claim.
We provide a proof in \Cref{app:algos}.

Since this holds for both $\ell' \leq \ell$ and $\ell' > \ell$, \cref{lem:prophet-prob-ratio-top,lem:prophet-prob-ratio-bottom} together imply
\begin{corollary} \label{lem:prophet-prob-ratio}
    If $\cD = \cD'$ then for any fixed $\tau_1 \neq \tau_2$ and sufficiently large $z$, for all $\ell'$, the probability that \threephase{} chooses an arrival among the top $\ell'$ is at least 
    \[
        \Pr{\ALG(\bX) \geq x^{(\ell')}} \geq \alpha(\tau_1, \tau_2, z) \cdot \Pr{\MAX(\bX_+) \geq x^{(\ell')}}.
    \]
\end{corollary}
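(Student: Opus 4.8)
The plan is to derive Corollary~\ref{lem:prophet-prob-ratio} directly from the two preceding lemmas by a simple case split on $\ell'$. Indeed, \Cref{lem:prophet-prob-ratio-top} establishes the claimed inequality
\[
    \Pr{\ALG(\bX) \geq x^{(\ell')}} \geq \alpha(\tau_1, \tau_2, z) \cdot \Pr{\MAX(\bX_+) \geq x^{(\ell')}}
\]
for every $\ell' \leq \ell$, and \Cref{lem:prophet-prob-ratio-bottom} establishes the same inequality for every $\ell' > \ell$. Since every index $\ell' \in \bbN$ satisfies exactly one of these two conditions, the stated bound holds for all $\ell'$. There is nothing further to prove beyond invoking the two lemmas and noting that the constant $\alpha(\tau_1, \tau_2, z)$ appearing in them is literally the same expression, so no reconciliation of constants is needed.

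Before quoting the two lemmas I would also note the one remaining hypothesis: both \Cref{lem:prophet-prob-ratio-top} and \Cref{lem:prophet-prob-ratio-bottom} are stated under the standing assumption $\cD = \cD'$ (this is the regime of \Cref{sec:three-phase-prophet}) and require $\tau_1 \neq \tau_2$ together with $z$ sufficiently large; these are exactly the hypotheses carried over into the corollary, so the implication is immediate. I would additionally remark that \Cref{ass:a-value} (setting $\ell = z+2$, i.e. $a=1$) is in force throughout this subsection, and that \Cref{lem:prophet-prob-ratio-bottom}'s proof is the one that uses it; since the corollary inherits the same setting, nothing extra is needed.

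The only genuine content — and hence the ``main obstacle'' — lies entirely in the two lemmas being combined, not in the combination itself. In \Cref{lem:prophet-prob-ratio-top} one must integrate the conditional acceptance probabilities of \Cref{lem:condnl-prob-low-j} over the uniform arrival time $t \in [\tau_1,1]$, sum over $j \leq \ell'$, and then compare against $\Pr{\MAX(\bX_+) \geq x^{(\ell')}}$ using two separate upper bounds on the latter (one valid for small $\ell'$, one for large), which is where the $e^{-\tau_1} - e^{-\tau_2}$ term and the $\gamma = \tau_1\log(1/\tau_1)$ term both emerge. In \Cref{lem:prophet-prob-ratio-bottom} one bootstraps from the $\ell' = \ell$ case and additionally integrates \Cref{lem:condnl-prob-high-j} over $t \in [\tau_2,1]$ for the indices $j > \ell$. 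By contrast, assembling these into Corollary~\ref{lem:prophet-prob-ratio} is a one-line argument: the union of $\{\ell' \leq \ell\}$ and $\{\ell' > \ell\}$ is all of $\bbN$, and the bound with the common constant $\alpha(\tau_1,\tau_2,z)$ therefore holds universally.

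\begin{proof}
    Fix $\tau_1 \neq \tau_2$, take $z$ sufficiently large, and assume $\cD = \cD'$, so that the hypotheses of both \Cref{lem:prophet-prob-ratio-top} and \Cref{lem:prophet-prob-ratio-bottom} are met. Let $\ell' \in \bbN$ be arbitrary. If $\ell' \leq \ell$, then \Cref{lem:prophet-prob-ratio-top} gives
    \[
        \Pr{\ALG(\bX) \geq x^{(\ell')}} \geq \alpha(\tau_1, \tau_2, z) \cdot \Pr{\MAX(\bX_+) \geq x^{(\ell')}}.
    \]
    If instead $\ell' > \ell$, then \Cref{lem:prophet-prob-ratio-bottom} gives exactly the same inequality, with the same constant $\alpha(\tau_1, \tau_2, z)$. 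Since every $\ell'$ falls into one of these two cases, the bound holds for all $\ell'$.
\end{proof}
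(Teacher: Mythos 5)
Your proof is correct and takes essentially the same approach as the paper: the paper derives the corollary with the single sentence "Since this holds for both $\ell' \leq \ell$ and $\ell' > \ell$, \cref{lem:prophet-prob-ratio-top,lem:prophet-prob-ratio-bottom} together imply [the claim]," which is exactly your case split with the same observation that the constant $\alpha(\tau_1,\tau_2,z)$ is identical in both lemmas.
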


\subsection{Joint Competitive Ratio Guarantees} \label{sec:alg-competitive-ratios}

\begin{figure}
    \centering
    \begin{tikzpicture}[x=12cm, y=8cm]
    \draw[-{Latex}] (0,0) -- (0.4,0) node[right] {$\beta$};
    \draw[-{Latex}] (0,0) -- (0,0.82) node[above] {$\alpha$};
    
    \foreach \x in {0,1/e}
        \draw (\x,0.01) -- (\x,-0.01) node[below] {\x};
    
    \foreach \y in {0, 1/e, 1-1/e}
        \draw (0.01,\y) -- (-0.01,\y) node[left] {\y};
        
    \draw[fill=black] (0,0.745) circle[radius=0.5pt] node[left, black] {$(0,\beta_0)$};
    \draw[fill=black] ({1/exp(1)},{1/exp(1)}) circle[radius=0.5pt] node[right, black] {$(\frac{1}{e},\frac{1}{e})$};
    \draw[fill=black] (0.313481,0.586078) circle[radius=0.5pt] node[right, black] {$(\gamma,\alpha(\gamma))$};

    \draw[blue, dashed] (0,0.745) -- ({1/exp(1)},{1/exp(1)});

    \draw[red, smooth, tension=0.5] plot[smooth] file {tikz_lambert-pts.txt};

    \draw[red, dashed] (0,0.745) -- (0.230501,0.652724);

    \draw[gray,dotted] (0,0.745) -- ({1/exp(1)},0.745);
    \draw[gray,dotted] ({1/exp(1)},0.745) -- ({1/exp(1)},0);
    
\end{tikzpicture}
    \caption{
    The lower bound on the Pareto frontier for \threephase{} is derived by jointly optimizing the bounds given by \Cref{thm:alg-secretary-guarantee} and \Cref{thm:alg-prophet-guarantee} (solid red). 
    This enables a randomized strategy which improves everywhere (dashed red) over the naive interpolative strategy (dashed blue).
    }
    \label{fig:frontiers-populated}
\end{figure}
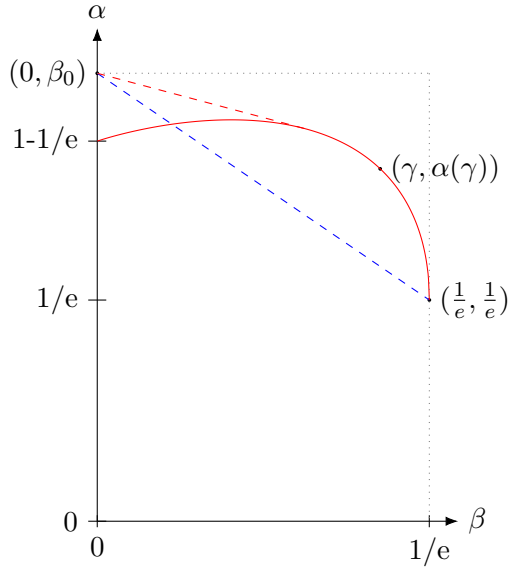

\Cref{lem:prophet-prob-ratio} gives approximate stochastic dominance in the ordinal form of \eqref{eq:approx-stoch-dom-ordered} for the prophet setting, and \Cref{lem:three-phase-prob-ratio-bound} provides something analogous for the top order statistic of $\bX_+$ in the secretary setting.
Our first task is to recover lower bounds on the competitive ratios of \threephase{} in the prophet and secretary settings presented in \Cref{sec:alg-approach}.

For the secretary setting, \Cref{thm:alg-secretary-guarantee} follows from first showing that \Cref{lem:three-phase-prob-ratio-bound} implies approximate stochastic dominance, then proceeding to a competitive ratio guarantee in the usual way.
A proof appears in \Cref{app:algos}.

For the prophet setting, the proof of \Cref{thm:alg-prophet-guarantee} essentially follows from the observation that if $\cD' = \cD$ then all of the decisions of \threephase{} are based on ordinal comparisons between i.i.d.\ arrivals in different phases.
Therefore the order statistic selection probabilities hold independently of the realizations of the order statistic values, and the cumulative probability guarantees translate into expectation guarantees.
Again the proof appears in \Cref{app:algos}.

It remains to establish what joint guarantees we can obtain, and what resulting lower bound on the competitive ratio Pareto frontier we can establish that improves upon the naive interpolation (\Cref{fig:frontiers-unpop}).
By our choice of $\tau_1$ and $\tau_2$ in \Cref{eq:ass-tau1-tau2}, and by \Cref{thm:alg-secretary-guarantee}, we may fix a secretary competitive ratio $\gamma$ and then read off a simultaneous prophet guarantee from \Cref{thm:alg-prophet-guarantee} for \threephase{}. 
This is the red curve in \Cref{fig:frontiers-populated}, plotted in the limit as $z \rightarrow \infty$.

However since this prophet guarantee approaches $1 - \frac{1}{e} < \beta_0 \approx 0.745$ as $\gamma$ approaches $0$, our guarantee for \threephase{} only dominates mixtures of pure secretary-optimal and i.i.d.-prophet-optimal strategies for large values of $\gamma$. In order to improve the whole frontier, for small $\gamma$ we may again interpolate between \threephase{} at some critical value of $\gamma$, and the worst-case optimal prophet strategy which gives $(0, \beta_0)$.
This interpolation is shown in dashed red in \Cref{fig:frontiers-populated}.

\section{The Impossibility of Simultaneous Optimality} \label{sec:hardness}

Having shown that the strategy of randomizing between the prophet and secretary policies can be improved upon, one might wonder if there is even a trade-off between the two settings. In this section, we show that indeed some joint $(\alpha,\beta)$ guarantees are unattainable. 

\begin{theorem}\label{thm:negative-result}
    For any (possibly adaptive) policy, and in the limit as the rate $\lambda$ becomes large, the consistency-robustness guarantees must satisfy
    \begin{align*}
        \alpha \leq \beta_0 - \frac{1}{76}\left(\eps \delta  -\frac{15}{4} \eps^3 \right)^3
        \qquad \text{and} \qquad 
        \beta \leq 1 - (1 - \epsilon)\left(\delta^{\frac{\delta}{1-\delta}} - \delta^{\frac{1}{1-\delta}}\right)\notag
    \end{align*}
    for all $\epsilon \in [0,1/8]$ and $\delta > 0$.
\end{theorem}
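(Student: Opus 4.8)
The plan is to construct a single family of Poisson instances that simultaneously forces the two bounds, and to read off the parameter $\delta$ from any given policy. Take $\cD^\star=\cD^\star_\lambda$ to be the tight i.i.d.\ prophet instance of \cite{liu21variable} (witnessing optimality of $\beta_0$ in the known-distribution setting), rescaled and Poissonized at rate $\lambda$; only its limiting optimal value profile will matter, and adapting the discrete construction to finite rate is routine. This $\cD^\star$ serves as the advice $\cDp$ in both scenarios, and also as the true $\cD$ in the consistency scenario. Fix the value $b$ above which ``large'' arrivals occur in this instance, and let $\cD^{\star\star}$ be a perturbation that (i) agrees with $\cD^\star$ on the process of arrivals with value at most $b$, (ii) produces arrivals with value just above $b$ at a strictly higher rate, and (iii) adds a vanishing-rate tail of extremely large values whose contribution to $\Ex{\MAX}$ does not vanish. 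The crucial design requirement is that $\cD^\star$ and $\cD^{\star\star}$ induce \emph{the same} joint law of the pair ``(history of sub-$b$ arrivals before the first super-$b$ arrival; first super-$b$ arrival and its value)''. Consequently, for any (possibly adaptive) policy, the conditional probability that it rejects the first super-$b$ arrival given that this arrival occurs in $[0,\eps]$ is the same whether the truth is $\cD^\star$ or $\cD^{\star\star}$; this common quantity is the parameter $\delta$, and every policy has one. The instance $\cD^{\star\star}$ is the true $\cD$ in the robustness scenario.

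For the robustness bound, run the policy on $\cD^{\star\star}$. Because super-$b$ arrivals come at an elevated rate, the first one tends to land early, so conditioning on it being in $[0,\eps]$ is far from vacuous; with the remaining probability $1-\delta$ the policy stops on it, collecting a value comparable to $b$ and thereby forfeiting the rare extreme arrivals that account for most of $\Ex{\MAX(\cD^{\star\star})}$. Iterating this over the self-similar level structure of the super-$b$ part of $\cD^{\star\star}$---at each level the visible history still looks like a $\cD^\star$-history, so the same conditional rejection probability $\delta$ applies---bounds from below the probability that the policy commits before reaching the extreme values, with the factor $\delta^{\delta/(1-\delta)}-\delta^{1/(1-\delta)}=(1-\delta)\delta^{\delta/(1-\delta)}$ emerging from this geometric iteration and an extra $(1-\eps)$ charged for the chance that an extreme value itself falls in the already-wasted window $[0,\eps]$. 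Dividing by $\Ex{\MAX(\cD^{\star\star})}$ gives $\beta\le 1-(1-\eps)\big(\delta^{\delta/(1-\delta)}-\delta^{1/(1-\delta)}\big)$. The point is that this uses only the policy's action on the \emph{first} super-$b$ arrival at each level, so the elevated rate is genuinely undetectable.

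For the consistency bound, run the policy on $\cD^\star$ and show $\delta$-hesitation in $[0,\eps]$ cannot be undone. Write $r(t)$ for the policy's expected reward from arrivals in $[t,1]$ conditioned on not having stopped, and $G(s)\defeq\int_s^\infty \lambda(1-\CDF(x))\,dx$. Every policy satisfies $r'(t)\ge -G(r(t))$, with equality exactly when it accepts every arrival exceeding its current level $r(t)$; rather than wrestle with this recursion, apply the inverse-function theorem to get $(r^{-1})'(s)\le -1/G(s)$, and integrating over $[0,\eps]$ and $[\eps,1]$ separately reproduces $\int_0^{r(0)} ds/G(s)\le 1$---which for the tight instance is exactly $r(0)\le\beta_0\Ex{\MAX}$. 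The improvement is to show that on the event (of probability at least $\eps\delta$, up to an $O(\eps^3)$ correction absorbing the chance the super-$b$ arrival lands late in $[0,\eps]$, the fact that rejecting a value $\approx b$ is not a total loss, and the slack in $[0,\eps]$ that even the optimum incurs) that the policy rejects a super-$b$ arrival it should have taken, the $[0,\eps]$ part of the integral inequality is strict by a quantifiable amount, so the budget available for $[\eps,1]$ shrinks and hence so does $r(0)$. Propagating this deficit through the differential inequality with explicit---if deliberately lossy---constants yields $\alpha\le\beta_0-\tfrac{1}{76}\big(\eps\delta-\tfrac{15}{4}\eps^3\big)^3$ for $\eps\le 1/8$ (the range where the correction term and the positivity conditions used in the propagation hold). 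The two displays, both parameterized by the same $\delta$, are the theorem; for $\eps$ small and $\delta$ bounded away from $0$ and $1$ one simultaneously gets $\alpha<\beta_0$ and $\beta<1/e$.

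The main obstacle I expect is the consistency bound: turning the informal statement ``hesitating on a large arrival early wastes about $\eps\delta$ of value that the tight instance leaves no room to recover'' into a bound valid for \emph{every} adaptive policy requires the inverse-function device together with a careful analysis of $G$ (equivalently, of the limiting optimal value profile) near the top of the support, and the cubic loss and the constant $1/76$ are precisely the price of making that propagation robust to arbitrary policies rather than only near-optimal ones. A secondary, more delicate point is pinning down $\cD^{\star\star}$ so that ``the visible history still looks like a $\cD^\star$-history'' holds literally at every level of the iteration, which is what licenses reusing the single parameter $\delta$ throughout the robustness argument.
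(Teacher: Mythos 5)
Your high-level decomposition is the paper's: fix the Poissonized Liu--Taggart hard instance, parameterize each policy by the conditional probability $1-\delta$ of stopping on the first super-$b$ arrival in $[0,\eps]$, and prove a consistency bound (Lemma~\ref{lem:prophet-must-stop-early}) and a robustness bound (Lemma~\ref{lem:sec-cant-stop-early}) each as a function of $\delta$. Your consistency sketch also correctly identifies the inverse-function-theorem device used in Lemma~\ref{lem:proph-imposs-CR-bound}, so that half of the plan is on track, if vague about where the cubic loss and the $1/76$ come from.

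The robustness half has a genuine gap. You claim that $\cD^\star$ and $\cD^{\star\star}$ ``induce the same joint law'' of (sub-$b$ history; first super-$b$ arrival and its value), and conclude that the conditional rejection probability $\delta$ given an arrival in $[0,\eps]$ is literally the same on both instances. That is false: the modification increases the super-$b$ arrival rate from $\lambda'$ to $\alpha\lambda'$, so the conditional density of the first super-$b$ arrival time on $[0,\eps]$ changes from $\propto \lambda' e^{-\lambda' t}$ to $\propto \alpha\lambda' e^{-\alpha\lambda' t}$. What transfers across instances is the policy's per-time acceptance function $f(t)$ (the probability of stopping given the first super-$b$ arrival is exactly at time $t$), not the averaged scalar $\delta$. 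The paper's proof of Lemma~\ref{lem:sec-cant-stop-early} is built precisely around this distinction: it expresses the $\cD^\star$ hypothesis as $\int_0^1 f(t(x))\,dx \ge 1-\delta$ after a change of variables, invokes an auxiliary monotone-kernel inequality to transfer that bound to the modified-rate integral $\int_0^\eps f(t)\,\alpha\lambda' e^{-\alpha\lambda' t}\,dt$, and then a single algebraic choice $\alpha = \ln(1/\delta)/((1-\delta)\lambda'\eps)$ produces $\delta^{\delta/(1-\delta)} - \delta^{1/(1-\delta)}$. Your ``self-similar level structure'' and ``geometric iteration'' are not the mechanism here; there is no hierarchy of levels in the paper's $\cD^{\star\star}$, just one elevated rate for values just above $b$ plus a single vanishing rate $\lambda''$ of extremely large values, and the $(1-\eps)$ factor is the $\lambda''\to 0$ limit of $(e^{-\lambda''\eps}-e^{-\lambda''})/(1-e^{-\lambda''})$ rather than a correction for extreme values in the wasted window. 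To repair your argument you would need to replace the scalar-$\delta$ transfer by the per-time transfer of $f$ and supply the monotone-kernel lemma; without it the robustness bound, and in particular its exact form, does not follow.
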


This theorem is the direct consequence of \Cref{lem:prophet-must-stop-early,lem:sec-cant-stop-early} in \Cref{sec:poisson} below.

\subsection{Poisson Hardness Implies \texorpdfstring{$n$}{n}-Arrival Hardness} \label{sec:hardness-transfer}

We establish our joint impossibility in the Poisson setting for the sake of analysis, since certain conditional expectations and event probabilities are more tractable to work with under Poisson arrivals.
However we now show that this choice is without loss of generality.
In the limit as the number of arrivals becomes large, impossibility in the Poisson arrivals setting implies impossibility in the fixed $n$ setting.

By \Cref{lem:poisson-imposs-implies-fixed-n-imposs}, our impossibility (\Cref{thm:negative-result}) transfers to the fixed $n$ i.i.d.\ arrival setting:
\begin{theorem}[Joint impossibility for fixed arrivals]\label{thm:negative-result-fixed-arrivals}
    For any (possibly adaptive) policy, and in the limit as the number of arrivals $n$ becomes large, the consistency-robustness guarantees must satisfy
    \begin{align*}
        \alpha \leq \beta_0 - \frac{1}{76}\left(\eps \delta  -\frac{15}{4} \eps^3 \right)^3
        \qquad \text{and} \qquad 
        \beta \leq 1 - (1 - \epsilon)\left(\delta^{\frac{\delta}{1-\delta}} - \delta^{\frac{1}{1-\delta}}\right)\notag
    \end{align*}
    for all $\epsilon \in [0,1/8]$ and $\delta > 0$.
\end{theorem}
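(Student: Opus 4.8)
The plan is to obtain \Cref{thm:negative-result-fixed-arrivals} purely by transfer from the Poisson impossibility \Cref{thm:negative-result}, using the reduction of \Cref{lem:good-fixed-n-implies-good-poisson} (equivalently, the contrapositive form \Cref{lem:poisson-imposs-implies-fixed-n-imposs}). The only point requiring care is that the target is a \emph{joint} statement about a single policy, so one cannot simply invoke the coordinate-wise corollary twice; instead we exploit that the reduction is a single fixed construction, and hence preserves the pair $(\alpha,\beta)$ simultaneously.

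Concretely, I would fix any (possibly adaptive) $n$-arrival policy $\ALG_n$ that is simultaneously $\alpha$-consistent and $\beta$-robust, together with an auxiliary reduction parameter $\eta\in(0,1]$. Applying \Cref{lem:good-fixed-n-implies-good-poisson} with its parameter set to $\eta$ produces a Poisson policy $\ALG_\lambda$ at rate $\lambda=\eta n$ that simulates $\ALG_n$: it samples Poisson arrival times, keeps the first $n$ values, and runs $\ALG_n$ on them, feeding it the discretized advice $\tilde{\cD}'$ in place of $\cD'$. Since this construction does not depend on whether the advice is correct, the stated multiplicative loss applies in both regimes at once: $\ALG_\lambda$ is $\alpha(1-\eta/2)$-consistent and $\beta(1-\eta/2)$-robust, using only that the consistent case $\cD'=\cD$ discretizes to $\tilde{\cD}'=\tilde{\cD}$, so consistency maps to consistency.

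Next, because $\lambda=\eta n\to\infty$ as $n\to\infty$ for fixed $\eta$, \Cref{thm:negative-result} applies to $\ALG_\lambda$ in the large-rate limit and forces, for every $\eps\in[0,1/8]$ and $\delta>0$,
\[
    \alpha\Bigl(1-\frac{\eta}{2}\Bigr)\le \beta_0-\frac{1}{76}\Bigl(\eps\delta-\frac{15}{4}\eps^3\Bigr)^3
    \quad\text{and}\quad
    \beta\Bigl(1-\frac{\eta}{2}\Bigr)\le 1-(1-\eps)\Bigl(\delta^{\frac{\delta}{1-\delta}}-\delta^{\frac{1}{1-\delta}}\Bigr).
\]
The right-hand sides do not depend on $\eta$, so sending $\eta\to 0$ removes the $(1-\eta/2)$ factors and recovers exactly the claimed inequalities for $\alpha$ and $\beta$; quantitatively, for each fixed $\eta$ the bound holds up to an $o_n(1)$ slack coming from the ``large $\lambda$'' hypothesis of \Cref{thm:negative-result}, so the clean statement is obtained by taking $n\to\infty$ first and then $\eta\to 0$.

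I do not expect a genuine obstacle here: all of the combinatorial and analytic content sits inside \Cref{thm:negative-result} and the already-established transfer lemma. The only things to verify are (i) that the reduction genuinely preserves consistency and robustness at the same time, which is immediate from its description since the construction is oblivious to the correctness of the advice, and (ii) the bookkeeping of the iterated limit, so that the informal phrase ``in the limit as the number of arrivals $n$ becomes large'' is made precise via the order of limits above.
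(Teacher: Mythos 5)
Your proposal is correct and takes essentially the same route as the paper, which proves \Cref{thm:negative-result-fixed-arrivals} by a one-line invocation of \Cref{lem:poisson-imposs-implies-fixed-n-imposs}. You have added a genuine piece of care that the paper leaves implicit: \Cref{lem:poisson-imposs-implies-fixed-n-imposs} as stated is a single-ratio contrapositive, so it cannot literally be applied twice to conclude anything about a \emph{joint} $(\alpha,\beta)$ guarantee; instead one must, as you do, unwind it to the underlying construction in \Cref{lem:good-fixed-n-implies-good-poisson} and observe that this same fixed Poisson-simulation of $\ALG_n$ (with the advice $\cD'$ discretized to $\tilde\cD'$, so that $\cD'=\cD$ maps to $\tilde\cD'=\tilde\cD$) incurs the multiplicative loss $1-\eta/2$ on consistency and robustness simultaneously. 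Your handling of the iterated limit is also correct: for each fixed $\eta>0$ the rate $\eta n\to\infty$, so \Cref{thm:negative-result} applies and yields the bound with a $1-\eta/2$ slack, and then $\eta\to 0$ removes it. In short, you reproduce the paper's argument but make explicit the step that justifies reading the reduction as preserving the pair $(\alpha,\beta)$, which is the only nonobvious part of the transfer.
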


The remainder of this section is dedicated to establishing impossibility in the Poisson setting.

\subsection{Impossibility for the Known i.i.d.\ Prophet Inequality} \label{sec:poisson}

Before coming to the proof of the consistency-robustness tradeoff, let us first discuss the limits of policies in the known-i.i.d.\ setting. That is, what is possible given perfect distributional knowledge.

A well-known and easily derived fact is the following structure of optimal policies.

\begin{restatable}{lemma}{optimalpolicystructure}
\label{lem:opt-policy}
Consider any policy. Let $r(t)$ be the expected reward of this policy when started at time $t$. Then for the function $r\colon [0, 1] \to \mathbb{R}$,
\[
    r'(t) \geq - \int_{r(t)}^\infty \left(1 - \CDF(x)\right) dx.
\]
Moreover, for an optimal policy, this holds with equality and it accepts an arrival of value $x$ at time $t$ if $x > r(t)$, rejects it if $x < r(t)$, and behaves arbitrarily if $x = r(t)$.
\end{restatable}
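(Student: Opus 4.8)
The statement to prove is \Cref{lem:opt-policy}, which characterizes the behavior of optimal policies via a differential inequality on the expected continuation reward $r(t)$.

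\textbf{Plan.} The plan is to analyze what happens to the continuation reward over an infinitesimal time window $[t, t+dt]$. In the Poisson model with rate $n$ (writing $n$ for the rate $\lambda$), during $[t, t+dt]$ an arrival occurs with probability $n\,dt$, and conditioned on an arrival, its value $X$ is distributed according to $\cD$ with CDF $\CDF$. First I would set up a dynamic-programming / optimality-equation argument: any policy, when at time $t$ facing no arrival in $[t,t+dt]$, continues and gets $r(t+dt)$; when facing an arrival of value $x$, it may either accept (getting $x$) or reject (getting $r(t+dt)$). Thus for any policy,
\[
    r(t) \leq (1 - n\,dt)\, r(t+dt) + n\,dt\cdot \expectover{X\sim\cD}{\max\{r(t+dt),\, X\}} + o(dt),
\]
with equality for an optimal policy that makes the greedy choice $\max\{r(t+dt), X\}$ given an arrival. (The inequality direction: an arbitrary policy does at most as well as always choosing optimally in this window and thereafter; one must be a little careful that $r(t)$ is defined as the value of the \emph{fixed} policy, so strictly the clean statement is that $r$ of \emph{any} policy satisfies the inequality because that policy's behavior in the window is one feasible choice, hence no better than the max — I would phrase this as: $r(t) = \expectation[\text{reward of policy from }t]$ and condition on the window, noting the policy's in-window action is dominated by the pointwise max.)

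\textbf{Key steps.} Second, rearrange: $r(t) - r(t+dt) \leq -n\,dt\, r(t+dt) + n\,dt\, \expectation[\max\{r(t+dt),X\}] + o(dt)$, and use the identity $\expectation[\max\{c, X\}] - c = \expectation[(X-c)^+] = \int_c^\infty (1 - \CDF(x))\,dx$ for any constant $c$. This gives
\[
    r(t) - r(t+dt) \leq n\,dt \int_{r(t+dt)}^\infty (1 - \CDF(x))\,dx + o(dt).
\]
Dividing by $dt$ and letting $dt \to 0$ yields $-r'(t) \leq n\int_{r(t)}^\infty (1-\CDF(x))\,dx$, i.e.\ $r'(t) \geq -\int_{r(t)}^\infty n(1-\CDF(x))\,dx$ (the factor $n$ being absorbed into the measure as in the paper's normalization). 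Third, for the optimal policy, the chain of inequalities becomes equalities — the optimality equation holds with equality — and one reads off the acceptance rule directly: given an arrival of value $x$ at time $t$, accepting yields $x$ and rejecting yields (essentially) $r(t)$ in the limit, so the optimal action is to accept iff $x > r(t)$, reject iff $x < r(t)$, and either is optimal if $x = r(t)$. I would also remark that $r$ is Lipschitz (hence differentiable a.e., and in fact continuously differentiable once it satisfies the ODE with equality), which justifies manipulating $r'$; monotonicity $r' \le 0$ follows since a policy at time $t$ can ignore everything until $t+dt$.

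\textbf{Main obstacle.} The main subtlety — more a matter of rigor than of difficulty — is justifying the infinitesimal expansion cleanly: making precise that the probability of two or more arrivals in $[t,t+dt]$ is $O(dt^2)$ and contributes only to the $o(dt)$ term, and that $r$ is regular enough (continuous, Lipschitz with constant related to $\expectation[\MAX]$ or to $\int(1-\CDF)$) for the limit $dt\to 0$ to be taken and for $r(t+dt)$ inside the integral to be replaced by $r(t)$ with negligible error (using continuity of $c \mapsto \int_c^\infty(1-\CDF(x))\,dx$). I expect this to be the part requiring the most care, though since the lemma is described as ``well-known and easily derived,'' a proof at the level of the above heuristic expansion — with a sentence noting the standard regularity and the negligibility of multi-arrival events — should suffice.
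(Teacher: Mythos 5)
Your proposal follows essentially the same route as the paper: expand $r(t)-r(t+\eps)$ over an infinitesimal window, bound the in-window action by the one-step lookahead $\max\{r(t+\eps), Z\}$ where $Z$ is the best arrival in the window, apply the identity $\expectation[\max\{c, X\}]-c=\int_c^\infty(1-\CDF(x))\,dx$, and pass to the limit. The one place you are slightly less explicit than the paper is the "moreover" clause: the paper actually establishes equality by exhibiting the reverse inequality $r(t)-r(t+\eps)\ge\int_{r(t+\eps)}^\infty(1-e^{-n\eps})(1-\CDF(x))\,dx$ for the threshold policy that accepts every value above $r(t+\eps)$, whereas you appeal to "the chain of inequalities becoming equalities" without verifying that a policy achieving equality exists — a small but genuine step you should fill in. Your explicit handling of the $O(dt^2)$ multi-arrival term and of regularity of $r$ is at least as careful as the paper's.
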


Relying on this characterization, one can show that one cannot achieve competitive ratios better than $\beta_0 \approx 0.745$. For our argument, we will rely on a hard instance defined by \cite{liu21variable} in the $n$-arrival setting, which we now translate to the Poisson setting with a finite rate.

To state it, let $\beta_n$ be the solution to
\begin{equation} \label{eq:bn-defn}
    \int_{e^{-n}}^1 ((\beta_n^{-1}-1) - y( \log y - 1))^{-1} dy = 1.
\end{equation}
Define $\tilde{y}$ as the unique solution to the following differential equation:
\begin{equation}\label{eq:y-tilde-defn}
    \tilde{y}'(t) = \tilde{y}(t) (\ln \tilde{y}(t) - 1) - (\beta_n^{-1} - 1) \qquad \tilde{y}(0) = 1.
\end{equation}
Note that $\tilde{y}$ is decreasing with $\tilde{y}(0) = 1$ and $\tilde{y}(1) = e^{-n}$. Furthermore, define $\tilde{r}^\ast\colon [0, 1] \to \mathbb{R}$ by 
\begin{equation} \label{eq:r-defn}
    \tilde{r}^\ast(t) \defeq - \int_t^1 \frac{1}{\tilde{y}'(s)} ds. 
\end{equation}
This function is also decreasing and it holds that $\tilde{r}^\ast(1) = 0$. Therefore, the inverse $(\tilde{r}^{\ast})^{-1}$ is a well-defined decreasing function mapping $[0, \tilde{r}^\ast(0)]$ to $[0, 1]$.

We define the distribution of arriving values to be
\begin{equation} \label{eq:hard-dist-CDF}
    \CDF(x) \defeq \begin{cases} 0 & \text{ for $x < 0$} \\
    1 + \frac{1}{n} \ln \tilde{y}((\tilde{r}^\ast)^{-1}(x)) & \text{ for $0 \leq x \leq \tilde{r}^\ast(q)$} \\
    1 + \frac{1}{n} \ln \tilde{y}(q) & \text{ for $\tilde{r}^\ast(q) < x \leq H$} \\
    1 & \text{ for $x > H$}
    \end{cases}
\end{equation}
where $H:= \frac{1}{\tilde y'(q) \ln \tilde y(q)} + \tilde{r}^\ast(q)$.
Note that this is a feasible choice of a CDF. The function $\ln \circ \tilde{y} \circ (\tilde{r}^\ast)^{-1}$ is increasing.
Furthermore, it is bounded below by $-n$ and bounded above by $0$, and so $\CDF(x)$ is an increasing function with range $[0,1]$, regardless of $q \in [0,1]$.
By this definition, we also have
\begin{equation}\label{eq:CDFdefn}
    \CDF(\tilde{r}^\ast(t)) = 1 + \frac{1}{n} \ln \tilde{y}(t).
\end{equation}

\begin{theorem} \label{thm:kertz-hardness-n-asymptotic}
    For the instance \eqref{eq:hard-dist-CDF}, the optimal policy has a competitive ratio of $\beta_0 + O(e^{-n})$. 
\end{theorem}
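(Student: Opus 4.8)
The plan is to pin down the reward function of the optimal policy using \Cref{lem:opt-policy}, identify it with the auxiliary function $\tilde r^\ast$ on the ``smooth'' part of the support of $\CDF$ (solving an elementary linear ODE on the part that corresponds to the point mass at $H$), compute $\expect{\MAX(\bX)}$ in closed form via the Poisson maximum formula, and then show the resulting ratio equals $\beta_n$ up to an additive $O(e^{-n})$; a separate, self-contained argument then gives $\beta_n = \beta_0 + O(e^{-n})$.

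First I would apply \Cref{lem:opt-policy}: the optimal policy's reward function $r\colon[0,1]\to\mathbb R$ satisfies $r(1)=0$ and the equality form of its differential inequality, $r'(t) = -n\int_{r(t)}^\infty(1-\CDF(x))\,dx$. Differentiating the ODE~\eqref{eq:y-tilde-defn} yields the identity $\tilde y'' = \tilde y'\ln\tilde y$, and \eqref{eq:r-defn} gives $(\tilde r^\ast)'(t) = 1/\tilde y'(t)$. I would then verify that $\tilde r^\ast$ solves the optimality ODE on the sub-interval $[q,1]$, where $\tilde r^\ast(t)\le\tilde r^\ast(q)$: substituting $x = \tilde r^\ast(s)$ in $\int_{\tilde r^\ast(t)}^\infty(1-\CDF(x))\,dx$, using \eqref{eq:CDFdefn} on the smooth piece $[0,\tilde r^\ast(q)]$ and noting that $\int_{\tilde r^\ast(q)}^\infty(1-\CDF(x))\,dx = (H-\tilde r^\ast(q))\cdot(-\tfrac1n\ln\tilde y(q)) = -\tfrac1{n\tilde y'(q)}$ by the definition of $H$, the identity to be checked reduces to precisely $\tilde y'' = \tilde y'\ln\tilde y$. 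Since the right-hand side of the optimality ODE is Lipschitz in $r$ (its $r$-derivative is $n(1-\CDF(r))$), uniqueness of solutions and $\tilde r^\ast(1)=0$ force $r=\tilde r^\ast$ on $[q,1]$. On $[0,q)$, where $r(t)$ lies in the flat part $(\tilde r^\ast(q),H]$ of the support so that the only value above $r(t)$ is the atom at $H$, the optimality ODE degenerates to the linear equation $r'(t) = (H-r(t))\ln\tilde y(q)$, which I would solve with $r(q)=\tilde r^\ast(q)$ to obtain $r(0) = H - (H-\tilde r^\ast(q))\,\tilde y(q)^q$.

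Next I would compute $\expect{\MAX(\bX)}$. For Poisson arrivals of rate $n$ with value CDF $\CDF$, the maximum satisfies $\Pr{\MAX(\bX)\le x} = e^{-n(1-\CDF(x))}$; by \eqref{eq:CDFdefn} this gives $\Pr{\MAX(\bX)\le\tilde r^\ast(t)} = \tilde y(t)$ on $[0,\tilde r^\ast(q)]$ and $\Pr{\MAX(\bX)\le x} = \tilde y(q)$ on $(\tilde r^\ast(q),H]$, so integrating $\Pr{\MAX(\bX)>x}$ over $[0,\infty)$ and substituting $x=\tilde r^\ast(s)$ yields $\expect{\MAX(\bX)} = \int_q^1\frac{1-\tilde y(s)}{-\tilde y'(s)}\,ds + (H-\tilde r^\ast(q))(1-\tilde y(q))$. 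I would then form $\beta_n^{-1}r(0) - \expect{\MAX(\bX)}$ and simplify it: using $-\tilde y'(s) = (\beta_n^{-1}-1) - \tilde y(s)(\ln\tilde y(s)-1)$, the identity $\ln\tilde y(s)/(-\tilde y'(s)) = \tfrac{d}{ds}\big(1/\tilde y'(s)\big)$ (again $\tilde y'' = \tilde y'\ln\tilde y$), one integration by parts, and the normalization $\int_q^1 ds/(-\tilde y'(s)) = 1-q$ inherited from \eqref{eq:bn-defn}, the difference collapses to a handful of boundary terms: those at $s=q$ combine with the atom contribution, and the one at $s=1$ is of order $e^{-n}/\big((\beta_n^{-1}-1)+(n+1)e^{-n}\big) = O(e^{-n})$. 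This shows the optimal competitive ratio equals $\beta_n + O(e^{-n})$.

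Finally I would show $\beta_n = \beta_0 + O(e^{-n})$: both are the root in $\beta$ of $\Phi_a(\beta) \defeq \int_a^1\big((\beta^{-1}-1)-y(\ln y-1)\big)^{-1}\,dy = 1$, with $a=e^{-n}$ for $\beta_n$ and $a=0$ for $\beta_0$. The integrand is bounded and continuous near $y=0$, so $\abs{\Phi_{e^{-n}}(\beta)-\Phi_0(\beta)} = \int_0^{e^{-n}}\big((\beta^{-1}-1)-y(\ln y-1)\big)^{-1}\,dy = O(e^{-n})$ uniformly for $\beta$ near $\beta_0$, while $\partial_\beta\Phi_a(\beta)$ is bounded away from $0$ there, so the implicit function theorem gives $\abs{\beta_n-\beta_0} = O(e^{-n})$. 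I expect the main obstacle to be the simplification of $\beta_n^{-1}r(0)-\expect{\MAX(\bX)}$: correctly handling the point mass at $H$ and the flat segment of $\CDF$ both in the verification of the optimality ODE for $\tilde r^\ast$ and in the reduction of the difference, and confirming that every residual term is genuinely $O(e^{-n})$ rather than merely $o(1)$. A secondary task is verifying the structural prerequisites --- that $\tilde r^\ast$ is strictly decreasing with $\tilde r^\ast(1)=0$ (so $(\tilde r^\ast)^{-1}$ exists on $[0,\tilde r^\ast(q)]$ and the split $[0,q)\cup[q,1]$ exhausts $[0,1]$), that $\tilde r^\ast(0)\le H$, and that \eqref{eq:hard-dist-CDF} defines a genuine CDF --- all of which follow from the monotonicity of $\tilde y$ and the sign $\tilde y'<0$ noted in the text.
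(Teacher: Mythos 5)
Your strategy mirrors the paper's: characterize the optimal policy via \Cref{lem:opt-policy}, identify it with $\tilde r^\ast$ on $[q,1]$, solve the linear ODE on $[0,q)$ for the atom at $H$, compute $\expect{\MAX(\bX)}$ via the Poisson maximum formula, and then massage the resulting ratio. The final step ($\beta_n \to \beta_0$) via the implicit function theorem is a valid alternative to the paper's hands-on Lemma~\ref{lem:modifiedbetabound}, which uses explicit monotonicity and error estimates; IFT is cleaner but needs one to confirm the uniform derivative bound.

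There is a genuine gap in the middle. You form $\beta_n^{-1} r(0) - \expect{\MAX(\bX)}$ and claim it ``collapses to a handful of boundary terms: those at $s=q$ combine with the atom contribution,'' concluding the ratio is $\beta_n + O(e^{-n})$. But for a \emph{fixed} $q$ this difference does not vanish. After the integration by parts you describe (using $\ln\tilde y/(-\tilde y') = \frac{d}{ds}(1/\tilde y')$), the $q$-dependent terms are
\[
-\frac{\tilde y(q)}{\tilde y'(q)} \;+\; \frac{\beta_n^{-1}\bigl(1-\tilde y(q)^q\bigr) - \bigl(1-\tilde y(q)\bigr)}{\tilde y'(q)\,\ln\tilde y(q)},
\]
and this expression is $\Theta(1)$ for, say, $q = 1/2$ (it is not of size $e^{-n}$). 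It vanishes only in the limit $q\to 0$, where both pieces tend to $\pm\beta_n$ via l'H\^{o}pital. This is exactly the step the paper's proof performs explicitly in \Cref{thm:hard-opt-CR}: it takes $q\to 0$, evaluates $\lim_q OPT_q$ and $\lim_q MAX_q$ separately (each needing l'H\^{o}pital), and only then shows the limiting ratio equals $\beta_n + \Theta(e^{-n})$. Your write-up skips this limit, so your conclusion does not follow from what you have asserted. Also, the claimed ``normalization $\int_q^1 ds/(-\tilde y'(s)) = 1-q$ inherited from \eqref{eq:bn-defn}'' is incorrect: the substitution $y = \tilde y(s)$ in \eqref{eq:bn-defn} yields the trivial $\int_q^1 ds = 1-q$ (indeed $\int_0^1 ds = 1$); the quantity $\int_q^1 ds/(-\tilde y'(s))$ equals $\tilde r^\ast(q)$, which is $\approx \beta_n$, not $1-q$. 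The rest of your manipulation can survive this correction --- the $\int_q^1 ds = 1-q$ identity is in fact what cancels after the integration by parts --- but as stated it is a wrong identity. Fixing both issues (take $q \to 0$, state the normalization correctly) brings your argument back in line with the paper's.
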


Our analysis of these instances proceeds in two parts.
We first show that the competitive ratio of the optimal policy approaches $\beta_n$, which is defined for a fixed rate in \eqref{eq:bn-defn}.
We then confirm that $\lim_{n \rightarrow \infty} \beta_n = \beta_0$ and that it converges at an exponential rate, which can also be derived from \cite[Theorem 3.2]{allaart07prophet}.
These together give a bound on the rate at which the optimal competitive ratios of these hard instances approach $\beta_0$. 
Omitted proofs appear in \Cref{app:poisson}.

\subsection{Good Prophet Policies Must Stop on Early High Values}

To show impossibilities, we will reason about the hard instances introduced in \eqref{eq:hard-dist-CDF} and the behavior of stopping rules that are nearly optimal for these hard instances.

\begin{restatable}{lemma}{lemprophetmuststopearly}
\label{lem:prophet-must-stop-early}
    Let $\epsilon \in [0,1/8]$ and $\delta > 0$. 
    Consider instance \eqref{eq:hard-dist-CDF}.
    Consider any policy that has the following property: Conditioned on there being an arrival of value at least $b < \tilde{r}^\ast(0)$ until time $\epsilon$, it stops on the first such value with probability at most $1 - \delta$.
    Then for sufficiently large rate $\lambda$ its competitive ratio is at most
\[
    \beta_n - \frac{1}{76}\left(\eps \delta  -\frac{3}{2} \eps^3 \left( \delta + \frac{3}{2} \right)\right)^3. 
\]
\end{restatable}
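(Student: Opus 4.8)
The plan is to show that a policy with this caution property must forgo a definite amount of value compared with the optimal policy, whose expected‑reward‑from‑time‑$t$ function is the decreasing function $\rts$ of \Cref{lem:opt-policy}: the instance \eqref{eq:hard-dist-CDF} is constructed (via \eqref{eq:CDFdefn} and the ODE \eqref{eq:y-tilde-defn}--\eqref{eq:r-defn}) so that the inequality there holds with equality for $\rts$, and by \Cref{thm:kertz-hardness-n-asymptotic} the competitive ratio of this optimal policy is $\beta_n$ up to an $O(e^{-n})$ term. We work with \eqref{eq:hard-dist-CDF} for a suitable choice of the parameter $q$, small enough that $H>\rts(0)$ and $\rts(q)\le b$ and that $\eps<q$; then the only arrivals of value at least $b$ are the atoms at $H$ — they form a Poisson process of rate $\mu:=-\ln\ty(q)$ — and the optimal policy accepts any such atom the instant it arrives while accepting nothing else before time $\eps$, so its reward splits exactly as $\rts(0)=\rho_\eps H+(1-\rho_\eps)\rts(\eps)$ where $\rho_\eps:=1-e^{-\mu\eps}$. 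In these terms the hypothesis says exactly: conditioned on an atom arriving before time $\eps$, the policy rejects the first one with probability at least $\delta$.

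Next I would upper bound the reward of an arbitrary, possibly adaptive, policy. A preliminary reduction lets us assume it never accepts a value below its own continuation value — this only increases its reward and, since $H>\rts(0)$, leaves its behaviour on atoms unchanged, so it preserves the hypothesis — and consequently its expected‑reward‑from‑time‑$t$ function $r$ is monotone decreasing, so $r^{-1}$ is well defined. \Cref{lem:opt-policy} gives $r'(t)\ge-\int_{r(t)}^\infty(1-\CDF(x))\,dx$, and the inverse function theorem converts this into $(r^{-1})'(s)\le\bigl(-\int_s^\infty(1-\CDF(x))\,dx\bigr)^{-1}$, with equality for $\rts$; integrating the latter from reward level $0$ bounds $r(0)$ from above. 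The caution hypothesis enters through the derivative: each time the policy faces an atom of value $H$ at a time $t\le\eps$ and rejects it, its per‑arrival shortfall $\Ex{(X-r(t))^{+}\mathbbm{1}[\text{reject }X]}$ relative to the optimal threshold rule is at least $H-r(t)\ge H-\rts(0)>0$, and since this rejection has conditional probability at least $\delta$ and the first atom's arrival time is essentially uniform on $[0,\eps]$, the rate $-r'$ falls strictly below the optimal rate on a reward band of width $\approx\rho_\eps$. Feeding this into the inverse‑function inequality shows $r^{-1}$ reaches $0$ at a reward strictly below $\rts(0)$; discounting the deficit for the chance that the policy still catches a later atom (an $O((\mu\eps)^2)$ event), and then substituting $\rho_\eps=\mu\eps-O((\mu\eps)^2)$ and bounding $\mu$, $|\ty'(q)|$ and $H-b$ from \eqref{eq:y-tilde-defn}--\eqref{eq:r-defn}, one collects a deficit of the stated third‑power form, giving $r(0)\le\rts(0)-\tfrac1{76}\bigl(\eps\delta-\tfrac32\eps^3(\delta+\tfrac32)\bigr)^3$. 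Dividing by $\Ex{\MAX(\bX)}=\rts(0)/\beta_n$ and absorbing the $O(e^{-n})$ of \Cref{thm:kertz-hardness-n-asymptotic} for large $\lambda$ gives the claimed bound on the competitive ratio.

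The main obstacle is that the policy is arbitrary and adaptive, so one cannot reason about an explicit state and $r$ solves no clean equation; the inverse‑function reformulation of \Cref{lem:opt-policy} is the device that makes it work, because it lower‑bounds the time any policy must have spent to bring its expected remaining reward to a given level while using the hypothesis only through the single event ``the first early atom is rejected''. A secondary difficulty — responsible for the lossiness of the bound and in particular for the $-\eps^3$ correction — is that rejecting an early atom is only a partial loss, since a later atom may still be caught, so the deficit has to be traded against $\rho_\eps$ and the tail of the atom process. One also has to verify carefully the structural facts used above — that $H>\rts(0)$, that the continuous part of \eqref{eq:hard-dist-CDF} puts no mass above $b$ for the chosen $q$, and that $\rho_\eps$, $\mu$ and $H-b$ obey the required estimates — since these are what make the comparison with the optimal policy's threshold behaviour exact.
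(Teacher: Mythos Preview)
Your approach differs substantially from the paper's and has a structural gap. You want to fix the instance parameter $q$ so that $\tilde{r}^\ast(q)\le b$, making the arrivals of value $\ge b$ exactly the atoms at $H$. But $\tilde{r}^\ast$ is decreasing, so this forces $q\ge(\tilde{r}^\ast)^{-1}(b)>0$, bounded away from zero; your phrase ``$q$ small enough that $\tilde{r}^\ast(q)\le b$'' has the monotonicity backward. The difficulty is that the hardness of instance \eqref{eq:hard-dist-CDF}---optimal competitive ratio $\beta_n+O(e^{-n})$---is established only in the limit $q\to 0$ (the proof of \Cref{thm:kertz-hardness-n-asymptotic} computes $\lim_{q\to 0}OPT_q/MAX_q$). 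For the positive $q$ you need, $OPT_q/MAX_q$ is not $\beta_n$, so your identity $\Ex{\MAX(\bX)}=\tilde{r}^\ast(0)/\beta_n$ is unjustified, and a bound $r(0)\le\tilde{r}^\ast(0)-(\text{deficit})$ does not convert into ``competitive ratio $\le\beta_n-(\text{something})$''. One might hope to balance the finite-$q$ error against the deficit by taking $q$ (hence $\tilde{r}^\ast(0)-b$) tiny, but then the atom rate $\mu=-\ln\tilde{y}(q)$ and $\rho_\epsilon$ are also tiny, and you give no argument that a cubic-order deficit survives this trade-off.

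The paper sidesteps this by never tying $q$ to $b$: it works with the essentially continuous instance and does \emph{not} assume high arrivals are atoms. Two auxiliary lemmas carry the weight you only sketch. \Cref{lem:proph-imposs-probabilities} converts the hypothesis about the \emph{first} arrival above $b$---a single conditional event---into a lower bound on $P=\int_0^\epsilon p(t)\,dt$, the integrated \emph{pointwise} rejection probability; your line ``since this rejection has conditional probability at least $\delta$ and the first atom's arrival time is essentially uniform'' elides precisely this step, which must also handle the policy stopping on intermediate values and the possibility of multiple high arrivals. Then \Cref{lem:proph-imposs-CR-bound} runs the inverse-function comparison you describe, but quantifies the derivative gap through $\kappa=(b-a)(1-\CDF(b))\big/\int_{\tilde{r}^\ast(\epsilon)}^\infty(1-\CDF(x))\,dx$ and a careful splitting of the $s$-integration; this is where the cubic exponent and the constants $\tfrac{1}{76}$, $\tfrac{3}{2}$ actually come from. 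Your sentence ``the rate $-r'$ falls strictly below the optimal rate on a reward band of width $\approx\rho_\epsilon$'' does not supply any of this quantitative content.
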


\subsubsection{Bounds that Hold for all Distributions}

The proof of \Cref{lem:prophet-must-stop-early} will be provided in \Cref{sec:proof-of-prophet-must-stop-early}. 
We first prove two very useful lemmas that hold regardless of distribution, lower-bounding the effect of being constrained not to stop early.

\begin{lemma} \label{lem:proph-imposs-CR-bound}
    Let $\tilde{r}^\ast(\epsilon) < b < \tilde{r}^\ast(0)$. Consider any policy. Let $p(t)$ be the probability that the policy does not accept an arrival of value at least $b$ at time $t$, conditional on there being such an arrival and not having stopped yet, and define $P \defeq \int_0^\epsilon p(t) dt$. Then its expected reward is no more than $\max\left\{ a, \: \tilde{r}^\ast\left(\frac{\kappa P (1-\gamma)}{2 (1 + \kappa)}\right)\right\}$, where $a \defeq \tilde{r}^\ast(\gamma P)$ 
    and 
    \[
        \kappa \defeq \frac{(b - a) \cdot (1 - \CDF(b))}{\int_{\rts(\eps)}^\infty (1-\CDF(x)) dx}.
    \]
\end{lemma}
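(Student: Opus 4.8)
\emph{Proof idea.}
The plan is to argue that, as soon as the policy's expected reward $E$ exceeds $a = \rts(\gamma P)$, the forced passing on values of size at least $b$ during $[0,\eps]$ must cost it a reward deficit $\rts(0)-E$ of order $P$, and then to translate this deficit into a shift of $\tfrac{\kappa P(1-\gamma)}{2(1+\kappa)}$ along the $\rts$-clock. Here $\rts$ denotes the optimal reward function, which by the equality part of \Cref{lem:opt-policy} satisfies $(\rts)'(t) = -\int_{\rts(t)}^\infty(1-\CDF(x))\,dx$, is decreasing, and has $\rts(1)=0$. I would start with two reductions. If $E\le a$ there is nothing to prove, since $a$ is the first argument of the maximum, so assume $E>a$. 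If $a\ge b$ then $\kappa\le 0$, so the second argument of the maximum is $\rts$ evaluated at a non-positive number, hence at least $\rts(0)\ge E$; so we may also assume $a<b$, equivalently $\kappa>0$. Since $P\le\eps$ forces $\gamma P\le\eps$, we then have $\rts(\eps)\le a<b<\rts(0)$, and in particular $E>a>\rts(\eps)$.

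The heart of the argument is a lower bound on the reward wasted on forced passes. For any policy, its expected continuation reward conditioned on not having stopped by time $t$ is at most that of the optimal policy, hence at most $\rts(t)$; so for every $t\in[\gamma P,\eps]$ it is at most $\rts(\gamma P)=a<b$. Thus throughout $[\gamma P,\eps]$ any arrival of value at least $b$ is strictly preferable to continuing, and by the tail formula each unit of conditional forced-pass probability at such a time destroys at least $\Ex{X-a\mid X\ge b}\cdot(1-\CDF(b))\ge (b-a)(1-\CDF(b))$ of expected reward. Since $p\le 1$ gives $\int_0^{\gamma P}p(t)\,dt\le\gamma P$ and hence $\int_{\gamma P}^{\eps}p(t)\,dt\ge(1-\gamma)P$, the total reward sacrificed over $[\gamma P,\eps]$ is of order $(b-a)(1-\CDF(b))(1-\gamma)P$; after building in slack for the facts that these per-pass losses do not simply add up and that the ``not stopped yet'' conditioning must be unwound, one arrives at the clean, policy-independent bound
\[
    \rts(0) - E \;\ge\; \frac{(b-a)(1-\CDF(b))(1-\gamma)P}{2(1+\kappa)},
\]
where the constant $2(1+\kappa)$ is exactly the slack needed.

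Finally I would convert this deficit into $\rts$-clock time via the inverse-function reformulation of \Cref{lem:opt-policy}: since $(\rts)^{-1}$ has derivative $-\left(\int_x^\infty(1-\CDF(u))\,du\right)^{-1}$ and $\int_x^\infty(1-\CDF(u))\,du\le\int_{\rts(\eps)}^\infty(1-\CDF(u))\,du$ for every $x\ge E>\rts(\eps)$,
\begin{align*}
    (\rts)^{-1}(E) = \int_E^{\rts(0)} \frac{dx}{\int_x^\infty(1-\CDF(u))\,du}
    &\ge \frac{\rts(0) - E}{\int_{\rts(\eps)}^\infty(1-\CDF(x))\,dx} \\
    &\ge \frac{(b-a)(1-\CDF(b))(1-\gamma)P}{2(1+\kappa)\int_{\rts(\eps)}^\infty(1-\CDF(x))\,dx}
    = \frac{\kappa(1-\gamma)P}{2(1+\kappa)},
\end{align*}
using the definition of $\kappa$ in the last step. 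As $\rts$ is decreasing this rearranges to $E\le\rts\!\left(\tfrac{\kappa P(1-\gamma)}{2(1+\kappa)}\right)$, which together with the excluded case $E\le a$ yields the claimed maximum.

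I expect the main obstacle to be the reward-deficit bound of the second step, for an \emph{arbitrary adaptive policy}. The naive accounting---``each forced pass of a value $x\ge b$ costs $x$ minus the continuation value at that instant''---cannot simply be summed, because the continuation value is a random, policy-dependent quantity and consecutive forced passes interact, and because a value sacrificed early might anyway have been dominated by a later value on which the policy stops. The remedy I have in mind is to work throughout with the expected-continuation-value function $r(\cdot)$, invoke only the one-sided inequality $r'(t)\ge -\int_{r(t)}^\infty(1-\CDF(x))\,dx$ of \Cref{lem:opt-policy} (which holds for every policy, adaptive or not), and absorb all the interaction into the crude factor $\tfrac{1}{2(1+\kappa)}$; pinning down these constants while keeping the estimate uniform over all policies---in particular checking that the ``not stopped yet'' conditioning cannot hide additional reward---is the delicate part.
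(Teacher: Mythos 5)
Your overall scaffolding (two easy reductions, then a deficit bound, then a translation via the inverse of $\rts$) is reasonable, and the translation step as you wrote it is correct: for $x\in[E,\rts(0)]$ with $E>a\geq\rts(\eps)$, $\int_x^\infty(1-\CDF)\,du\leq\int_{\rts(\eps)}^\infty(1-\CDF)\,du$ holds, so $(\rts)^{-1}(E)\geq(\rts(0)-E)\big/\int_{\rts(\eps)}^\infty(1-\CDF)$. But the plan hinges entirely on the deficit bound
\[
    \rts(0) - E \;\geq\; \frac{(b-a)(1-\CDF(b))(1-\gamma)P}{2(1+\kappa)},
\]
which you state without proof and yourself flag as the ``delicate part.'' This is a genuine gap, not merely an unwritten routine computation, for two reasons.

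First, the bound you posit is \emph{strictly stronger} than what the paper's proof delivers. The paper's argument never compares $E=\check{r}(0)$ to $\rts(0)$ at all: it establishes $(\rts)^{-1}(E)\geq\frac{\kappa(1-\gamma)P}{2(1+\kappa)}$ via inequalities \eqref{eq:hardness-proph-inverse-gap-from-eps-to-r(a)} and \eqref{eq:hardness-proph-inverse-gap-from-a-to-s}, which pivot at the intermediate reward level $a=\rts(\gamma P)$ rather than at $\rts(0)$. Translating the paper's conclusion into a deficit gives only $\rts(0)-E\geq\rts(0)-\rts\bigl(\frac{\kappa(1-\gamma)P}{2(1+\kappa)}\bigr)$, which is at most $\frac{\kappa(1-\gamma)P}{2(1+\kappa)}\int_{\rts(\eps)}^\infty(1-\CDF)$ with strict inequality in general, since $|(\rts)'|$ increases from $s=0$ toward $s=\eps$. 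So even granting the paper's lemma, your deficit inequality does not follow; you would be proving something stronger, with no indication that it is true.

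Second, the heuristic behind the deficit bound (``each forced pass destroys at least $(b-a)(1-\CDF(b))$'') is exactly where the technical difficulty lives, and you have not supplied the mechanism to control the interactions you acknowledge. Concretely: writing $\rts(0)-\check{r}(0)=\int_0^1\bigl(\check{r}'(t)-(\rts)'(t)\bigr)\,dt$, the extra ``waste'' term $(b-a)p(t)n(1-\CDF(b))$ in $\check{r}'(t)$ is partially \emph{offset} by the fact that once $\check{r}(t)<\rts(t)$ the policy accepts a wider band of values, making $-\int_{\check{r}(t)}^\infty+\int_{\rts(t)}^\infty<0$. Saying this feedback is ``absorbed into the crude factor $\tfrac{1}{2(1+\kappa)}$'' is aspirational; the paper's actual control of this interaction is the set-$S$ decomposition in the proof of \eqref{eq:hardness-proph-inverse-gap-from-eps-to-r(a)} (partitioning reward levels by whether $(\check{r}^{-1})'(s)<2((\rts)^{-1})'(s)$), carried out entirely on inverse functions, and it produces a bound relative to $(\rts)^{-1}(a)$, not relative to $\rts(0)$. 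To salvage your route you would either have to reproduce that machinery in the deficit formulation, or tighten the translation and weaken the deficit bound in a coordinated way --- but then you are effectively pivoting at $a$, i.e.\ rediscovering the paper's argument. As written, the proposal is a roadmap that names the hard step but does not cross it.
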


This lemma is the core of our argument. It can be understood as follows. If $P > 0$, then parameters can be chosen so that the expected reward of the constrained policy is no more than $\tilde{r}^\ast(\epsilon')$ for some value $\epsilon' > 0$. Note that $\tilde{r}^\ast(\epsilon')$ is the expected reward of an optimal policy started at time $\epsilon'$. This means the constraining the policy effectively makes it lose the time from $0$ to $\epsilon'$ to make selections.

\begin{proof}[Proof of \Cref{lem:proph-imposs-CR-bound}]
    Consider any policy fulfilling the constraint. Let $\check{r}(t)$ be the expected reward if we start the policy at time $t$. Our goal is to bound $\check{r}(0)$. As $\check{r}$ is a decreasing function, we can consider its inverse $\check{r}^{-1}$ and compare this to the inverse of $\tilde{r}^\ast$, denoted by $(\tilde{r}^\ast)^{-1}$.

    If $\check{r}(0) \leq a$, there is nothing to be shown. 
    So we may assume for what follows that $\check{r}(0) > a$, and consequently $\check{r}^{-1}(a)$ is well-defined.

    We will show that
    \begin{equation} \label{eq:hardness-proph-inverse-gap-from-eps-to-r(a)}
    \check{r}^{-1}(a) \leq (\tilde{r}^\ast)^{-1}(a) - \frac{\kappa P (1-\gamma)}{2 (1 + \kappa)},
    \end{equation}
    and that for all $s \geq a$,
    \begin{equation} \label{eq:hardness-proph-inverse-gap-from-a-to-s}
        \check{r}^{-1}(a) - \check{r}^{-1}(s) \geq (\tilde{r}^\ast)^{-1}(a) - (\tilde{r}^\ast)^{-1}(s).
    \end{equation}
    Then taking $s = \rc(0)$, \eqref{eq:hardness-proph-inverse-gap-from-eps-to-r(a)} and \eqref{eq:hardness-proph-inverse-gap-from-a-to-s} together imply
    \begin{equation} \label{eq:hardness-proph-gap conclusion}
        (\tilde{r}^\ast)^{-1}(\check{r}(0)) \geq \check{r}^{-1}(\check{r}(0)) + (\tilde{r}^\ast)^{-1}(a) - \check{r}^{-1}(a) \geq \frac{\kappa P (1-\gamma)}{2 (1 + \kappa)}.
    \end{equation}
    By applying $\rts$ to both sides we then obtain $\check{r}(0) \leq \tilde{r}^\ast(\frac{\kappa P (1 - \gamma)}{2(1 + \kappa)})$, since $\rts$ is decreasing. 
    It remains only to justify these key claims. 

    \medskip
    \noindent\textbf{Proof of \eqref{eq:hardness-proph-inverse-gap-from-a-to-s}.} 
    Note that for all $t \in [0, 1]$,
    \[
    (\tilde{r}^\ast)'(t) = - \int_{\tilde{r}^\ast(t)}^\infty (1 - \CDF(x)) dx \qquad \check{r}'(t) \geq - \int_{\check{r}(t)}^\infty (1 - \CDF(x)) dx.
    \]
    Then by the inverse function theorem, for all $s \in [0, \check{r}(0)]$
    \begin{equation}
    \label{eq:hardness-prophet-inv-func-thm-bound-weak}
    (\check{r}^{-1})'(s) = \frac{1}{\check{r}'(\check{r}^{-1}(s))} \leq \frac{1}{- \int_s^\infty (1 - \CDF(x)) dx} = \frac{1}{(\tilde{r}^\ast)'((\tilde{r}^\ast)^{-1}(s))} = ((\tilde{r}^\ast)^{-1})'(s).
    \end{equation}
    Therefore for $a < s$, 
    \[
    \check{r}^{-1}(a) - \check{r}^{-1}(s) = - \int_a^s (\check{r}^{-1})'(s') ds' \geq - \int_a^s ((\tilde{r}^\ast)^{-1})'(s') ds' = (\tilde{r}^\ast)^{-1}(a) - (\tilde{r}^\ast)^{-1}(s).
    \]

    \medskip
    \noindent\textbf{Proof of \eqref{eq:hardness-proph-inverse-gap-from-eps-to-r(a)}.}
    We begin by repeating the argument from \Cref{lem:opt-policy} for the optimal $P$-constrained policy $\check{\tau}$ with threshold $\rc$.
    Informally, our goal will be to show that the slope of $\rc(t)$ is boundedly less negative than that of $\rts(t)$ on the interval $t \in [\rc^{-1}(a), \epsilon]$.
    
    For an infinitesimal time window $\eps$, the probability of a value at least $x$ arriving is $n \cdot \eps \cdot (1-\CDF(x))$. 
    The $P$-constrained optimal policy $\check{\tau}$ accepts as many arrivals above $x = \rc(t)$ in this window as possible, subject to $P$.
    It accomplishes this via some $p(t)$.
    Given $p(t)$, $\check{\tau}$ will opt to reject the lowest possible values above $b$.
    Therefore it accepts (i) all arrivals in $[\rc(t), b)$, and (ii) all arrivals above some threshold $b_{p(t)} > b$.
    Then for $t \in [\rc^{-1}(a), \eps]$ we have
    \begin{align*}
        \check{r}'(t) & = - \int_{\rc(t)}^\infty \prob{\check{\tau}(t) \text{ accepts a value } \geq x \:\vert\: \text{there is an arrival}} \:dx \\
        &= - \int_{\rc(t)}^\infty n\cdot (\CDF(b) - \CDF(x)) + n \cdot (1 - \CDF(b_{p(t)})) \:dx.
        \intertext{By supposing the rejected values are only $b \leq b_{p(t)}$, we only improve the performance $\check{\tau}$ at time $t$ and steepen its derivative:}
        & \geq - \int_{\check{r}(t)}^\infty n \cdot (1 - \CDF(x)) \: dx + \int_{\check{r}(t)}^b  p(t) \cdot n \cdot (1 - \CDF(b)) \: dx \\
        & = - \int_{\check{r}(t)}^\infty n \cdot (1 - \CDF(x)) \: dx + (b - \rc(t))  p(t) \cdot n \cdot (1 - \CDF(b)) \: dx \\
        & \geq - \int_{\check{r}(t)}^\infty n \cdot (1 - \CDF(x)) \: dx + (b - a) \cdot p(t) \cdot n \cdot (1 - \CDF(b)) \: dx,
    \end{align*}
    since $\rc(t) \leq a \leq b$ on $t \in [\rc^{-1}(a), \eps]$.

    By the definition of $\kappa$, we have
    \[
    \check{r}'(t) \geq - (1 - \kappa \cdot p(t)) \int_{\check{r}(t)}^\infty n \cdot (1 - \CDF(x)) dx \geq - \frac{1}{1 + \kappa \cdot p(t)}\int_{\check{r}(t)}^\infty n \cdot (1 - \CDF(x)) dx.
    \]
    Therefore, by the inverse function theorem, we have for all $s \in (\tilde{r}^\ast(\epsilon), a)$
    \begin{equation} \label{eq:hardness-prophet-inv-func-thm-bound}
        (\check{r}^{-1})'(s) = \frac{1}{\check{r}'(\check{r}^{-1}(s))} \leq \frac{1 + \kappa \cdot p(\check{r}^{-1}(s))}{- \int_s^\infty n \cdot (1 - \CDF(x)) dx} = \frac{1 + \kappa \cdot p(\check{r}^{-1}(s))}{(\tilde{r}^\ast)'((\tilde{r}^\ast)^{-1}(s))} = (1 + \kappa \cdot p(\check{r}^{-1}(s))) \cdot ((\tilde{r}^\ast)^{-1})'(s).
    \end{equation}
    Integrating over $s$ from $\check{r}(\epsilon)$ to $a$, we obtain
    \begin{align*}
    \epsilon - \check{r}^{-1}(a) & = - \int_{\check{r}(\epsilon)}^a (\check{r}^{-1})'(s) ds \\
    & \geq - \int_{\check{r}(\epsilon)}^a (1 + \kappa \cdot p(\check{r}^{-1}(s))) \cdot ((\tilde{r}^\ast)^{-1})'(s) ds \\
    & = (\tilde{r}^\ast)^{-1}(\epsilon) - (\tilde{r}^\ast)^{-1}(a) - \kappa \int_{\check{r}(\epsilon)}^a p(\check{r}^{-1}(s)) \cdot ((\tilde{r}^\ast)^{-1})'(s) \:ds\\
    & \geq \epsilon - (\tilde{r}^\ast)^{-1}(a) - \kappa \int_{\check{r}(\epsilon)}^a p(\check{r}^{-1}(s)) \cdot ((\tilde{r}^\ast)^{-1})'(s) \:ds,
    \end{align*}
    where we used that $\rc(\eps) \leq \rts(\eps)$, and so $(\rts)^{-1}(\rc(\eps)) \geq \eps$. 

    In order to lower bound $- \int_{\check{r}(\epsilon)}^a p(\check{r}^{-1}(s)) \cdot ((\tilde{r}^\ast)^{-1})'(s) ds$, let $S = \{ s \mid (\check{r}^{-1})'(s) < 2 ((\tilde{r}^\ast)^{-1})'(s)\}$. 
    As $p(t) \in [0, 1]$ for all $t$, we have
    \begin{align*}
    & - \int_{\check{r}(\epsilon)}^a p(\check{r}^{-1}(s)) \cdot ((\tilde{r}^\ast)^{-1})'(s) ds \\
    & \geq - \frac{1}{2} \int_{\check{r}(\epsilon)}^a p(\check{r}^{-1}(s)) \cdot (\check{r}^{-1})'(s) ds + \frac{1}{2} \int_{\check{r}(\epsilon)}^a p(\check{r}^{-1}(s)) (\check{r}^{-1})'(s) \mathbf{1}_{s \in S} ds \\
    & \geq - \frac{1}{2} \int_{\check{r}(\epsilon)}^a p(\check{r}^{-1}(s)) \cdot (\check{r}^{-1})'(s) ds + \frac{1}{2} \int_{\check{r}(\epsilon)}^a (\check{r}^{-1})'(s) \mathbf{1}_{s \in S} ds.
    \end{align*}
    To bound the first integral, we substitute $\check{r}^{-1}(s)$ by $t$ and obtain because $p(t) \in [0, 1]$
    \[
    - \int_{\check{r}(\epsilon)}^a p(\check{r}^{-1}(s)) \cdot (\check{r}^{-1})'(s) ds = \int_{\check{r}(a)}^{\epsilon} p(t) dt \geq P - \check{r}(a).
    \]
    To bound the second integral, we use that
    \begin{align*}
    & - \int_{\check{r}(\epsilon)}^a (\check{r}^{-1})'(s) \mathbf{1}_{s \in S} ds = - \int_{\check{r}(\epsilon)}^a (\check{r}^{-1})'(s) ds + \int_{\check{r}(\epsilon)}^a (\check{r}^{-1})'(s) \mathbf{1}_{s \not\in S} ds = \epsilon - \check{r}^{-1}(a) + \int_{\check{r}(\epsilon)}^a (\check{r}^{-1})'(s) \mathbf{1}_{s \not\in S} ds \\
    & \stackrel{\eqref{eq:hardness-prophet-inv-func-thm-bound-weak}}{\leq} \epsilon - \check{r}^{-1}(a) + \int_{\check{r}(\epsilon)}^a ((\tilde{r}^\ast)^{-1})'(s) \mathbf{1}_{s \not\in S} ds = \epsilon - \check{r}^{-1}(a) - (\epsilon - (\tilde{r}^\ast)^{-1}(a)) - \int_{\check{r}(\epsilon)}^a ((\tilde{r}^\ast)^{-1})'(s) \mathbf{1}_{s \in S} ds \\
    & \leq (\tilde{r}^\ast)^{-1}(a) - \check{r}^{-1}(a) - \frac{1}{2} \int_{\check{r}(\epsilon)}^a (\check{r}^{-1})'(s) \mathbf{1}_{s \in S} ds,
    \end{align*}
    and therefore
    \begin{align*}
    & - \left( 1 - \frac{1}{2} \right) \int_{\check{r}(\epsilon)}^a (\check{r}^{-1})'(s) \mathbf{1}_{s \in S} ds \leq \check{r}^{-1}(a) - (\tilde{r}^\ast)^{-1}(a).
    \end{align*}
    So
    \[
    \int_{\check{r}(\epsilon)}^a (\check{r}^{-1})'(s) \mathbf{1}_{s \in S} ds \geq - 2 (\check{r}^{-1}(a) - (\tilde{r}^\ast)^{-1}(a)).
    \]

    So, in combination, we obtain
    \begin{align*}
        (\tilde{r}^\ast)^{-1}(a) - \check{r}^{-1}(a) &\geq \frac{\kappa}{2}(P - \check{r}(a)) - \kappa((\tilde{r}^\ast)^{-1}(a) - \check{r}^{-1}(a)), \\
        (1 + \kappa)((\tilde{r}^\ast)^{-1}(a) - \check{r}^{-1}(a)) &\geq \frac{\kappa}{2}(P - \check{r}^{-1}(a)).
    \end{align*}
    Finally applying the definition of $a$, we get
    \[
    \check{r}^{-1}(a) \geq (\tilde{r}^\ast)^{-1}(a) - \frac{\kappa}{2(1 + \kappa)} ( P - (\tilde{r}^\ast)^{-1}(a)) = (\tilde{r}^\ast)^{-1}(a) - \frac{\kappa (1 - \gamma) P}{2(1 + \kappa)}.
    \]
    This proves \eqref{eq:hardness-proph-inverse-gap-from-eps-to-r(a)}, and the lemma claim follows as outlined above.
\end{proof}

In order to prove \Cref{lem:proph-imposs-CR-bound}, we have to move from conditional to unconditional probabilities. 
This move is accomplished in the following lemma.

\begin{restatable}{lemma}{lemprophimpossprobabilities}
\label{lem:proph-imposs-probabilities}
    Consider any policy that has the following property: Up to time $\eps$, it never stops on values below $c$. Furthermore, conditioned on there being an arrival of value at least $c < b < \tilde{r}^\ast(0)$ before time $\epsilon$, it stops on the first such value with probability at most $1 - \delta$. As in \Cref{lem:proph-imposs-CR-bound}, let $p(t)$ be the probability that the policy does not accept an arrival of value at least $b$ at time $t$, conditional on there being such an arrival and not having stopped yet, and define $P = \int_0^\epsilon p(t) dt$. Then
    \[
    P \geq \epsilon \left( 1 - \frac{1 - \delta + \frac{1}{2} \epsilon n (1 - \CDF(b))}{e^{-n \epsilon (1 - \CDF(c))}} \right).
    \]
\end{restatable}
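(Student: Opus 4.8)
The plan is to lower bound $P$ via the probability of the event $G$ that the policy halts before time $\epsilon$ on a value of at least $b$. Write $n_b := n(1-\CDF(b))$ and $n_c := n(1-\CDF(c))$ for the Poisson rates of arrivals exceeding $b$ and $c$, and set $\mu := \epsilon n_c$. I will establish the two inequalities
\[
    \Pr{G} \;\ge\; n_b\, e^{-\mu}\,(\epsilon - P)
    \qquad\text{and}\qquad
    \Pr{G} \;\le\; (1-\delta)\,\epsilon n_b \;+\; \tfrac12 (\epsilon n_b)^2 ,
\]
and then chain them: together they give $\epsilon - P \le e^{\mu}\bigl( (1-\delta)\epsilon + \tfrac12 \epsilon^2 n_b \bigr) = \epsilon\, e^{\mu}\bigl( 1 - \delta + \tfrac12 \epsilon n_b \bigr)$, which after rearranging and writing $e^{\mu} = 1/e^{-\mu}$ is exactly the claimed bound $P \ge \epsilon\bigl(1 - \tfrac{1-\delta+\frac12\epsilon n(1-\CDF(b))}{e^{-n\epsilon(1-\CDF(c))}}\bigr)$.

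For the first inequality I would argue at the level of arrival rates. Let $q(t)$ be the probability that the policy has not halted strictly before time $t$. A value-$\ge b$ arrival occurs at rate $n_b$, and conditioning on such an arrival at time $t$ leaves the history before $t$ unchanged, so the policy is still running with probability $q(t)$ and, given that, accepts with probability $1-p(t)$ by the definition of $p(t)$; since the policy halts at most once, integrating over $t$ yields $\Pr{G} = \int_0^\epsilon n_b\, q(t)\,(1-p(t))\,dt$ (a Palm/Campbell-type computation). This is the one place the hypothesis "the policy never stops on values below $c$ before time $\epsilon$" enters: it implies that on the event that no value-$\ge c$ arrival occurs in $[0,t)$ the policy is still running, whence $q(t) \ge e^{-t n_c} \ge e^{-\mu}$ for all $t \le \epsilon$. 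Pulling this constant out of the integral gives $\Pr{G} \ge n_b\, e^{-\mu} \int_0^\epsilon (1-p(t))\,dt = n_b\, e^{-\mu}(\epsilon - P)$.

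For the second inequality I would decompose $G$ according to whether the arrival the policy halts on is the \emph{first} value-$\ge b$ arrival or a later one. If it is a later one, then at least two value-$\ge b$ arrivals occur before $\epsilon$, an event of probability $1 - e^{-\epsilon n_b} - \epsilon n_b e^{-\epsilon n_b} \le \tfrac12(\epsilon n_b)^2$ by the elementary bound $1-(1+x)e^{-x}\le x^2/2$. If it is the first one, then the policy stops on the first value-$\ge b$ arrival, which by hypothesis has conditional probability at most $1-\delta$ given that such an arrival exists; since this event already entails the existence of such an arrival, its unconditional probability is at most $(1-\delta)(1-e^{-\epsilon n_b}) \le (1-\delta)\,\epsilon n_b$. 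Adding the two cases proves the bound on $\Pr{G}$.

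The only real obstacle is careful bookkeeping of the conditionings: one has to verify that the conditioning defining $p(t)$ ("there is a value-$\ge b$ arrival at $t$ and the policy has not stopped yet") is exactly what the Palm disintegration produces, and that $q(t)$ — an event of the pre-$t$ history — is independent of the arrival at time $t$. Once these are pinned down, the remaining content is the two elementary Poisson estimates $1-e^{-x}\le x$ and $1-(1+x)e^{-x}\le x^2/2$, so no substantial calculation remains.
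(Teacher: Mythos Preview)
Your proof is correct and essentially the same as the paper's. The paper introduces the event you call $G$ as $\mathcal{E}_1\cup\mathcal{E}_2$ (stopping on the first vs.\ a later value-$\ge b$ arrival), derives the identical lower bound $\Pr{G}\ge n_b e^{-\mu}(\epsilon-P)$ via the survival probability $s(t)$, upper-bounds $\Pr{G}$ by the same split into ``first'' and ``later'' with the same Poisson tail estimates, and chains the two inequalities exactly as you do; the only cosmetic difference is that the paper bounds $s(t)\ge s(\epsilon)\ge e^{-\mu}$ whereas you bound $q(t)\ge e^{-tn_c}\ge e^{-\mu}$ pointwise.
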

In particular, note that the choice of $c=\tilde{q}^*(\eps)$ satisfies the first assumption of \Cref{lem:proph-imposs-probabilities} without loss of generality, where $\tilde{q}^*(t)$ is the threshold of the optimal policy for a given distribution.

The proof of this lemma has two components. One is from moving from conditional to unconditional probabilities. The other one is shifting attention to the probability of accepting the first arrival above $b$, as opposed to any such arrival. For this change, we rely on the fact that these arrivals come from a Poisson process. If $\epsilon$ is small enough, more than one such arrival is very unlikely.

\subsubsection{Implications for the Hard Prophet Instance}
\label{sec:proof-of-prophet-must-stop-early}

With \Cref{lem:proph-imposs-CR-bound} and \Cref{lem:proph-imposs-probabilities} in hand, we can prove \Cref{lem:prophet-must-stop-early} by plugging in the definitions of instance \eqref{eq:hard-dist-CDF}. It then only remains to bound the function $\ty(t)$ appropriately. To this end, we use the following linear approximations.

\begin{restatable}{observation}{obstildeyderivativebounds}
\label{obs:tilde-y-derivative-bounds}
    For any rate $n$, the function $\tilde{y}(t)$ satisfies 
    \begin{align*}
        - \frac{1}{\beta_n} &\leq \tilde{y}'(t) \leq - \frac{1}{\beta_n} + t \cdot \left(1 + \frac{2}{e} - \frac{1}{\beta_n}\right),  \qquad &t \in [0,\ty^{-1}(1/e)],\\
        1 - \frac{1}{\beta_n} \cdot t &\leq \tilde{y}(t) \:\leq 1 - \frac{1}{\beta_n} \cdot t + \frac{t^2}{2} \cdot \left(1 + \frac{2}{e} - \frac{1}{\beta_n}\right), \qquad &t \in [0,\ty^{-1}(1/e)].
    \end{align*} 
\end{restatable}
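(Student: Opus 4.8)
The plan is to get everything from the second derivative, which the defining ODE \eqref{eq:y-tilde-defn} pins down exactly. Differentiating \eqref{eq:y-tilde-defn} in $t$ and using $\frac{d}{dt}(\tilde{y}\ln\tilde{y}-\tilde{y})=\tilde{y}'\ln\tilde{y}$ gives the clean identity $\tilde{y}''(t)=\tilde{y}'(t)\ln(\tilde{y}(t))$. Since $\tilde{y}$ is decreasing from $\tilde{y}(0)=1$, both factors on the right-hand side are nonpositive on $[0,1]$, so $\tilde{y}''\ge 0$; that is, $\tilde{y}'$ is nondecreasing on $[0,1]$. Evaluating \eqref{eq:y-tilde-defn} at $t=0$ gives $\tilde{y}'(0)=1\cdot(\ln 1-1)-(\beta_n^{-1}-1)=-\tfrac{1}{\beta_n}$, so monotonicity of $\tilde{y}'$ immediately yields the lower bound $\tilde{y}'(t)\ge-\tfrac{1}{\beta_n}$, and integrating this with $\tilde{y}(0)=1$ gives the lower bound $\tilde{y}(t)\ge 1-\tfrac{t}{\beta_n}$.

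For the two upper bounds I would write $\tilde{y}'(t)=-\tfrac{1}{\beta_n}+\int_0^t\tilde{y}''(s)\,ds$ and bound $\tilde{y}''(s)$ uniformly over the interval $[0,\tilde{y}^{-1}(1/e)]$, on which $\tilde{y}(s)\in[1/e,1]$ and hence $-\ln\tilde{y}(s)\in[0,1]$. Substituting $u=-\ln\tilde{y}(s)$ (so $\tilde{y}(s)=e^{-u}$) rewrites the identity as $\tilde{y}''(s)=u\,(e^{-u}(u+1)+(\beta_n^{-1}-1))$; I would then check that this one-variable function is increasing in $u$ on $[0,1]$ (its derivative is $e^{-u}(1+u-u^2)+(\beta_n^{-1}-1)$, and $1+u-u^2=1+u(1-u)\ge 1$ there), so it is maximized at the right endpoint $u=1$, i.e.\ at $\tilde{y}(s)=1/e$. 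This gives a constant bound $\tilde{y}''(s)\le C$ on the interval; integrating once gives the linear upper bound on $\tilde{y}'(t)$ with slope $C$, and integrating a second time (using $\tilde{y}(0)=1$ and $\tilde{y}'(0)=-\tfrac{1}{\beta_n}$) gives the quadratic upper bound on $\tilde{y}(t)$ with leading coefficient $\tfrac{C}{2}$, which is exactly the shape of the stated inequalities.

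The integrations are routine; the one delicate point is the exact value of the constant $C$. The naive product estimate $\tilde{y}''(s)=(-\ln\tilde{y}(s))(-\tilde{y}'(s))\le 1\cdot\tfrac{1}{\beta_n}$ is too lossy, because the two factors are anti-correlated along the trajectory: once $-\ln\tilde{y}(s)$ is near its maximum $1$, the factor $-\tilde{y}'(s)$ has already fallen from $\tfrac{1}{\beta_n}$ toward its value at $\tilde{y}=1/e$. Making this trade-off precise — which is exactly why the statement restricts attention to $t\le\tilde{y}^{-1}(1/e)$ — is the step that fixes the slope, and it is the only place where the relationship between $\beta_n$ and $1/e$ enters the argument. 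The only loose end is the regime of very small $n$ (where $\beta_n>1$ and possibly $e^{-n}>1/e$, so $\tilde{y}$ never reaches $1/e$ and the interval is all of $[0,1]$): there the monotonicity check on $u\mapsto u(e^{-u}(u+1)+\beta_n^{-1}-1)$ may need its interior critical point rather than the endpoint, but since $n$ enters only through the constant $\beta_n^{-1}-1$, this remains an elementary one-variable calculation.
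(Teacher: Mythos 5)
Your approach is essentially the paper's own: differentiate the ODE to get $\tilde y''=\tilde y'\ln\tilde y\ge0$, conclude $\tilde y'$ is nondecreasing (so $\tilde y'\ge\tilde y'(0)=-1/\beta_n$), bound $\tilde y''$ on $\{\tilde y\in[1/e,1]\}$ by its value at the endpoint $\tilde y=1/e$, and integrate once and twice. Your substitution $u=-\ln\tilde y$ and the one-variable monotonicity check (derivative $e^{-u}(1+u-u^2)+(\beta_n^{-1}-1)>0$ on $[0,1]$) is cleaner and more explicit than the paper's one-line assertion that $\tilde y''$ is ``decreasing in $y$.''

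The issue is the constant, which you prudently left as an unnamed $C$. If you evaluate at $u=1$ you get
\[
C \;=\; 1\cdot\bigl(e^{-1}\cdot 2 + (\beta_n^{-1}-1)\bigr) \;=\; \frac{2}{e}+\frac{1}{\beta_n}-1,
\]
whereas the observation states $1+\frac{2}{e}-\frac{1}{\beta_n}$. These differ (the signs of the $1$ and $\beta_n^{-1}$ appear swapped), and since $\beta_n<1$ for large $n$, the stated coefficient ($\approx0.39$) is strictly smaller than $C$ ($\approx1.08$). The stated bound is in fact false: take $t^*=\tilde y^{-1}(1/e)$, where the ODE gives exactly $\tilde y'(t^*)=1-\frac{2}{e}-\frac{1}{\beta_n}\approx-1.08$, while numerically $t^*\approx0.50$, so the stated upper bound $-\frac{1}{\beta_n}+t^*\bigl(1+\frac{2}{e}-\frac{1}{\beta_n}\bigr)\approx -1.14$ lies \emph{below} $\tilde y'(t^*)$. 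The paper's own appendix proof, read to completion, also yields $C=\frac{2}{e}+\frac{1}{\beta_n}-1$; the discrepancy is a typo in the observation's statement rather than a flaw in the argument, and the downstream uses in the proof of \Cref{lem:prophet-must-stop-early} go through with the corrected constant. So: your reasoning is right, your approach matches the paper's, and your honest hedging about ``the exact value of $C$'' is exactly where the statement needs a correction.
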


These approximations can be derived from Taylor's theorem together with \eqref{eq:y-tilde-defn}.

\subsection{Good Secretary Policies Cannot Stop on Early High Values}

So far, we have shown that policies that perform well on instance \eqref{eq:hard-dist-CDF} need to have a high probability of stopping early, conditional on there being an early high arrival. Next we turn to the setting in which our distributional knowledge was indeed incorrect, and we show that the high conditional probability of stopping can lead to a poor performance on a different distribution.

\begin{lemma} \label{lem:sec-cant-stop-early}
Let $\epsilon, \delta > 0$. Consider instance \eqref{eq:hard-dist-CDF} and any policy that has the following property: Conditioned on there being an arrival of value at least $b < \tilde{r}^\ast(0)$ before time $\epsilon$, it stops on instance \eqref{eq:hard-dist-CDF} on the first such value with probability at least $1 - \delta$. Then there is a different distribution such that the competitive ratio of this policy is at most
\[
    1 - (1 - \epsilon)\left(\delta^{\frac{\delta}{1-\delta}} - \delta^{\frac{1}{1-\delta}}\right).
\]
\end{lemma}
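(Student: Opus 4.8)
The strategy is to construct a second distribution $\cD'$ that is carefully designed so that (i) it is indistinguishable from instance \eqref{eq:hard-dist-CDF} up to the relevant ordinal information the policy uses before time $\epsilon$, and (ii) its expected maximum is dominated by rare very-high arrivals that occur after time $\epsilon$, which a policy that stops early will miss. Concretely, I would take $\cD'$ to place mass just above $b$ at a higher rate than instance \eqref{eq:hard-dist-CDF}, plus an atom (or a sharp spike, since we assumed non-atomic distributions, one can take a tight approximation) at a very large value $M$ that arrives only with small probability over $t\in[0,1]$. Scaling $M$ up while scaling down its arrival probability keeps $\Ex{\MAX}$ dominated by the $M$-term, so $\Ex{\MAX(\bX)} \approx M\cdot q$ where $q$ is the probability that an $M$-valued arrival appears at all. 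Meanwhile, the policy — by hypothesis — stops with probability at least $1-\delta$ on the first arrival above $b$ before time $\epsilon$, and such an arrival appears with some probability that I can force to be close to $1$ by choosing the rate of above-$b$ arrivals in $\cD'$ large enough. Once the policy has stopped below $M$, it collects essentially nothing compared to $M$, so its reward is essentially only the contribution from runs where it does \emph{not} stop early, i.e., the $\le \delta$-probability branch plus the $\le \epsilon$ window where early high arrivals are absent.

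The key quantitative step is to optimize the parameters governing $\cD'$ to make the bound exactly $1-(1-\epsilon)(\delta^{\delta/(1-\delta)}-\delta^{1/(1-\delta)})$. I expect this shape arises as follows. Let the above-$b$ arrival rate in $\cD'$ be $\mu$ (so the probability no above-$b$ arrival occurs in $[0,\epsilon]$ is $e^{-\mu\epsilon}$, and in $[\epsilon,1]$ is $e^{-\mu(1-\epsilon)}$). The "good" event for the policy — where it can still catch a late $M$-arrival — requires that no above-$b$ arrival triggered an early stop, which happens (in the worst case for the policy) with probability at most $\delta + (1-\delta)e^{-\mu\epsilon}$ roughly, combining the $\delta$ slack with the empty-window probability. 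Conditioned on surviving to time $\epsilon$, the policy's best hope is to grab the $M$-arrival if one occurs later, contributing a factor like $(1-e^{-\mu(1-\epsilon)})$ times $M$ times the probability an $M$-arrival appears. Setting the $M$-arrival rate and $\mu$ in the right ratio, and then choosing $\mu$ to minimize the resulting competitive ratio over the policy's favor, the first-order condition in $\mu$ yields $e^{-\mu\epsilon}$ proportional to a power of $\delta$; substituting back produces the terms $\delta^{\delta/(1-\delta)}$ and $\delta^{1/(1-\delta)}$. The factor $(1-\epsilon)$ comes from the fact that the late $M$-arrival must land in the interval $[\epsilon,1]$ of length $1-\epsilon$, while $\Ex{\MAX}$ counts $M$-arrivals over all of $[0,1]$.

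In carrying this out I would proceed in the following order. First, fix $\cD'$ as a mixture: with the bulk of its mass it mimics instance \eqref{eq:hard-dist-CDF} restricted to values at most $b$ (so that the policy's behavior below $b$ — which by hypothesis is all it does before $\epsilon$ except possibly stopping above $b$ — is unchanged), it puts extra mass in a thin band just above $b$ to boost the above-$b$ rate to $\mu$, and it puts a vanishing probability on a huge value $M$. Second, compute $\Ex{\MAX(\bX)}$ for $\bX\sim\cD'$ and show it is $(1-o(1))\cdot M\cdot\Pr{\text{some }M\text{-arrival}}$ as $M\to\infty$ with the rate scaled as $1/M$. Third, upper-bound the policy's reward: decompose on whether an above-$b$ arrival occurs before $\epsilon$; on that event the policy stops (w.p. $\ge 1-\delta$) on a value $\le$ (something comparable to $b \ll M$), contributing negligibly relative to $M$; on the complementary events (no early above-$b$ arrival, or the $\le\delta$ slack) the policy at best collects one $M$-arrival if one occurs in $[\epsilon,1]$. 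Fourth, assemble the ratio, take $M\to\infty$ and the rate $\to\infty$ limits, and optimize the free parameter $\mu\epsilon$; the minimum over the adversary's/policy's favorable choices gives the stated bound. The main obstacle I anticipate is bookkeeping the "best hope" reward bound tightly enough: one must be careful that after \emph{not} stopping early, the policy could in principle adopt a clever late strategy, so the upper bound on its late reward has to hold for \emph{every} continuation — but since its late reward is at most $M$ times the probability an $M$-arrival appears in the remaining time (it can collect at most one arrival, and anything below $M$ is negligible), this is clean; the delicate part is verifying that the "no early above-$b$ arrival" probability and the "$\delta$-slack" probability combine in exactly the way that produces the $\delta^{\delta/(1-\delta)}-\delta^{1/(1-\delta)}$ expression after optimization, rather than some looser bound.
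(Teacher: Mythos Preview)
Your overall architecture matches the paper's: modify the distribution by boosting the rate of arrivals above $b$, add rare very-large arrivals, and argue that the policy's tendency to stop early causes it to miss those large values. The $(1-\epsilon)$ factor indeed comes from the large arrivals landing in $[\epsilon,1]$, and the paper likewise sends their rate to zero.

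However, there is a genuine gap at the step you flag as ``delicate.'' Your estimate that the policy survives past time $\epsilon$ with probability at most $\delta + (1-\delta)e^{-\mu\epsilon}$ is not justified, and in fact is false. The hypothesis says the policy stops on the first above-$b$ arrival with conditional probability $\geq 1-\delta$ \emph{under the original rate $\lambda'$}. Writing $f(t)$ for the probability the policy stops given that the first above-$b$ arrival lands at time $t$, the hypothesis is $\int_0^\epsilon f(t)\,\frac{\lambda' e^{-\lambda' t}}{1-e^{-\lambda'\epsilon}}\,dt \geq 1-\delta$. When you raise the rate to $\mu$, the first-arrival density shifts toward small $t$, and nothing prevents the policy from having $f(t)=0$ for small $t$ and $f(t)=1$ for $t$ near $\epsilon$. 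For such an $f$, the new-rate stopping probability $\int_0^\epsilon f(t)\,\mu e^{-\mu t}\,dt$ can be driven to $0$ as $\mu\to\infty$, not to $1-\delta$. Concretely, your bound would yield a competitive-ratio upper bound of $\epsilon + (1-\epsilon)\delta$ after taking $\mu\to\infty$, which is strictly stronger than the lemma's claim and is not true.

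The missing ingredient is a rearrangement argument: subject to $\int_0^1 f\geq 1-\delta$ (after a change of variables that normalizes the old-rate density to uniform), the minimum of $\int_0^1 f g$ for $g$ nonincreasing is $\int_\delta^1 g$, attained at $f=\mathbf{1}_{[\delta,1]}$. Applying this gives a lower bound on the new-rate stopping probability of $e^{-\alpha\delta\lambda'\epsilon}-e^{-\alpha\lambda'\epsilon}$ (where $\mu=\alpha\lambda'$), and \emph{now} there is a genuine tradeoff in $\alpha$: the bound vanishes both as $\alpha\to 0$ and as $\alpha\to\infty$. Optimizing at $\alpha\lambda'\epsilon=\frac{\ln(1/\delta)}{1-\delta}$ produces exactly $\delta^{\delta/(1-\delta)}-\delta^{1/(1-\delta)}$. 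Without this step, your first-order-condition heuristic has nothing to balance against and does not recover the stated form.
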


We first show a useful fact.

\begin{lemma}
Let $f\colon [0, 1] \to [0, 1]$ be any function, and $g\colon [0, 1] \to [0, 1]$ a differentiable non-increasing function. If $\int_0^1 f(x) \:dx \geq 1 - \delta$, then 
\begin{equation} \notag
    \int_0^1 f(x) g(x) \:dx \geq \int_\delta^1 g(x) \:dx.
\end{equation}
\end{lemma}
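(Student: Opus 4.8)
The statement is a ``rearrangement''-type inequality: among all $f$ with average at least $1-\delta$, the pairing with a fixed non-increasing $g$ is minimized by pushing all the mass of $f$ as far to the right as possible, i.e. taking $f = \mathbf{1}_{[\delta,1]}$, which yields exactly $\int_\delta^1 g(x)\,dx$. The plan is to make this precise by a ``transport mass to the right'' argument, which is cleanest to run via integration by parts once and then a pointwise comparison of cumulative integrals.

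First I would define $F(x) \defeq \int_0^x f(t)\,dt$, so $F$ is non-decreasing, $F(0)=0$, and $F(1) \geq 1-\delta$ by hypothesis. The key observation is the pointwise bound $F(x) \leq x$ for all $x \in [0,1]$ (since $f \leq 1$) together with $F(x) \geq x - \delta$ — wait, that second bound need not hold; what does hold is simply $F(1) \ge 1-\delta$ and $F$ is $1$-Lipschitz. Instead, compare $F$ with $G(x) \defeq \max\{0, x - \delta\} = \int_0^x \mathbf{1}_{[\delta,1]}(t)\,dt$. I claim $F(x) \geq G(x)$ for all $x$: indeed for $x \leq \delta$ this is trivial since $G(x)=0$, and for $x > \delta$ we have $F(x) = F(1) - \int_x^1 f \geq (1-\delta) - (1-x) = x - \delta = G(x)$, using $f \le 1$ on $[x,1]$. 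So $F$ dominates the cumulative distribution of the extremal choice $\mathbf{1}_{[\delta,1]}$.

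Next, integrate by parts: $\int_0^1 f(x) g(x)\,dx = [F(x)g(x)]_0^1 - \int_0^1 F(x) g'(x)\,dx = F(1)g(1) - \int_0^1 F(x)g'(x)\,dx$. Since $g$ is non-increasing, $g'(x) \leq 0$, so $-g'(x) \geq 0$, and replacing $F(x)$ by the smaller quantity $G(x)$ only decreases $-\int_0^1 F(x)g'(x)\,dx$; also $F(1) \geq 1 - \delta = G(1)$ and $g(1) \geq 0$, so $F(1)g(1) \geq G(1)g(1)$. Hence
\begin{equation}\notag
    \int_0^1 f(x)g(x)\,dx \;\geq\; G(1)g(1) - \int_0^1 G(x) g'(x)\,dx \;=\; \int_0^1 \mathbf{1}_{[\delta,1]}(x)\,g(x)\,dx \;=\; \int_\delta^1 g(x)\,dx,
\end{equation}
where the middle equality is just integration by parts run backwards with $G' = \mathbf{1}_{[\delta,1]}$. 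This completes the argument.

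The only mild subtlety — the step I would double-check rather than call an ``obstacle'' — is the regularity needed to justify integration by parts: $g$ is assumed differentiable, and $F$ is Lipschitz hence absolutely continuous, so the integration-by-parts formula is valid; $f$ need only be integrable (which is implicit in the hypothesis $\int_0^1 f < \infty$). One should also note $g(x) \in [0,1]$ guarantees all integrals are finite and $g(1) \geq 0$ so the boundary-term comparison goes the right way. Everything else is the routine pointwise inequality $F \geq G$ and the sign of $g'$.
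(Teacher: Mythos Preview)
Your proof is correct and follows essentially the same approach as the paper: both establish the pointwise bound $\int_0^x f(t)\,dt \geq \max(0, x-\delta)$ and then apply integration by parts against $g$, using $g' \leq 0$ to replace $F$ by the extremal $G$. The paper's version is terser (it does not name $F$ and $G$ and writes the integration by parts twice rather than once each way), but the argument is the same.
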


\begin{proof}
    Note that since $f(x) \leq 1$ for all $x$, we have $\int_0^x f(x') dx' = \int_0^1 f(x') dx' - \int_x^1 f(x') dx' \geq (1 - \delta) - (1 - x) = x - \delta$. So, since $g'(x) \leq 0$ for all $x$, applying integration by parts twice gives us
    \begin{align*}
        \int_0^1 f(x) g(x) dx & = g(1) \int_0^1 f(x) dx - \int_0^1 g'(x) \int_0^x f(x') dx' dx \\
        & \geq g(1) (1-\delta) - \int_0^1 g'(x) \cdot \max(0, x - \delta) dx  = \int_\delta^1 g(x) dx. \qedhere
    \end{align*}
\end{proof}

\begin{proof}[Proof of \Cref{lem:sec-cant-stop-early}]
Consider any fixed policy, and let $f(t)$ be the probability that it stops conditional on the first value above $b$ arrives at time $t$. We have
\[
    \int_0^\epsilon f(t) \frac{\lambda' e^{- \lambda' t}}{1 - e^{- \lambda' \epsilon}} dt \geq 1 - \delta,
\]
where $\lambda' > 0$ is the arrival rate of values above $b$.
(This is because the arrival time of the first arrival of a Poisson process follows an exponential distribution, and $1-e^{-\lambda' \eps}$ is the probability that there is a large arrival in $[0,\eps]$.)
Substituting $x \defeq \frac{1 - e^{- \lambda' t}}{1 - e^{- \lambda' \epsilon}}$, we have
\[
    \int_0^1 f\left( t(x) \right) dx \geq 1 - \delta,
\]
where $t(x) \defeq - \frac{1}{\lambda'} \ln \left( 1 - (1 - e^{- \lambda' \epsilon}) x \right)$.

In the modified instance, values above $b$ arrive with rate $\alpha \lambda'$. The (unconditional) probability of our policy stopping before time $\epsilon$ because of a high arrival is given by 
\begin{align*}
\int_0^\epsilon f(t) \alpha \lambda' e^{- \alpha \lambda' t} dt & =  \alpha (1 - e^{- \lambda' \epsilon}) \int_0^1 f(t(x)) e^{- (\alpha - 1) \lambda' t(x)} dx \\
& \geq \alpha (1 - e^{- \lambda' \epsilon}) \int_\delta^1 e^{- (\alpha - 1) \lambda' t(x)} dx \\
& = \int_{t(\delta)}^\epsilon \alpha \lambda' e^{- \alpha \lambda' t} dt \\
& = e^{- t(\delta) \alpha \lambda'} - e^{- \epsilon \alpha \lambda'} \\
& = \left(1 - \delta (1 - e^{-\lambda' \epsilon}) \right)^\alpha - e^{- \alpha \lambda' \epsilon} \\
& \geq e^{- \alpha \delta \lambda' \epsilon} - e^{-\alpha \lambda' \epsilon},
\end{align*}
where in the last step we used that by concavity $\delta (1 - e^{-\lambda' \epsilon}) \leq 1 - e^{- \delta \lambda' \epsilon}$. 

We introduce arrivals at rate $\lambda''$ whose values are much larger than $\tilde{r}^\ast(0)$. The above calculation gives an upper bound on the probably that such a value is accepted: Namely, we say that it is accepted if the first such value arrives before time $\epsilon$, which happens with probability $1 - e^{-\lambda'' \epsilon}$, or if it arrives after time $\epsilon$ and the policy has not stopped before time $\epsilon$, which happens with probability $(e^{- \lambda'' \epsilon} - e^{-\lambda''})(1 - (e^{- \alpha \delta \lambda' \epsilon} - e^{-\alpha \lambda' \epsilon}))$. So the overall probability is upper-bounded by
\[
1 - e^{-\lambda'' \epsilon} + (e^{- \lambda'' \epsilon} - e^{-\lambda''})(1 - (e^{- \alpha \delta \lambda' \epsilon} - e^{-\alpha \lambda' \epsilon})).
\]

The probability of any such high-valued arrival is $1 - e^{-\lambda''}$. Neglecting any smaller values, this gives us an upper bound on the competitive ratio of
\[
\frac{1 - e^{-\lambda'' \epsilon} + (e^{- \lambda'' \epsilon} - e^{-\lambda''})(1 - (e^{- \alpha \delta \lambda' \epsilon} - e^{-\alpha \lambda' \epsilon}))}{1 - e^{-\lambda''}} = 1 - \frac{(e^{- \lambda'' \epsilon} - e^{-\lambda''})(e^{- \alpha \delta \lambda' \epsilon} - e^{-\alpha \lambda' \epsilon})}{1 - e^{-\lambda''}}.
\]

If we choose $\alpha = \frac{\ln(1/\delta)}{(1-\delta) \lambda' \epsilon}$, then this becomes
\[
e^{- \alpha \delta \lambda' \epsilon} - e^{-\alpha \lambda' \epsilon} = \delta^{\frac{\delta}{1-\delta}} - \delta^{\frac{1}{1-\delta}}.
\]
Furthermore,
\[
\lim_{\lambda'' \to 0} \frac{e^{- \lambda'' \epsilon} - e^{-\lambda''}}{1 - e^{-\lambda''}} = 1 - \epsilon.
\]

This upper-bounds the competitive ratio by
\[
1 - (1 - \epsilon)\left(\delta^{\frac{\delta}{1-\delta}} - \delta^{\frac{1}{1-\delta}}\right).
\]
\end{proof}

\section{Prophet Sensitivity to Distribution Error} \label{sec:smoothness}

Our investigation is framed in terms of providing simultaneous guarantees for a stopping rule with advice distribution $\cD'$: a guarantee when $\cD' = \cD$ is the true distribution from which arrivals are drawn i.i.d., and a guarantee when $\cD' \neq \cD$ is incorrect.
It is natural to ask whether this dichotomy---either $\cD'$ is accurate or it is not---is the right framing by which to study the i.i.d.\ prophet inequality problem, or whether one should instead hope for performance which depends on how accurate $\cD'$ is, and degrades from the i.i.d.\ prophet to the secretary setting as a function of the discrepancy between $\cD$ and $\cD'$.

Here we demonstrate that the i.i.d.\ prophet inequality problem is very sensitive to misspecification of the arrival distribution, and that such robustness guarantees would be necessarily quite weak in this setting.
We make use of our reductions in \Cref{sec:prelims} together with the impossibility result of \citet{correa19prophet} for the i.i.d.\ prophet inequality problem with unknown distribution.
We will prove this for the Poisson setting; as usual, the reductions in \Cref{sec:prelims} then establish the result for both the $n$-arrival and Poisson models.
We depart from the impossibility of \citeauthor{correa19prophet}.
\begin{lemma}[Theorem 2 of \cite{correa19prophet}] \label{thm:n-secretary-hard-CDFS}
    For all $\delta > 0$ there exists an $n_0$ such that for all $n \geq n_0$, for any $n$-arrival stopping rule $\alg$, there exists an unknown distribution $\cD$ such that, for $\bX$ given by $n$ i.i.d.\ samples from $\cD$,
    \[
        \expect{\alg(\bX)} \leq \left(\frac{1}{e} + \delta\right) \cdot \expect{\opt(\bX)}.
    \]
\end{lemma}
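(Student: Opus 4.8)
This statement is Theorem~2 of \cite{correa19prophet}; I sketch the argument one would use to prove it. The plan is to apply Yao's minimax principle: it suffices to exhibit a prior $\mathcal{P}$ over value distributions such that every \emph{deterministic} $n$-arrival stopping rule $\alg$ satisfies $\expect{\alg(\bX)} \le \bigl(\tfrac1e + \delta\bigr)\,\expect{\MAX(\bX)}$ when the outer expectation is taken over $\cD \sim \mathcal{P}$ (and $\bX \sim \cD^n$); an averaging argument then yields a single $\cD$ in the support of $\mathcal{P}$ with the claimed bound, and randomized rules are handled by also averaging over the coins of $\alg$. I would take $\mathcal{P}$ supported on distributions $\cD_1, \cD_2, \dots$ built from a super-increasing sequence of ``value levels'' $0 < v_1 \ll v_2 \ll \cdots$ that grows so fast that $v_{i+1}$ dwarfs $n$ copies of $v_i$ (e.g.\ $v_i = 2^{2^i}$). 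In $\cD_k$, each of the $n$ arrivals independently lands at level $i \in \{1, \dots, k\}$ (taking value $v_i$) with a probability tuned so that the expected number of level-$i$ arrivals equals a prescribed constant, and lands at level $0$ (value $0$) otherwise; the design is such that the highest level present is a delicate, low-probability event and the below-top levels are distributed (nearly) identically across all the $\cD_k$.

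Two features drive the bound. First, because the $v_i$ grow so fast, $\expect{\MAX(\bX)}$ equals, up to a $1 + o(1)$ factor, $v_{k^\star}$ times the probability that level $k^\star$ is attained, where $k^\star$ is the top level realised. Second, an arrival only tells the algorithm which level it occupies --- and since the below-top levels look the same across the members of $\mathcal{P}$, observing a record among the levels seen so far conveys essentially no information about whether an even higher level exists in the $\cD_k$ actually in play. Hence an algorithm that hopes to collect the top value must behave like a stopping rule for the classical best-choice (secretary) problem: stop on the first arrival that is a record among those seen so far, gambling that it is the global record. The optimal probability of stopping on the global maximum of a uniformly random permutation, knowing only relative ranks, tends to $1/e$, and the super-increasing scaling converts this probability ceiling into the value-ratio bound $\tfrac1e + o(1)$.

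The step I expect to be the main obstacle --- and which is the technical heart of \cite{correa19prophet} --- is making ``the adversary hides the top level'' precise for an \emph{arbitrary fixed} algorithm. The point is that once $\alg$ is fixed, its conditional probability of stopping within level $i$ is determined by its behaviour on the (common) levels up to $i$; one then has to choose the prior on $k$ so that, weighting each level $k$ by its contribution $v_k \cdot \Pr[k = k^\star]$ to $\expect{\MAX(\bX)}$, the mass concentrates on levels where this stopping probability is at most $\tfrac1e + \delta$. The uniform ceiling of $\tfrac1e$ here is exactly the best-choice lower bound, which can be obtained e.g.\ from Bruss's odds theorem characterising optimal stopping on a last record. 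Finally one takes $n_0$ large enough that every $o(1)$ slack --- the super-increasing approximation to $\expect{\MAX(\bX)}$, the Poisson approximation to the per-level counts, and the convergence of the best-choice probability to $1/e$ --- is comfortably below a constant fraction of $\delta$.
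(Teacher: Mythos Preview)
The paper does not prove this statement at all: it is imported verbatim as Theorem~2 of \cite{correa19prophet} and used as a black box (to feed into the $n$-to-Poisson reduction of \Cref{lem:good-poisson-implies-good-fixed-n} and derive \Cref{lem:poisson-secretary-hard}). So there is no ``paper's own proof'' to compare against; in the context of this paper the correct ``proof'' is simply the citation.

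Your sketch is a reasonable outline of the argument in \cite{correa19prophet}: the super-increasing level construction, the indistinguishability of below-top levels forcing any rule into a best-choice-style gamble, and Yao's principle to pass from the prior to a single hard $\cD$. If you were asked to reproduce the original proof that would be the right skeleton, though the step you flag as the obstacle (choosing the prior on the top level $k$ so that mass lands where the algorithm's stopping probability is small) is indeed where the real work lies and your sketch does not actually carry it out. For the purposes of this paper, however, none of that is needed --- just cite the result.
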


Our $n$-to-Poisson reduction (\Cref{lem:good-poisson-implies-good-fixed-n}) then implies the Poisson version of this claim:
\begin{lemma} 
\label{lem:poisson-secretary-hard}
    For all $\delta \geq 0$ there exists a rate $\lambda_0$ such that, for all $\lambda \geq \lambda_0$, for any stopping rule $\alg_\lambda$ there exists an unknown $\cD$ such that, for Poisson arrivals $\bX$ from $\cD$ at rate $\lambda$,
    \[
        \expectover{\bX \sim \cD}{\alg_\lambda(\bX)} \leq \left(\frac{1}{e} + \delta\right) \cdot \expectover{\bX \sim \cD}{\opt_\lambda(\bX)}.
    \]
\end{lemma}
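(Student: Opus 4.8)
The plan is to obtain this as an essentially immediate corollary of the $n$-to-Poisson reduction (Lemma~\ref{lem:good-poisson-implies-good-fixed-n}) together with the $n$-arrival secretary impossibility (Lemma~\ref{thm:n-secretary-hard-CDFS}), used as black boxes. The intuition is that a Poisson stopping rule that beat $\tfrac1e+\delta$ at a large rate $\lambda$ would, after the reduction, induce an $n$-arrival stopping rule that beats $\tfrac1e+\tfrac\delta2$; since no such $n$-arrival rule exists once $n$ is large, no such Poisson rule can exist either.

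In more detail: fix $\delta>0$ and apply Lemma~\ref{thm:n-secretary-hard-CDFS} with parameter $\delta/2$ to get an $n_0$ such that for every $n\ge n_0$ and every $n$-arrival rule $\alg$ there is an unknown $\cD$ with $\Ex{\alg(\bX)}\le(\tfrac1e+\tfrac\delta2)\Ex{\opt(\bX)}$. Pick $n^\ast\ge n_0$ large enough that the reduction's error term obeys $4\eps(n)\le\delta/4$ for all $n\ge n^\ast$ (recall $\eps(n)=\sqrt{\log n/n}\to0$), and let $\lambda_0$ be the rate $n^\ast(1-\eps(n^\ast))$ that Lemma~\ref{lem:good-poisson-implies-good-fixed-n} associates to $n^\ast$. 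Now take any $\lambda\ge\lambda_0$ and write it as the rate associated to some $n\ge n^\ast$. Suppose for contradiction that a Poisson rule $\alg_\lambda$ is $(\tfrac1e+\delta)$-competitive against every unknown $\cD$ at rate $\lambda$. Passing it through Lemma~\ref{lem:good-poisson-implies-good-fixed-n} yields an $n$-arrival rule $\alg_n$ that is $(\tfrac1e+\delta-4\eps(n))$-competitive, hence $(\tfrac1e+\tfrac{3\delta}4)$-competitive, hence in particular $(\tfrac1e+\tfrac\delta2)$-competitive against every $\cD$ --- contradicting the choice of $n_0$. Therefore no such $\alg_\lambda$ exists, i.e.\ every Poisson rule $\alg_\lambda$ admits an unknown $\cD$ with $\Ex{\alg_\lambda(\bX)}\le(\tfrac1e+\delta)\Ex{\opt_\lambda(\bX)}$, which is the claim. (The boundary case $\delta=0$ is handled either by applying the above along a sequence $\delta_k\downarrow0$ and noting that $1/e$ is already the worst-case optimum of the classical secretary problem, or is simply not needed for the applications in \Cref{sec:smoothness}.)

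Because both ingredients are invoked as black boxes, the argument is short, and I expect the only real friction to be bookkeeping around which rates are covered: Lemma~\ref{lem:good-poisson-implies-good-fixed-n} is literally stated for the discrete family of rates $\{\,n(1-\sqrt{\log n/n}):n\in\bbN\,\}$, whereas the claim quantifies over \emph{all} $\lambda\ge\lambda_0$. This is resolved by observing that the reduction tolerates any slack $\eps\ge\sqrt{\log n/n}$ (a larger slack only enlarges its $O(\eps)$ error term), so for an arbitrary large $\lambda$ one sets $n=\bigl\lceil\lambda\bigl(1+\sqrt{\log\lambda/\lambda}\bigr)\bigr\rceil$ and writes $\lambda=n(1-\eps)$ with $\eps=1-\lambda/n=\Theta\bigl(\sqrt{\log n/n}\bigr)\to0$; all the estimates above then go through with $\eps=\eps(\lambda)$ in place of $\eps(n)$. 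This is the step that deserves care rather than a one-line citation.
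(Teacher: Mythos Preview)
Your proposal is correct and follows essentially the same approach as the paper: both argue by contradiction, feeding a hypothetical $(\tfrac1e+\delta)$-competitive Poisson rule through \Cref{lem:good-poisson-implies-good-fixed-n} to obtain an $n$-arrival rule that is $(\tfrac1e+\Omega(\delta))$-competitive, contradicting \Cref{thm:n-secretary-hard-CDFS}. The paper's version is terser and uses slightly different constants ($\delta/5$ in place of your $\delta/2$ and $\delta/4$); your treatment of the rate-coverage issue---that \Cref{lem:good-poisson-implies-good-fixed-n} is stated for the discrete family $\{n(1-\sqrt{\log n/n})\}$ while the lemma quantifies over all $\lambda\ge\lambda_0$---is more careful than the paper's, which glosses over this point.
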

\begin{proof}
    Suppose for the sake of contradiction that such $\left(\frac{1}{e} + \delta\right)$-competitive stopping rules $\alg_\lambda$ exist for some sequence of $\lambda$ that grows without bound.
    For our choice of $\delta > 0$, we will invoke \Cref{lem:good-poisson-implies-good-fixed-n} with $\alpha \defeq \frac{1}{e} + \delta$ and $\eps \leq \delta/5$. 
    Let $n_0$ be the smallest $n$ such that $\eps = \sqrt{\frac{\log n}{n}} \leq \delta/5$, and then choose one such $\lambda \geq \lambda_0 \defeq n_0 \left(1 - \sqrt{\frac{\log n_0}{n_0}}\right)$.
    \Cref{lem:good-poisson-implies-good-fixed-n} then implies that the derived $n$-arrival stopping rule $\widehat \alg_n$ for the corresponding $n(\lambda)$ is $\alpha - 4\eps = \left(\frac{1}{e} + \frac{\delta}{5}\right)$-competitive.
    For large enough $\lambda$, this contradicts \Cref{thm:n-secretary-hard-CDFS} when invoked with slack $\delta \leftarrow \frac{\delta}{5}$.
\end{proof}

We can now prove that even advice distributions with error on the order of $O(\frac{1}{n})$ are nearly useless.

\begin{theorem}
\label{thm:poisson-not-smooth}
    For all $\delta > 0$ there exists a constant $c$ and a rate $\lambda_0$ such that for all $\lambda \geq \lambda_0$, for all stopping rules $\alg_\lambda$, there exists a pair of distributions $\cD, \cD'$ with $d_{TV}(\cD, \cD') \leq \frac{c}{\lambda}$ such that
    \[
        \expectover{\bX \sim \cD}{\alg_\lambda(\bX, \cD')} \leq \left(\frac{1}{e} + \delta\right) \cdot \expectover{\bX \sim \cD}{\opt_\lambda(\bX)}.
    \]
\end{theorem}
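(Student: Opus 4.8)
The plan is to reduce to \Cref{lem:poisson-secretary-hard}, the unknown-distribution Poisson hardness, but invoked at a \emph{constant} rate rather than at rate $\lambda$. The point is a matching of scales: a total-variation budget of $\Theta(1/\lambda)$ is precisely what is needed to hide a constant \emph{expected number} of arrivals, and a constant expected number of arrivals already makes the i.i.d.-unknown (secretary) problem $(\tfrac1e+\delta)$-hard, provided the constant is chosen large enough in terms of $\delta$. So I would ``dilute'' a hard rate-$R$ secretary instance by mixing it with a point mass at $0$ of weight $1-R/\lambda$: the resulting distribution differs from a fixed advice distribution only on the $R/\lambda$-probability ``real'' part, so its total variation from the advice is at most $R/\lambda$, while a rate-$\lambda$ Poisson stream drawn from it is, by Poisson thinning, a rate-$(\lambda-R)$ stream of useless $0$'s together with an \emph{independent} rate-$R$ stream of genuine draws. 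A single ``surprise scale'' would not suffice — a smart policy would just grab it — which is why one must dilute an entire hard secretary instance.

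In detail: given $\delta$, I would first apply \Cref{lem:poisson-secretary-hard} with slack $\delta/2$ to get a finite rate $R := \lambda_0(\delta/2)$, and set $c := R$, $\lambda_0 := R$. Fix any $\lambda \ge \lambda_0$, any stopping rule $\alg_\lambda$, and write $p := R/\lambda \in (0,1]$. Let $\nu_0$ be an arbitrary fixed distribution on the positive reals (its form is immaterial), put $\cD' := (1-p)\,\delta_0 + p\,\nu_0$ where $\delta_0$ is the point mass at $0$, and define a rate-$R$ stopping rule $\scA$ as follows: on a rate-$R$ stream $\bY$, generate an independent rate-$(\lambda-R)$ stream of $0$-valued arrivals, merge the two in time order into a rate-$\lambda$ stream $\bX$, simulate $\alg_\lambda$ on $\bX$ with advice $\cD'$, stop on whatever $\bY$-arrival $\alg_\lambda$ stops on (collecting its value), and never stop if $\alg_\lambda$ stops on a $0$ or never stops. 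Since every value $\alg_\lambda$ can collect from $\bX$ is either $0$ or an element of $\bY$, we have $\scA(\bY)=\alg_\lambda(\bX,\cD')$ under this coupling, and $\scA$ is a bona fide rate-$R$ stopping rule depending only on $\alg_\lambda$, $\lambda$, and the fixed $\nu_0$ — in particular it is fully specified before any hard distribution is named. So \Cref{lem:poisson-secretary-hard} at rate $R$ yields an unknown $\cD^{\mathrm{hard}}$, which we may take non-atomic and supported on $(0,\infty)$, with $\expect{\scA(\bY)} \le (\tfrac1e+\tfrac\delta2)\expect{\MAX(\bY)}$ for $\bY$ a rate-$R$ i.i.d.\ stream from $\cD^{\mathrm{hard}}$. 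Setting $\cD := (1-p)\,\delta_0 + p\,\cD^{\mathrm{hard}}$ gives $d_{TV}(\cD,\cD') = p\cdot d_{TV}(\cD^{\mathrm{hard}},\nu_0) \le p = c/\lambda$; and since a rate-$\lambda$ stream from $\cD$ decomposes by Poisson thinning into exactly the coupling used to define $\scA$, and the $0$-arrivals do not affect the maximum,
\[
    \expectover{\bX\sim\cD}{\alg_\lambda(\bX,\cD')} = \expect{\scA(\bY)} \le \left(\tfrac1e+\tfrac\delta2\right)\expect{\MAX(\bY)} = \left(\tfrac1e+\tfrac\delta2\right)\expectover{\bX\sim\cD}{\MAX(\bX)} \le \left(\tfrac1e+\delta\right)\expectover{\bX\sim\cD}{\opt_\lambda(\bX)},
\]
using $\opt_\lambda(\bX)=\MAX(\bX)$.

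The main obstacle is conceptual rather than computational: reconciling the tiny $O(1/\lambda)$ total-variation budget — which forces the only discrepancy between $\cD$ and $\cD'$ to live on a $\Theta(1/\lambda)$-probability event, hence on a $\Theta(1)$ expected number of arrivals — with the need for genuine hardness, which one might expect to require many arrivals. The resolution is that only a \emph{fixed} slack $\delta$ is demanded, so the dilution rate $R$ may be a large constant depending on $\delta$, large enough that rate-$R$ unknown-distribution Poisson secretary is already $(\tfrac1e+\tfrac\delta2)$-hard. Two minor loose ends: first, if one insists $\cD$ be non-atomic, replace $\delta_0$ by the uniform distribution on $[0,\eta]$ with $\eta$ below the support of $\cD^{\mathrm{hard}}$ and let $\eta\to 0$, which leaves all inequalities intact; second, one checks that $\scA$ inherits every freedom of $\alg_\lambda$ (its knowledge of $\cD'$, the simulated $0$-arrivals), so no cleverness available to $\alg_\lambda$ is lost — which is automatic, since $\scA$ is a legitimate target of \Cref{lem:poisson-secretary-hard} and loses regardless of how it uses that information.
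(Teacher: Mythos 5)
Your proof is correct and takes essentially the same approach as the paper: dilute a hard rate-$c$ secretary instance with a $(1-c/\lambda)$ point mass at $0$, pair it with a near-trivial advice distribution so the TV distance is $\leq c/\lambda$, and use Poisson thinning to argue that $\alg_\lambda$ on the padded rate-$\lambda$ stream is equivalent to a derived rate-$c$ secretary rule, to which the hardness of \Cref{lem:poisson-secretary-hard} then applies. The only cosmetic differences are that the paper fixes $\cD' = 0$ identically rather than your more general $(1-p)\delta_0 + p\nu_0$, and uses slack $\delta$ directly rather than $\delta/2$ (which is harmless but unnecessary since $\opt_\lambda = \MAX$, so the last inequality in your chain is an equality).
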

\begin{proof}
    The approach is simple: we will choose $\cD$ to be the hard distribution given by \Cref{lem:poisson-secretary-hard}, padded by 0-value arrivals, and the advice distribution to be $\cD'=0$ identically.

    For the specified $\delta$, let $c$ be the minimum Poisson rate given by \Cref{lem:poisson-secretary-hard} for the same choice of $\delta$, and choose $\lambda_0 = c$ also.
    For any $\lambda \geq \lambda_0$ and corresponding $\alg_\lambda$, without loss of generality $\alg_\lambda$ stops only on nonzero arrivals. 
    Let $\alg_c$ be the stopping rule that results from sending $\alg_\lambda$ arrivals at rate $c$, along with 0-arrivals at rate $\lambda - c$.
    Let $\cD_+$ be the hard distribution given by \Cref{lem:poisson-secretary-hard} for this choice of $\delta$, $\lambda_0=c$, and $\alg_c$. 
    Then we form $\cD$ by sampling from $\cD_+$ w.p. $c/\lambda$, and returning $0$ otherwise.
    The choice of $\cD' = 0$ satisfies $d_{TV}(\cD, \cD') \leq \frac{c}{\lambda}$ and is demonstrably useless as advice for $\alg_\lambda$ on $\cD$; therefore its competitive ratio on $\cD$ is the same as that of $\alg_c$ on $\cD_+$.
    This is at most $(\frac{1}{e} + \delta)$ by \Cref{lem:poisson-secretary-hard}, proving the claim.
\end{proof}
The $n$-arrival version of \Cref{thm:poisson-not-smooth} then follows from \Cref{thm:poisson-not-smooth} via the contrapositive of our Poisson-to-$n$ reduction (\Cref{lem:poisson-imposs-implies-fixed-n-imposs}).

In the language of algorithms with advice, this implies that no stopping rule can have a competitive ratio which is too \emph{smooth} as a function of the prediction error.
More precisely, for prediction error bounds $d_{TV}(\cD, \cD') \leq \eps$, suppose some stopping rule $\alg_n$ has competitive ratio at least
\[
    \max_{\substack{\cD, \: \cD' \\ d_{TV}(\cD, \cD') \leq \eps}} \frac{\expectover{\bX \sim \cD}{\alg_n(\bX,\cD')}}{\expectover{\bX \sim \cD}{\opt_n(\bX)}}  \geq \alpha - \xi(\eps),
\]
for constant $\alpha > \frac{1}{e}$ and some error term $\xi(\eps)$ for which $\xi(0) = 0$. 
Then $\xi(\cdot)$ is at best $\Omega(n)$-Lipschitz.

This can also be seen as a multiplicative counterpart to the additive robustness results of \citet{dutting19posted}, at least for the choice of TV-distance as distributional metric.

\section{Other Models and Future Directions} \label{sec:conclusion}

The most immediate avenue of future work is to improve upon our positive and negative results and further constrain the Pareto frontier for this problem.
We conclude by acknowledging a natural extension of our results and identifying possible next directions.

\paragraph{Advice via Samples}
We remark that the three-phase algorithm \threephase{} is particularly amenable to sample-based prophet inequality settings. Here instead of full knowledge of the distribution $\cD$, some number of samples are revealed to the algorithm from $\cD$ but $\cD$ remains unknown. 
Since \threephase{} uses $z\cdot n$ samples from the advice distribution $\cDp$, \Cref{thm:alg-prophet-guarantee} provides guarantees for the setting where only sample access to the advice distribution is given.
In this setting the guarantee from \Cref{lem:three-phase-prob-ratio-bound} holds even if the advice is an adversarial set of values.
    
For the prophet inequality, one sample per distribution is sufficient to match the optimal guarantee \citep{rubinstein20optimal}. 
In the i.i.d.\ setting, $n-1$ samples suffice for a $(1-\frac{1}{e})$-competitive stopping rule \citep{correa19prophet}, and in fact for any $\epsilon > 0$ a competitive ratio of $\beta_0 - \epsilon$ is possible using $O(n)$ samples \citep{rubinstein20optimal}.
By porting this $(\beta_0 - \epsilon)$-competitive sample-based algorithm to the Poisson setting, we could also interpolate between it and \threephase{}, and attain an additive $\epsilon$-approximation for $\epsilon > 0$ to our Pareto frontier in \Cref{fig:frontiers-populated} via $O(n)$ samples from $\cD'$. 

\paragraph{Prophet Secretary}
If we leave the Poisson model and consider strictly $n$ arrivals with distinct distributions, i.e. the non-i.i.d.\ prophet inequality, consistency-robustness questions are uninteresting for reasons outlined in \Cref{app:other-models} when arrival order is fixed. 
But if we have left the Poisson model then the i.i.d.\ prophet problem and the prophet secretary problem are distinct.

While our algorithmic results can be straightforwardly adapted to the $n$-arrival i.i.d.\ prophet problem, this seems less likely for prophet secretary: stopping rules with guarantees proven via approximate stochastic dominance do not satisfy the `best so far' property \citep{correa21prophet}, and a $(1-\frac{1}{e})$-competitive fixed-threshold rule with this property \citep{ehsani18prophet} does not seem amenable to approximate stochastic dominance arguments. 
It therefore remains an intriguing open question what consistency-robustness guarantees can be attained for prophet secretary.

\paragraph{Stochastic Matching} A prominent generalization of the prophet inequality is to online stochastic matching, where not one but multiple goods are allocated to an arriving sequence of buyers drawn i.i.d.\ (say) from a known distribution over buyer types \citep{feldman09online}.
Here the landscape is more complicated in that the optimal known-distribution competitive ratio is not known, as well as by the multiple model choices regarding vertex vs. edge weights and whether distributions over goods/right-hand side vertices are identical or distinct \citep{haeupler11online,brubach20online}.
But work of \citet{kesselheim13optimal} in the random-order model implies that a $\frac{1}{e}$-approximation is possible for the i.i.d. unknown-distribution setting, and the more recent hardness for the unknown-i.i.d.\ prophet inequality implies this is optimal \citep{correa19prophet}.
Nontrivial joint guarantees may be possible for i.i.d.\ stochastic matching as well.

{\footnotesize
    \bibliography{dblp,refs}

@inproceedings{liu21variable,
  author       = {Allen Liu and
                  Renato Paes Leme and
                  Martin P{\'{a}}l and
                  Jon Schneider and
                  Balasubramanian Sivan},
  editor       = {P{\'{e}}ter Bir{\'{o}} and
                  Shuchi Chawla and
                  Federico Echenique},
  title        = {Variable Decomposition for Prophet Inequalities and Optimal Ordering},
  booktitle    = {{EC} '21: The 22nd {ACM} Conference on Economics and Computation,
                  Budapest, Hungary, July 18-23, 2021},
  pages        = {692},
  publisher    = {{ACM}},
  year         = {2021}
}

@inproceedings{correa19prophet,
  author    = {Jos{\'{e}} R. Correa and
               Paul D{\"{u}}tting and
               Felix A. Fischer and
               Kevin Schewior},
  editor    = {Anna Karlin and
               Nicole Immorlica and
               Ramesh Johari},
  title     = {Prophet Inequalities for {I.I.D.} Random Variables from an Unknown Distribution},
  booktitle = {Proceedings of the 2019 {ACM} Conference on Economics and Computation,
               {EC} 2019, Phoenix, AZ, USA, June 24-28, 2019},
  pages     = {3--17},
  publisher = {{ACM}},
  year      = {2019}
}

@article{correa21prophet,
  author    = {Jos{\'{e}} R. Correa and
               Raimundo Saona and
               Bruno Ziliotto},
  title     = {Prophet secretary through blind strategies},
  journal   = {Math. Program.},
  volume    = {190},
  number    = {1},
  pages     = {483--521},
  year      = {2021}
}

@inproceedings{antoniadis20secretary,
  author    = {Antonios Antoniadis and
               Themis Gouleakis and
               Pieter Kleer and
               Pavel Kolev},
  editor    = {Hugo Larochelle and
               Marc'Aurelio Ranzato and
               Raia Hadsell and
               Maria{-}Florina Balcan and
               Hsuan{-}Tien Lin},
  title     = {Secretary and Online Matching Problems with Machine Learned Advice},
  booktitle = {Advances in Neural Information Processing Systems 33: Annual Conference
               on Neural Information Processing Systems 2020, NeurIPS 2020, December
               6-12, 2020, virtual},
  year      = {2020}
}

@inproceedings{dutting19posted,
  author    = {Paul D{\"{u}}tting and
               Thomas Kesselheim},
  editor    = {Anna Karlin and
               Nicole Immorlica and
               Ramesh Johari},
  title     = {Posted Pricing and Prophet Inequalities with Inaccurate Priors},
  booktitle = {Proceedings of the 2019 {ACM} Conference on Economics and Computation,
               {EC} 2019, Phoenix, AZ, USA, June 24-28, 2019},
  pages     = {111--129},
  publisher = {{ACM}},
  year      = {2019}
}

@inproceedings{esfaniari15prophet,
  author       = {Hossein Esfandiari and
                  MohammadTaghi Hajiaghayi and
                  Vahid Liaghat and
                  Morteza Monemizadeh},
  editor       = {Nikhil Bansal and
                  Irene Finocchi},
  title        = {Prophet Secretary},
  booktitle    = {Algorithms - {ESA} 2015 - 23rd Annual European Symposium, Patras,
                  Greece, September 14-16, 2015, Proceedings},
  series       = {Lecture Notes in Computer Science},
  volume       = {9294},
  pages        = {496--508},
  publisher    = {Springer},
  year         = {2015}
}

@inproceedings{kessel22stationary,
  author       = {Kristen Kessel and
                  Ali Shameli and
                  Amin Saberi and
                  David Wajc},
  editor       = {David M. Pennock and
                  Ilya Segal and
                  Sven Seuken},
  title        = {The Stationary Prophet Inequality Problem},
  booktitle    = {{EC} '22: The 23rd {ACM} Conference on Economics and Computation,
                  Boulder, CO, USA, July 11 - 15, 2022},
  pages        = {243--244},
  publisher    = {{ACM}},
  year         = {2022}
}

@inproceedings{buchbinder10secretary,
  author       = {Niv Buchbinder and
                  Kamal Jain and
                  Mohit Singh},
  editor       = {Friedrich Eisenbrand and
                  F. Bruce Shepherd},
  title        = {Secretary Problems via Linear Programming},
  booktitle    = {Integer Programming and Combinatorial Optimization, 14th International
                  Conference, {IPCO} 2010, Lausanne, Switzerland, June 9-11, 2010. Proceedings},
  series       = {Lecture Notes in Computer Science},
  volume       = {6080},
  pages        = {163--176},
  publisher    = {Springer},
  year         = {2010}
}

@inproceedings{rubinstein20optimal,
  author       = {Aviad Rubinstein and
                  Jack Z. Wang and
                  S. Matthew Weinberg},
  editor       = {Thomas Vidick},
  title        = {Optimal Single-Choice Prophet Inequalities from Samples},
  booktitle    = {11th Innovations in Theoretical Computer Science Conference, {ITCS}
                  2020, January 12-14, 2020, Seattle, Washington, {USA}},
  series       = {LIPIcs},
  volume       = {151},
  pages        = {60:1--60:10},
  publisher    = {Schloss Dagstuhl - Leibniz-Zentrum f{\"{u}}r Informatik},
  year         = {2020}
}

@inproceedings{kaplan20competitive,
  author       = {Haim Kaplan and
                  David Naori and
                  Danny Raz},
  editor       = {Shuchi Chawla},
  title        = {Competitive Analysis with a Sample and the Secretary Problem},
  booktitle    = {Proceedings of the 2020 {ACM-SIAM} Symposium on Discrete Algorithms,
                  {SODA} 2020, Salt Lake City, UT, USA, January 5-8, 2020},
  pages        = {2082--2095},
  publisher    = {{SIAM}},
  year         = {2020}
}

@inproceedings{lykouris18competitive,
  author       = {Thodoris Lykouris and
                  Sergei Vassilvitskii},
  editor       = {Jennifer G. Dy and
                  Andreas Krause},
  title        = {Competitive Caching with Machine Learned Advice},
  booktitle    = {Proceedings of the 35th International Conference on Machine Learning,
                  {ICML} 2018, Stockholmsm{\"{a}}ssan, Stockholm, Sweden, July
                  10-15, 2018},
  series       = {Proceedings of Machine Learning Research},
  volume       = {80},
  pages        = {3302--3311},
  publisher    = {{PMLR}},
  year         = {2018}
}

@inproceedings{ehsani18prophet,
  author       = {Soheil Ehsani and
                  MohammadTaghi Hajiaghayi and
                  Thomas Kesselheim and
                  Sahil Singla},
  editor       = {Artur Czumaj},
  title        = {Prophet Secretary for Combinatorial Auctions and Matroids},
  booktitle    = {Proceedings of the Twenty-Ninth Annual {ACM-SIAM} Symposium on Discrete
                  Algorithms, {SODA} 2018, New Orleans, LA, USA, January 7-10, 2018},
  pages        = {700--714},
  publisher    = {{SIAM}},
  year         = {2018}
}

@article{choo24short,
  author       = {Davin Choo and
                  Chun Kai Ling},
  title        = {A short note about the learning-augmented secretary problem},
  journal      = {CoRR},
  volume       = {abs/2410.06583},
  year         = {2024},
  eprinttype    = {arXiv},
  eprint       = {2410.06583}
}

@article{braun24secretary,
  author       = {Alexander Braun and
                  Sherry Sarkar},
  title        = {The Secretary Problem with Predicted Additive Gap},
  journal      = {CoRR},
  volume       = {abs/2409.20460},
  year         = {2024},
  url          = {https://doi.org/10.48550/arXiv.2409.20460},
  doi          = {10.48550/ARXIV.2409.20460},
  eprinttype    = {arXiv},
  eprint       = {2409.20460},
  timestamp    = {Mon, 21 Oct 2024 10:35:40 +0200},
  biburl       = {https://dblp.org/rec/journals/corr/abs-2409-20460.bib},
  bibsource    = {dblp computer science bibliography, https://dblp.org}
}

@article{fujii24secretary,
  author       = {Kaito Fujii and
                  Yuichi Yoshida},
  title        = {The Secretary Problem with Predictions},
  journal      = {Math. Oper. Res.},
  volume       = {49},
  number       = {2},
  pages        = {1241--1262},
  year         = {2024}
}

@inproceedings{dutting21secretaries,
  author       = {Paul D{\"{u}}tting and
                  Silvio Lattanzi and
                  Renato Paes Leme and
                  Sergei Vassilvitskii},
  editor       = {P{\'{e}}ter Bir{\'{o}} and
                  Shuchi Chawla and
                  Federico Echenique},
  title        = {Secretaries with Advice},
  booktitle    = {{EC} '21: The 22nd {ACM} Conference on Economics and Computation,
                  Budapest, Hungary, July 18-23, 2021},
  pages        = {409--429},
  publisher    = {{ACM}},
  year         = {2021}
}

@article{mitzenmacher22algorithms,
  author       = {Michael Mitzenmacher and
                  Sergei Vassilvitskii},
  title        = {Algorithms with predictions},
  journal      = {Commun. {ACM}},
  volume       = {65},
  number       = {7},
  pages        = {33--35},
  year         = {2022}
}

@inproceedings{correa21secretary,
  author       = {Jos{\'{e}} Correa and
                  Andr{\'{e}}s Cristi and
                  Laurent Feuilloley and
                  Tim Oosterwijk and
                  Alexandros Tsigonias{-}Dimitriadis},
  editor       = {D{\'{a}}niel Marx},
  title        = {The Secretary Problem with Independent Sampling},
  booktitle    = {Proceedings of the 2021 {ACM-SIAM} Symposium on Discrete Algorithms,
                  {SODA} 2021, Virtual Conference, January 10 - 13, 2021},
  pages        = {2047--2058},
  publisher    = {{SIAM}},
  year         = {2021}
}

@inproceedings{balkanski24fair,
  author       = {Eric Balkanski and
                  Will Ma and
                  Andreas Maggiori},
  editor       = {Amir Globersons and
                  Lester Mackey and
                  Danielle Belgrave and
                  Angela Fan and
                  Ulrich Paquet and
                  Jakub M. Tomczak and
                  Cheng Zhang},
  title        = {Fair Secretaries with Unfair Predictions},
  booktitle    = {Advances in Neural Information Processing Systems 38: Annual Conference
                  on Neural Information Processing Systems 2024, NeurIPS 2024, Vancouver,
                  BC, Canada, December 10 - 15, 2024},
  year         = {2024}
}

@inproceedings{correa17posted,
  author       = {Jos{\'{e}} Correa and
                  Patricio Foncea and
                  Ruben Hoeksma and
                  Tim Oosterwijk and
                  Tjark Vredeveld},
  editor       = {Constantinos Daskalakis and
                  Moshe Babaioff and
                  Herv{\'{e}} Moulin},
  title        = {Posted Price Mechanisms for a Random Stream of Customers},
  booktitle    = {Proceedings of the 2017 {ACM} Conference on Economics and Computation,
                  {EC} '17, Cambridge, MA, USA, June 26-30, 2017},
  pages        = {169--186},
  publisher    = {{ACM}},
  year         = {2017}
}

@inproceedings{abolhassani17beating,
  author       = {Melika Abolhassani and
                  Soheil Ehsani and
                  Hossein Esfandiari and
                  MohammadTaghi Hajiaghayi and
                  Robert D. Kleinberg and
                  Brendan Lucier},
  editor       = {Hamed Hatami and
                  Pierre McKenzie and
                  Valerie King},
  title        = {Beating 1-1/e for ordered prophets},
  booktitle    = {Proceedings of the 49th Annual {ACM} {SIGACT} Symposium on Theory
                  of Computing, {STOC} 2017, Montreal, QC, Canada, June 19-23, 2017},
  pages        = {61--71},
  publisher    = {{ACM}},
  year         = {2017}
}

@inproceedings{feldman09online,
  author       = {Jon Feldman and
                  Aranyak Mehta and
                  Vahab S. Mirrokni and
                  S. Muthukrishnan},
  title        = {Online Stochastic Matching: Beating 1-1/e},
  booktitle    = {50th Annual {IEEE} Symposium on Foundations of Computer Science, {FOCS}
                  2009, Atlanta, Georgia, USA, October 25-27, 2009},
  pages        = {117--126},
  publisher    = {{IEEE} Computer Society},
  year         = {2009}
}

@inproceedings{kesselheim13optimal,
  author       = {Thomas Kesselheim and
                  Klaus Radke and
                  Andreas T{\"{o}}nnis and
                  Berthold V{\"{o}}cking},
  editor       = {Hans L. Bodlaender and
                  Giuseppe F. Italiano},
  title        = {An Optimal Online Algorithm for Weighted Bipartite Matching and Extensions
                  to Combinatorial Auctions},
  booktitle    = {Algorithms - {ESA} 2013 - 21st Annual European Symposium, Sophia Antipolis,
                  France, September 2-4, 2013. Proceedings},
  series       = {Lecture Notes in Computer Science},
  volume       = {8125},
  pages        = {589--600},
  publisher    = {Springer},
  year         = {2013}
}

@inproceedings{haeupler11online,
  author       = {Bernhard Haeupler and
                  Vahab S. Mirrokni and
                  Morteza Zadimoghaddam},
  editor       = {Ning Chen and
                  Edith Elkind and
                  Elias Koutsoupias},
  title        = {Online Stochastic Weighted Matching: Improved Approximation Algorithms},
  booktitle    = {Internet and Network Economics - 7th International Workshop, {WINE}
                  2011, Singapore, December 11-14, 2011. Proceedings},
  series       = {Lecture Notes in Computer Science},
  volume       = {7090},
  pages        = {170--181},
  publisher    = {Springer},
  year         = {2011}
}

@article{brubach20online,
  author       = {Brian Brubach and
                  Karthik Abinav Sankararaman and
                  Aravind Srinivasan and
                  Pan Xu},
  title        = {Online Stochastic Matching: New Algorithms and Bounds},
  journal      = {Algorithmica},
  volume       = {82},
  number       = {10},
  pages        = {2737--2783},
  year         = {2020}
}

@phdthesis{singla18combinatorial,
  author  = {Sahil Singla},
  title   = {Combinatorial Optimization
Under Uncertainty:
Probing and Stopping-Time Algorithms},
  school  = {Carnegie Mellon University},
  year    = {2018}
}

@misc{li2023prophet,
      title={Prophet Inequality on I.I.D. Distributions: Beating 1-1/e with a Single Query}, 
      author={Bo Li and Xiaowei Wu and Yutong Wu},
      year={2023},
      eprint={2205.05519},
      archivePrefix={arXiv},
      primaryClass={cs.DS},
      url={https://arxiv.org/abs/2205.05519}, 
}

@article{allaart07prophet,
author = {Pieter C. Allaart},
title = {Prophet Inequalities for I.I.D. Random Variables with Random Arrival Times},
journal = {Sequential Analysis},
volume = {26},
number = {4},
pages = {403--413},
year = {2007},
publisher = {Taylor \& Francis}
}

@article{krengel1978semiamarts,
  title={On semiamarts, amarts, and processes with finite value},
  author={Krengel, Ulrich and Sucheston, Louis},
  journal={Probability on Banach spaces},
  volume={4},
  number={197-266},
  pages={1--2},
  year={1978},
  publisher={Dekker New York}
}

@article{hill1982comparisons,
  title={Comparisons of stop rule and supremum expectations of iid random variables},
  author={Hill, Theodore P and Kertz, Robert P},
  journal={The Annals of Probability},
  pages={336--345},
  year={1982},
  publisher={JSTOR}
}

@article{kertz1986stop,
  title={Stop rule and supremum expectations of iid random variables: a complete comparison by conjugate duality},
  author={Kertz, Robert P},
  journal={Journal of multivariate analysis},
  volume={19},
  number={1},
  pages={88--112},
  year={1986},
  publisher={Elsevier}
}

@misc{cannonne2019poisson,
	author =        {Cl{\'e}ment L. Canonne},
	note =          {Accessed August 30, 2025
	\url{http://www.cs.columbia.edu/~ccanonne/files/misc/2017-poissonconcentration.pdf}},
	title =         {A short note on {P}oisson tail bounds},
	year =          {2019},
}

@inproceedings{gupta2022random,
  title={Random order online set cover is as easy as offline},
  author={Gupta, Anupam and Kehne, Gregory and Levin, Roie},
  booktitle={2021 IEEE 62nd Annual Symposium on Foundations of Computer Science (FOCS)},
  pages={1253--1264},
  year={2022},
  organization={IEEE}
}

@inproceedings{grandoni2008set,
  title={Set covering with our eyes closed},
  author={Grandoni, Fabrizio and Gupta, Anupam and Leonardi, Stefano and Miettinen, Pauli and Sankowski, Piotr and Singh, Mohit},
  booktitle={2008 49th Annual IEEE Symposium on Foundations of Computer Science},
  pages={347--356},
  year={2008},
  organization={IEEE}
}

@inproceedings{garg08stochastic,
  author       = {Naveen Garg and
                  Anupam Gupta and
                  Stefano Leonardi and
                  Piotr Sankowski},
  editor       = {Shang{-}Hua Teng},
  title        = {Stochastic analyses for online combinatorial optimization problems},
  booktitle    = {Proceedings of the Nineteenth Annual {ACM-SIAM} Symposium on Discrete
                  Algorithms, {SODA} 2008, San Francisco, California, USA, January 20-22,
                  2008},
  pages        = {942--951},
  publisher    = {{SIAM}},
  year         = {2008}
}

@article{devanur19near,
  author       = {Nikhil R. Devanur and
                  Kamal Jain and
                  Balasubramanian Sivan and
                  Christopher A. Wilkens},
  title        = {Near Optimal Online Algorithms and Fast Approximation Algorithms for
                  Resource Allocation Problems},
  journal      = {J. {ACM}},
  volume       = {66},
  number       = {1},
  pages        = {7:1--7:41},
  year         = {2019}
}

@article{bai2025optimal,
  title={Optimal Stopping with a Predicted Prior},
  author={Bai, Tian and Huang, Zhiyi and Lee, Chui Shan and Li, Dongchen},
  journal={arXiv preprint arXiv:2511.03289},
  year={2025}
}
}

\newpage 
\appendix 
\section{Other Settings: Prophet, Secretary, and Poisson}\label{app:other-models}

In the problem of optimal stopping, an unknown sequence of values arrives one-at-a-time, and each arrival must be irrevocably chosen or rejected before the next value arrives.
The offline optimum $\OPT$ is the maximum of the sequence.
In the worst case this problem is intractable, and $\alpha = 1/n$ is the best competitive ratio possible. 
(This is attained by precommitting to an index $i \in [n]$ u.a.r., and is best possible for the correlated instance which extends the 2-arrival hard instance $X_1 = 1$, $X_2 = 1/\eps$ w.p.\ $\eps$ by additional correlated $X_i = 1/\eps^i$ w.p. $\eps^i$, where $X_i > 0$ only when $X_{i-1} > 0$.)
This motivates beyond-worst-case assumptions, among them most prominently the so-called secretary problem and the prophet inequality paradigm.

For the prophet inequality, an adversary chooses a sequence of distributions $\cD_1, \ldots \cD_n$ and the algorithm faces a sequence of independent samples from these distributions in a known or unknown order.
But if the knowledge of $\cD_1, \ldots \cD_n$ is unreliable then this problem is intractable, because if the advice is wrong, the true $\cD_1, \ldots \cD_n$ may deterministically encode a hard instance.
\begin{observation}
    For the prophet inequality, all attainable consistency-robustness guarantees $(\alpha, \beta)$ satisfy $\alpha \leq 1/2$ and $\beta \leq O(1/n)$.
\end{observation}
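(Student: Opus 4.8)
I would prove the two parts separately.

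\textbf{The bound $\alpha \le 1/2$.} This is just the classical prophet inequality impossibility, padded out to $n$ arrivals. I would take $\cD_1$ to be the point mass at $1$, let $\cD_2$ equal $1/\eps$ with probability $\eps$ and $0$ otherwise, and set $\cD_3 = \dots = \cD_n$ to be the point mass at $0$. Then the offline optimum has expectation $2 - \eps$, while any randomized stopping rule collects expected reward at most $1$: accepting the first arrival gives exactly $1$, and rejecting it leaves expected reward $\eps \cdot (1/\eps) = 1$ because every later arrival other than $X_2$ is zero, so any mixture of the two is at most $1$. Since this holds even when the advice coincides with the true distribution, $\alpha \le 1/(2 - \eps)$ for all $\eps > 0$, and $\eps \to 0$ gives $\alpha \le 1/2$.

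\textbf{The bound $\beta \le O(1/n)$: setup.} For the robustness bound I would restrict attention to a single ``all-zero'' advice $\mathbf 0 = (\{0\}, \dots, \{0\})$. Against this advice the policy runs a fixed (possibly randomized) stopping rule $\ALG_{\mathbf 0}$ that reacts to the observed values but has no usable knowledge of the true distributions. I plan to defeat $\ALG_{\mathbf 0}$ with the family of \emph{deterministic} instances $S_1, \dots, S_n$, where $S_k$ presents the values $1,\, \eps^{-1},\, \dots,\, \eps^{-(k-1)},\, 0,\, \dots,\, 0$ with the last $n-k$ entries zero, so that $\MAX(S_k) = \eps^{-(k-1)}$ is attained at position $k$. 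Each $S_k$ is a product distribution (of point masses) and differs from $\mathbf 0$, hence is an admissible true instance for the robustness adversary.

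\textbf{The bound $\beta \le O(1/n)$: the counting argument.} The key structural fact is that the length-$j$ prefix $1, \eps^{-1}, \dots, \eps^{-(j-1)}$ is common to $S_j, S_{j+1}, \dots, S_n$, so the probability $\rho_j$ that $\ALG_{\mathbf 0}$ reaches position $j$ having seen exactly this prefix without stopping, and the conditional probability $q_j$ that it then stops, are intrinsic to $\ALG_{\mathbf 0}$ and do not depend on which $S_k$ (with $k \ge j$) is run. On $S_k$ the rule therefore earns $\expect{\ALG_{\mathbf 0}(S_k)} = \sum_{j=1}^{k} \rho_j q_j\, \eps^{-(j-1)}$, since stopping at any position $> k$ lands on a zero, so its competitive ratio on $S_k$ equals $\sum_{j=1}^{k}\rho_j q_j\, \eps^{\,k-j}$. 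Averaging over $k$ drawn uniformly from $[n]$ and exchanging the two sums, the inner geometric sum $\sum_{k=j}^{n}\eps^{\,k-j} = 1 + O(\eps)$ leaves $(1 + O(\eps)) \cdot \frac1n \sum_{j=1}^{n}\rho_j q_j$, and $\sum_{j=1}^{n}\rho_j q_j = \sum_{j=1}^{n}(\rho_j - \rho_{j+1}) = \rho_1 - \rho_{n+1} \le 1$ by telescoping, using $\rho_{j+1} = \rho_j(1-q_j)$, $\rho_1 = 1$, and $\rho_{n+1} \ge 0$. Hence the average competitive ratio over $S_1, \dots, S_n$ is $\frac1n + O(\eps)$; choosing $\eps$ small (say at most $1/n$) forces some particular $S_k$ to have competitive ratio $O(1/n)$, which together with advice $\mathbf 0$ certifies $\beta \le O(1/n)$.

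\textbf{Main obstacle.} The delicate step is making the ``common prefix'' claim airtight: one must check that $\ALG_{\mathbf 0}$ genuinely cannot tell the scenarios $S_k$ ($k \ge j$) apart before it has passed the peak, so that the single pair $(\rho_j, q_j)$ really does govern its behaviour on all of them, and one must absorb into the $O(\eps)$ error the harmless contributions from the rule stopping on a trailing zero or declining the final arrival. The remaining pieces --- the telescoping bound $\sum_j \rho_j q_j \le 1$ and the reduction of a lower bound against a randomized rule to an averaging argument over deterministic instances, in the spirit of Yao's principle --- are routine.
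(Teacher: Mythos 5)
Your proof is correct and is essentially a rigorous version of the argument the paper only sketches: the $\alpha \le 1/2$ bound is the classical two-arrival prophet hardness padded out to $n$ arrivals, and the $\beta \le O(1/n)$ bound follows because useless advice reduces the policy to worst-case optimal stopping against product-of-point-mass instances. The paper merely gestures at the correlated hard instance $X_i = \eps^{-i}$ w.p.\ $\eps^i$ and asserts that a wrong prior lets the true $\cD_1,\ldots,\cD_n$ deterministically encode it; your Yao-style averaging over the explicit instances $S_1,\ldots,S_n$, with the telescoping bound $\sum_j \rho_j q_j \le 1$, is the formal realization of that claim (and the ``common prefix'' step you flagged as delicate is in fact immediate, since $\ALG_{\mathbf 0}$'s decision at step $j$ depends only on the history, which is identical across all $S_k$ with $k \ge j$).
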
 

However if the arrivals are independent samples from a \emph{shared} underlying $\cD_i = \cD$, or if the values are independent samples from distinct $\cD_i$ but arrive in a uniformly random order (the so-called prophet secretary problem \cite{esfaniari15prophet}), then when the distributional advice is wrong there is still hope: one can always ignore the advice and deploy the optimal stopping rule for the secretary problem, for a robustness guarantee of $\alpha = 1/e$. 

We consider the Poisson setting, where values arrive according to a Poisson point process with rate $\lambda=n$ in the time interval $t\in [0,1]$, and each arrival value is an independent sample from an underlying distribution $\cD$. 
In this setting, the i.i.d. prophet inequality and the prophet secretary problem coincide: $n$ Poisson processes with rate $\lambda=1$ and values drawn from distinct $\cD_i$ is equivalent to one process with rate $n$ and values drawn from the average distribution.

\section{Proofs from \texorpdfstring{\Cref{sec:prelims}}{Section 2}} \label{app:prelims}

We will make use of tail bounds on the number of arrivals in a Poisson process with rate $\lambda$.
\begin{lemma}[Theorem 1 of \citep{cannonne2019poisson}]
\label{lem:poisson-tail-bounds}
	Let $X \sim Poisson(\lambda)$ for $\lambda > 0$. 
	Then for any $z > 0$,
	\begin{align}
		\Pr{X \leq \lambda - z} & \leq \exp\left(-\frac{z^2}{2\lambda} \right), \label{eq:poisson-lower-tail-bound} \\
		\Pr{X \geq \lambda + z} & \leq \exp\left(-\frac{z^2}{2(\lambda+z)} \right). \label{eq:poisson-upper-tail-bound}
	\end{align}
\end{lemma}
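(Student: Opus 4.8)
The statement is the standard Chernoff-type concentration inequality for a Poisson random variable, so the plan is to prove it by the exponential moment method. The one computation I would rely on is the moment generating function $\Ex{e^{tX}} = e^{\lambda(e^{t}-1)}$, valid for all $t \in \mathbb{R}$, which follows directly from the Poisson pmf.

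For the lower tail \eqref{eq:poisson-lower-tail-bound}, I would first handle the degenerate case $z \geq \lambda$ separately: there the event $\{X \leq \lambda - z\}$ has probability at most $\Pr{X = 0} = e^{-\lambda}$, which is already at most $e^{-z^{2}/(2\lambda)}$. So assume $z < \lambda$. Applying Markov's inequality to $e^{-sX}$ for $s > 0$ gives $\Pr{X \leq \lambda - z} \leq \exp\!\big(s(\lambda - z) + \lambda(e^{-s} - 1)\big)$, and I would optimize by taking $e^{-s} = 1 - z/\lambda$. Writing $u := z/\lambda \in (0,1)$, this yields the bound $\exp\!\big(\lambda(-(1-u)\ln(1-u) - u)\big)$, so it remains to verify the elementary inequality $-(1-u)\ln(1-u) - u \leq -u^{2}/2$ on $(0,1)$. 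I would do this with a two-derivative argument: the function $g(u) := -(1-u)\ln(1-u) - u + u^{2}/2$ satisfies $g(0) = g'(0) = 0$ and $g''(u) = -u/(1-u) < 0$ on $(0,1)$, so $g' < 0$ and hence $g < 0$ there.

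For the upper tail \eqref{eq:poisson-upper-tail-bound}, I would apply Markov to $e^{tX}$ for $t > 0$, obtaining $\Pr{X \geq \lambda + z} \leq \exp\!\big(-t(\lambda + z) + \lambda(e^{t} - 1)\big)$, and optimize by taking $e^{t} = 1 + z/\lambda$. With $u := z/\lambda > 0$ this gives $\exp\!\big(\lambda(-(1+u)\ln(1+u) + u)\big)$, and since $z^{2}/(2(\lambda + z)) = \lambda u^{2}/(2(1+u))$, the claim reduces to $-(1+u)\ln(1+u) + u \leq -u^{2}/(2(1+u))$ on $(0,\infty)$. Again I would verify this via derivatives: setting $\phi(u) := -(1+u)\ln(1+u) + u + u^{2}/(2(1+u))$, one has $\phi(0) = \phi'(0) = 0$, and a short simplification gives $\phi'(u) = -\ln(1+u) + \tfrac{1}{2}\big(1 - (1+u)^{-2}\big)$ and $\phi''(u) = -u(u+2)/(1+u)^{3} < 0$, so $\phi < 0$ on $(0,\infty)$ as needed.

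The only non-mechanical step is choosing the two quadratic relaxations of the exact Chernoff exponents and checking that they hold: these are calibrated so that both sides agree to second order at $u = 0$, which is precisely what makes the sign of the second derivative decide the inequality. Everything else (the MGF identity, the two Markov applications, and the two optimizations over the tilt parameter) is routine. Since this inequality is Theorem~1 of \citet{cannonne2019poisson}, one may alternatively just cite it.
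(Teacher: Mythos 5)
Your proof is a correct and self-contained derivation of the Poisson Chernoff bounds via the exponential-moment method; the paper itself does not prove this lemma at all but simply cites it as Theorem~1 of \citet{cannonne2019poisson}, so there is no in-paper argument to compare against. Your main computation is the standard one: apply Markov to $e^{\pm tX}$, use $\Ex{e^{tX}} = e^{\lambda(e^t-1)}$, optimize the tilt, and then show that the exact Chernoff exponent is dominated by the desired quadratic exponent by a two-derivative argument calibrated at $u=0$. I checked both quadratic relaxations: $g''(u) = -u/(1-u)$ and $\phi''(u) = -u(u+2)/(1+u)^3$ are both right, and $g(0)=g'(0)=\phi(0)=\phi'(0)=0$, so both inequalities follow as you say.

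One small slip in the degenerate case of the lower tail. You write that for $z \geq \lambda$ one has $\Pr{X \leq \lambda - z} \leq \Pr{X=0} = e^{-\lambda} \leq e^{-z^2/(2\lambda)}$, but the last inequality $e^{-\lambda} \leq e^{-z^2/(2\lambda)}$ is equivalent to $z \leq \sqrt{2}\,\lambda$ and so fails for $z > \sqrt{2}\,\lambda$. This does not invalidate the lemma, because for $z > \lambda$ the left-hand side is exactly $0$ (since $\lambda - z < 0$ and $X \geq 0$), but the chain of inequalities as written is not valid in that range. The clean split is: for $z > \lambda$ the probability is $0$ and the bound is trivial; for $z = \lambda$ the probability is $e^{-\lambda} \leq e^{-\lambda/2}$; and for $0 < z < \lambda$ your optimization applies.
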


\ntopoisson* 

\begin{proof}[Proof of \Cref{lem:good-poisson-implies-good-fixed-n}]
    For this proof we will let $\ALG_\lambda$ and $\ALG_n$ denote the random variable which is the value on which the Poisson arrival and $n$-arrival algorithms stop, respectively.
    Similarly, $\OPT_\lambda$ and $\OPT_n$ denote the maximum value from among Poisson arrivals at rate $\lambda$ and from among $n$ arrivals, respectively.
    Additionally, we will let $\OPT_\lambda^{> n}$ denote the maximum value from among Poisson arrivals at rate $\lambda$, \emph{subsequent to the first $n$ arrivals}.
    
    Our proof will proceed from a pair of claims. 
    We will prove these for $\lambda = n(1-\eps)$ for $\eps = \sqrt{\frac{\log n}{n}}$.
    Since $\alpha \leq 1$, the claim automatically holds for all $n \leq 50$. 
    
    First, we will require that for any $x > 0$, it is much less likely that $\OPT_\lambda^{> n}\geq x$ than that $\OPT_n \geq x$. 
    In particular, via a union bound we have
    \begin{align*}
        \Pr{\OPT_\lambda^{>n} \geq x} &= \sum_{k \geq 1} \Pr{Poisson(\lambda) = n+k}\cdot \Pr{\OPT_k \geq x} \\
        &\leq \sum_{k \geq 1} \Pr{Poisson(\lambda) = n+k}\cdot \left\lceil \frac{k}{n}\right\rceil \Pr{\OPT_n \geq x} \\
        &= \Pr{\OPT_n \geq x} \sum_{\ell \geq 1} \Pr{Poisson(\lambda) \geq \ell\cdot n} \\
        &\leq \Pr{\OPT_n \geq x} \cdot \left( \exp\left(-\frac{(n - \lambda)^2}{2n}\right) + \sum_{\ell \geq 2} \exp\left(-\frac{(\ell n - \lambda)^2}{2\ell n}\right) \right) \\
        &\leq \Pr{\OPT_n \geq x} \cdot \left( n^{-1/2} + e^{-n/6}\cdot\sum_{\ell \geq 0} e^{-\ell n/3} \right) \\
        &\leq 2n^{-1/2} \cdot \Pr{\OPT_n \geq x} 
    \end{align*}
    for $n \geq 7$.
    In the last steps we applied \Cref{lem:poisson-tail-bounds}, our choice of $\lambda = n-\sqrt{n\log n}$, and the fact that $\frac{(\ell n - \lambda)^2}{2 \ell n} \geq \frac{(\ell - 1)^2 n}{2 \ell} \geq n(\frac{\ell}{3}- \frac{1}{2})$.
    Therefore
    \begin{equation} \label{eq:poisson-to-n-suffix-opt-unlikely}
		\Pr{\OPT_\lambda^{>n} \geq x} \leq \frac{2}{\sqrt{n}} \cdot \Pr{\OPT_n \geq x}.
    \end{equation}

	We will also use that at rate $\lambda$, the Poisson optimum almost stochastically dominates the $n$-arrival optimum.
	To show this we couple the distributions of $\bX_n$ and $\bX_\lambda$.
	To take a joint draw $(\bX_n, \bX_\lambda)$,
	\begin{itemize}
		\item Sample $\bX_n = X_1, \ldots, X_n$ by drawing $n$ i.i.d. samples from $\cD$.
		\item Randomly permute $\bX_n$ to form $Y_1, \ldots, Y_n$; sample $Y_i \sim \cD$ i.i.d. for all $i \geq n+1$.
		\item Form $\bX_\lambda$ by sampling Poisson arrival times at rate $\lambda$, and setting the $i$-th arrival value to $Y_i$.
	\end{itemize}
	Now suppose that for a joint draw $(\bX_n, \bX_\lambda)$, it holds that $\max \bX_n \geq x$. 
	What is the probability that $\max \bX_\lambda \geq x$ for the corresponding $\bX_\lambda$?
	Let $X^*$ be an arbitrary value in $\bX_n$ for which $X^* \geq x$. 
	What is the probability $X^*$ makes it into $\bX_\lambda$?
	By \Cref{lem:poisson-tail-bounds} we have for $z > 0$ that
	\begin{align*}
		\Pr{\max \bX_\lambda \geq x \vert \max \bX_n \geq x} &\geq \Pr{X^* \in \bX_\lambda \given X^* \in \bX_n} \notag \\
		&\geq \frac{\lambda - z}{n} \cdot \Pr{Poisson(\lambda) \geq \lambda - z} \notag \\
		&\geq \frac{\lambda - z}{n} \left(1 - e^{-\frac{z^2}{2 \lambda}} \right) \\
		&\geq \left(1 - \frac{2\sqrt{\log n}}{\sqrt{n}}\right) \cdot  \left(1 - \frac{1}{\sqrt{n}} \right) \\
		&\geq 1 - n^{-1/2}\left(1 + 2\log^{1/2} n\right) 
	\end{align*}
    for the choice of $z = \sqrt{n \log n}$ and $\lambda = n - \sqrt{n \log n}$.
	Multiplying by $\Pr{\max \bX_n \geq x}$ then gives
	\begin{align}
		\Pr{\OPT_\lambda\geq x} &\geq \left(1- \frac{1 + 2\sqrt{\log n}}{\sqrt{n}}\right) \cdot \Pr{\OPT_n \geq x}, \notag \\
        \expect{\OPT_\lambda} &\geq \left(1- \frac{1 + 2\sqrt{\log n}}{\sqrt{n}}\right) \cdot \expect{\OPT_n}. \label{eq:poisson-to-n-poisson-opt-almost-n-opt}
	\end{align}

	We are now ready to combine these claims to lower bound the performance of $\ALG_n$.
	We will argue by approximate stochastic dominance.
	For any $x > 0$, if our simulation of $\ALG_\lambda$ attempts to stop on anything after the $n$-th arrival, then $\ALG_n$ will fail.
    Therefore its failure to stop on a given value (relative to $\ALG_\lambda$) is at most the probability of seeing such a value after the $n$-th arrival.
    Applying this, we have
    \begin{align*}
		\Ex{\ALG_n} & = \int_0^\infty \Pr{\ALG_n \geq x} dx \\
        & \geq \int_0^\infty \Pr{\ALG_\lambda \geq x} - \Pr{\OPT_\lambda^{>n} \geq x} dx \\
		& \geq \int_0^\infty \Pr{\ALG_\lambda \geq x} - \frac{2}{\sqrt{n}} \cdot \Pr{\OPT_n \geq x} dx
		&& \text{(by \eqref{eq:poisson-to-n-suffix-opt-unlikely})} \\
		& = \Ex{\ALG_\lambda} - \epsilon \cdot \Ex{\OPT_n} \\
        & \geq \alpha \cdot \Ex{\OPT_\lambda} - \frac{2}{\sqrt{n}} \cdot \Ex{\OPT_n} &&
		\text{(by assumption on $\ALG_\lambda$)} \\
		& \geq \alpha \left(1- \frac{1 + 2\sqrt{\log n}}{\sqrt{n}}\right)\cdot \Ex{\OPT_n} - \frac{2}{\sqrt{n}} \cdot \Ex{\OPT_n} 
		&& \text{(by \eqref{eq:poisson-to-n-poisson-opt-almost-n-opt})} \\
		& \geq (\alpha - 4\eps) \cdot \Ex{\OPT_n}
	\end{align*}
    for all $\alpha$ and $n \geq 10$ for our choice of $\eps$.
\end{proof}

\poissonton*

\begin{proof}[Proof of \Cref{lem:good-fixed-n-implies-good-poisson}]
    Given Poisson arrivals $\bX$ from distribution $\cD$ and a fixed-$n$ stopping rule $\alg_n$, we will construct $\alg_\lambda$ by constructing a sequence $\tilde \bX = \tilde X_1, \ldots, \tilde X_n$ of length $n$ using the Poisson arrivals, and then simulate $\alg_n$ on $\tilde \bX$.
    In particular, let $\tilde \cD$ be the distribution which is $0$ with probability $e^{-\eps}$ and $\cD$ otherwise.
    Then we will show that if $\alg_n$ is $\alpha$-competitive on $n$ i.i.d. arrivals from $\tilde \cD$, 
    then $\alg_\lambda$ is $\alpha\cdot(1-\eps/2)$-competitive on Poisson arrivals from $\cD$ at rate $\lambda = \eps n$.
    The bulk of the argument is then lower bounding the performance of $\alg_\lambda$ on the Poisson arrivals using our assumption that $\alg_n$ is $\alpha$-competitive for the fixed $n$ arrivals setting.

    We first detail the construction of $\tilde \bX$ from the i.i.d. Poisson arrivals at rate $\lambda = \eps \cdot n$ from distribution $\cD$ for $t \in [0,1]$.
    Let $F$ be the CDF of the Poisson arrival distribution $\cD$.
    Let $\tilde X_i$ be the first Poisson arrival in the range $\left[\frac{i-1}{n}, \frac{i}{n}\right)$ if a first arrival exists, and $\tilde X_i = 0$ otherwise.
    Then the CDF $\tilde F$ of this new distribution $\tilde \cD$ satisfies
    \begin{align*}
        1 - \tilde F(x) &= \prob{\text{first arrival in } [0,1/n] \text{ is } \geq x}  \\
        &=\int_0^{1/n} \lambda (1-F(x)) \cdot e^{-\lambda t} \:dt \\
        &=\lambda (1-F(x)) \cdot \frac{1}{\lambda} \left( 1- e^{-\lambda /n } \right) \\
        &= (1-F(x)) \cdot \left( 1- e^{-\lambda /n } \right) \\
        &\geq (1-F(x)) \cdot \eps (1 - \eps/2),
    \end{align*}
    where in the last step we use that $\lambda = \eps n$ and the fact that $e^{-x} \leq 1 - x + x^2/2$ for $x \in [0,1]$.
    Rearranging and letting $\tilde F^\uparrow(x)$ denote the CDF of the maximum of $n$ draws from $\tilde \cD$, we then have 
    \begin{align*}
        \tilde F(x) &\leq 1 - (1-F(x))\cdot \eps (1-\eps/2), \\
        \tilde F^\uparrow(x) &\leq \left(1 - (1-F(x))\cdot \eps (1-\eps/2)\right)^n \\
        &\leq \exp\left( - (1-\eps/2)\cdot \eps n \cdot (1-F(x)) \right),
    \end{align*}
    since $(1-\frac{1}{n} x)^n \leq e^{-x}$ for $n \geq 1$ and $x \in [0,1]$.

    With this in hand, we now turn to bounding the competitive ratio of $\alg_\lambda$.
    Starting with the assumed competitive ratio guarantee for $\alg_n$ on our constructed sequence $\tilde \bX$, we have
    \begin{align*}
        \expectover{\bX \sim \cD}{\alg_\lambda (\bX)} &= \expectover{\tilde \bX \sim \tilde \cD}{\alg_n(\tilde \bX)} \\
        &\geq \alpha \cdot \int_0^\infty \left( 1 - \tilde F ^\uparrow (x)\right) \: dx \\
        &\geq \alpha \cdot \int_0^\infty \left( 1 - \exp\left( - (1-\eps/2)\cdot \eps n \cdot (1-F(x)) \right)\right) \: dx \\
        &\geq \alpha \cdot (1-\eps/2)\cdot \int_0^\infty \left( 1 - \exp\left( - \eps n \cdot (1-F(x)) \right)\right) \: dx,
        \intertext{since $f(x) = 1 - e^{a x}$ is concave for $a \in [0,1]$ and $f(0) = 0$. Finally, since $\lambda = \eps n$ this is the expected maximum of the Poisson process $\bX$, and we have}
        \expectover{\bX \sim \cD}{\alg_\lambda} & \geq \alpha \cdot (1-\eps/2)\cdot \expectover{\bX\sim \cD}{\max(\bX)},
    \end{align*}
    proving the claim.
\end{proof}
\section{Proofs from \texorpdfstring{\Cref{sec:algos}}{Section 3}} \label{app:algos}

\conditionalprobforlowXj*

\begin{proof}[Proof of \Cref{lem:condnl-prob-low-j}]
To show this for fixed $x_i = x^{(j)}$, consider the largest arrival in the range $\bX_1$ and let $k$ be its order statistic, so that $x^{(k)} \defeq \max_{i' \in [i-1]} x_{i'}$. 

Let $\Lambda_i(j)$ denote the event that $x_i = x^{(j)}$ is the largest arrival so far, i.e. that $x^{(k)} \leq x^{(j)}$. 
Let $\neg \Lambda_i(j)$ denote its complement. 

We will also define the events $\Xi_i^1(j)$, $\Xi_i^2(j)$, and $\Xi_i^3(j)$ in order to case on the value of the largest arrival $x^{(k)}$ prior to $x_i = x^{(j)}$, relative to the values $x^{(\ell)}$ and $x_0^{(\ell)}$:
\begin{itemize}
    \item $\Xi_i^1(j)$ corresponding to $x^{(k)} \geq x^{(\ell)}$,
    \item $\Xi_i^2(j)$ corresponding to $x_0^{(\ell)} \leq x^{(k)} < x^{(\ell)}$,
    \item $\Xi_i^3(j)$ corresponding to $x^{(k)} < x_0^{(\ell)}$.
\end{itemize}
Since $x^{(\ell)} \geq x_0^{(\ell)}$, these events partition our probability space.

Then the law of total probability gives
\begin{align}
    &\Pr{\ALG(\bX) = x^{(j)}\given x_i = x^{(j)}} = \notag \\ 
    &\quad \Pr{\ALG(\bX) = x^{(j)}\given \Lambda_i(j) \wedge \Xi_i^1(j) \wedge x_i = x^{(j)}} \cdot \Pr{\Lambda_i(j) \wedge \Xi_i^1(j) \given x_i = x^{(j)}} \label{eq:accept-prob-split} \\
    &\qquad + \Pr{\ALG(\bX) = x^{(j)}\given\Lambda_i(j) \wedge \Xi_i^2(j) \wedge x_i = x^{(j)}} \cdot \Pr{\Lambda_i(j) \wedge \Xi_i^2(j) \given x_i = x^{(j)}} \notag \\
    &\qquad \qquad + \Pr{\ALG(\bX) = x^{(j)}\given\Lambda_i(j) \wedge \Xi_i^3(j) \wedge x_i = x^{(j)}} \cdot \Pr{\Lambda_i(j) \wedge \Xi_i^3(j) \given x_i = x^{(j)}}, \notag 
\end{align}
since by \cref{obs:best-so-far} if $\neg \Lambda_i(j)$ holds then $\ALG(\bX) \neq x^{(j)}$ and so all $\neg \Lambda_i(j)$ terms vanish.

Case 1 and Case 2 below correspond to $\Xi_i^1(j)$ and $\Xi_i^3(j)$. We will disregard the $\Xi_i^2(j)$ event and prove a lower bound on $\Pr{\ALG(\bX) = x^{(j)}\given x_i = x^{(j)}}$. Our analysis will proceed according to the order statistics over both the simulated and actual arrivals $\bX = \bX_0 \cup \bX_+$.

In Case 1 and Case 2 below, we will assume that $x^{(j)} \geq x^{(\ell)}$ (an the assumption of the lemma) and consider the events $\Xi_i^3$ and $\Xi_i^1$, respectively.

\bigskip
\noindent\textbf{Case 1: $\Lambda_i(j) \wedge \Xi_i^3(j)$, or $x^{(j)} \geq x^{(\ell)} \geq x_0^{(\ell)} \geq x^{(k)}$:}

If $x^{(k)}$ arrives in $\{x_1, \ldots, x_{r_1 - 1}\}$ then nothing smaller will be accepted in Phase 2 or Phase 3 by \cref{obs:best-so-far}, and so $x_i = x^{(j)}$ will be accepted.

If $x^{(k)}$ arrives in $\bX_2$, then $x^{(k)} \leq x_0^{(\ell)}$ and neither it nor any other arrival in $\{x_{r_1}, \ldots, x_{i-1}\}$ will be accepted in Phase 2, and by \cref{obs:best-so-far} no other arrivals will be accepted in Phase 3 prior to the arrival of $x^{(j)}$. 
Therefore $x_i = x^{(j)}$ will be accepted.

Finally, if $x^{(k)}$ arrives in $\{x_{r_2}, \ldots, x_{i-1}\}$, then it---or some arrival prior to it---will be accepted instead, since it beats the Phase 3 threshold by definition. 
Therefore
\begin{align}
    \Pr{\ALG(\bX) = x^{(j)}\given \Lambda_i(j) \wedge \Xi_i^3(j) \wedge x_i = x^{(j)}} &= \Pr{x^{(k)} \text{ arr. before } r_2} = \frac{\min(r_2-1, i-1)}{i-1}.
\end{align}

We also need the probability that $\Lambda_i(j)$ and $\Xi_i^1(j)$ hold; since for $j \leq \ell$ we have $\Xi_i^3(j) \subseteq \Lambda_i(j)$, this is given by
\begin{align*}
    \Pr{\Lambda_i(j) \wedge \Xi_i^3(j) \given x_i = x^{(j)}} & = \Pr{\Xi_i^3(j) \given x_i = x^{(j)}} \\
    &=\Pr{\bigvee_{j' > \ell} \left[x^{(k)} = x^{(j')} \right] \given x_i = x^{(j)}} \\
    & = \sum_{j' \geq \ell+1} \Pr{x^{(k)} = x^{(j')} \given x_i = x^{(j)}} \\
    & = \sum_{j' \geq \ell+1} \Pr{\bigwedge_{j'' < j',\: j'' \neq j} \left[x^{(j'')} \not \in \{x_1, \ldots, x_{i-1}\} \right] \wedge \left[ x^{(j')} \in \{x_1, \ldots, x_{i-1}\} \right]} \\
    & = \sum_{j' \geq \ell+1} \left(\frac{n(z+1) - i}{n(z+1) - 1}\right)^{j' - 2} \frac{i-1}{n(z+1)-1},
\end{align*}

Putting this together, for $i \geq r_1$ we have
\begin{align}
    &\Pr{\ALG(\bX) = x^{(j)}\given \Lambda_i(j) \wedge \Xi_i^3(j) \wedge x_i = x^{(j)}} \cdot \Pr{\Lambda_i(j) \wedge \Xi_i^3(j) 
    \given x_i = x^{(j)}} \notag \\
    &\qquad \qquad = \sum_{j' \geq \ell+1} \left(\frac{n(z+1) - i}{n(z+1) - 1}\right)^{j' - 2} \cdot \frac{\min(r_2 - 1, i - 1)}{n(z+1) - 1} \notag \\
    &\qquad \qquad \geq \left( \frac{n(z + 1) - i}{n(z + 1)  - 1} \right)^{\ell - 1} \cdot \sum_{j' = 0}^\infty \left(\frac{n(z + 1) - i}{n(z + 1)  - 1}\right)^{j'} \cdot \frac{\min(r_2 - 1, i - 1)}{n(z+1) - 1} \notag \\
    &\qquad \qquad = \left( \frac{n(z+1) - i}{n(z+1) - 1} \right)^{\ell - 1} \cdot \frac{\min(r_2 - 1, i-1)}{i-1}. \label{eq:case1-prob}
\end{align}

\bigskip
\noindent\textbf{Case 2: $\Lambda_i(j) \wedge \Xi_i^1(j)$, or $x^{(j)} \geq x^{(k)} \geq x^{(\ell)}$:}

If $x^{(k)} \geq x^{(\ell)}$ and $x^{(k)}$ arrives in $\{x_1, \ldots, x_{r_1 - 1}\}$ then it will not be accepted on arrival and by \cref{obs:best-so-far} $\ALG$ will not accept anything before reaching $x_i = x^{(j)}$. Furthermore for $i \geq r_1$ this $x_i = x^{(j)}$ will be accepted on arrival, since it exceeds both $x^{(j)}$ and $x^{(\ell)}$ and so meets the thresholds in both Phase 2 and Phase 3.

If $x^{(k)} \geq x^{(\ell)}$ and it arrives in $\bX_2$ then it is larger than the largest arrival in Phase 1, and so it (or an arrival in Phase 2 prior to it) is accepted, and so $x_i = x^{(j)}$ is not. 
The same is true if $x^{(k)}$ arrives in $\bX_3$. 
Therefore
\begin{align}
    \Pr{\ALG(\bX) = x^{(j)}\given \Lambda_i(j) \wedge \Xi_i^1(j) \wedge x_i = x^{(j)}} &= \Pr{k \in [r_1 - 1] } = \frac{r_1 - 1}{i - 1}.
\end{align}

Following a similar argument, the probability that $\Lambda_i(j)$ and $\Xi_i^1(j)$ hold is now
\begin{align*}
    \Pr{\Lambda_i(j) \wedge \Xi_i^1(j) \given x_i = x^{(j)}} & = \Pr{\bigvee_{j+1 \leq j' \leq \ell} \left[x^{(k)} = x^{(j')} \right] \given x_i = x^{(j)}} \\
    & = \sum_{j' = j+1}^\ell \Pr{x^{(k)} = x^{(j')} \given x_i = x^{(j)}} \\
    & = \sum_{j' = j+1}^\ell \Pr{\bigwedge_{j'' < j',\: j'' \neq j} \left[j'' \not \in [i-1] \right] \wedge \left[ j' \in [i-1] \right]} \\
    & = \sum_{j' = j+1}^\ell \left(\frac{n(z + 1) - i}{n(z + 1) - 1}\right)^{j' - 2} \cdot \frac{i-1}{n(z+1) - 1}.
\end{align*}

Therefore in this case, for $i \geq r_1$ we have
\begin{align}
    &\Pr{\ALG(\bX) = x^{(j)}\given \Lambda_i(j) \wedge \Xi_i^1(j) \wedge x_i = x^{(j)}} \cdot \Pr{\Lambda_i(j) \wedge \Xi_i^1(j) \given x_i = x^{(j)}} \notag \\
    &\qquad \qquad = \sum_{j' = j+1}^\ell \left(\frac{n(z + 1) - i}{n(z + 1) - 1}\right)^{j' - 2} \cdot \frac{r_1 - 1}{n(z+1)-1} \notag \\
    &\qquad \qquad = \left(\frac{n(z + 1) - i}{n(z + 1) - 1}\right)^{j - 1} \cdot \sum_{j'=0}^{\ell-j-1} \left(\frac{n(z + 1) - i}{n(z + 1) - 1}\right)^{j'} \cdot \frac{r_1 - 1}{n(z + 1) - 1} \notag \\
    &\qquad \qquad = \left(\left(\frac{n(z + 1) - i}{n(z + 1) - 1}\right)^{j - 1} - \left(\frac{n(z + 1) - i}{n(z + 1) - 1}\right)^{\ell-1} \right) \cdot \frac{r_1 - 1}{i - 1}. \label{eq:case2-prob}
\end{align} 

Combining \eqref{eq:case1-prob} and \eqref{eq:case2-prob} with \eqref{eq:accept-prob-split} and rearranging then gives 
\begin{align*}
    &\Pr{\ALG(\bX) \geq x^{(j)}\given x_i = x^{(j)}} \\
    &\qquad\geq \left(1- \frac{i-1}{n(z+1)-1}\right)^{j - 1} \cdot \frac{r_1-1}{i-1}  + \left( 1- \frac{i-1}{n(z+1)-1}\right)^{\ell - 1} \cdot \left( \frac{\min(r_2-1, i-1)}{i-1} - \frac{r_1-1}{i-1}\right), 
\end{align*}
as claimed.
\end{proof}

\conditionalprobforhighXj*

\begin{proof} [Proof of \Cref{lem:condnl-prob-high-j}]
    Recall \Cref{obs:best-so-far}: \threephase{} only accepts $x^{(j)}$ arriving at step $i$ if it the largest arrival so far; that is, if $\Lambda_i(j)$ holds.
    Since $j > \ell$ by assumption, no prior arrivals will have been accepted in Phase 2.
    So for $j > \ell$, $x^{(j)}$ is accepted if the largest prior arrival $x^{(k)}$ is smaller, and $x^{(k)}$ does not arrive in $\{x_{r_2}, \ldots, x_{i-1}\}$; otherwise $x^{(k)}$ will be accepted instead.
    
    Following a similar argument to those in \Cref{lem:condnl-prob-low-j}, in this case we have
    \begin{align*}
        \Pr{\ALG(\bX) = x^{(j)}\given x_i = x^{(j)}} &= \Pr{\bigvee_{j' \geq j+1} \left[x^{(k)} = x^{(j')} \wedge k \not\in \{r_2, \ldots, i-1\} \right] \given x_i = x^{(j)}} \\
        &\geq \sum_{j' = j+1}^\infty \Pr{x^{(k)} = x^{(j')} \wedge k \in [r_2 -1] \given x_i = x^{(j)}} \\
        &=\sum_{j' = j+1}^\infty \Pr{\bigwedge_{j'' \neq j, j'' < j'} \left[j'' \not \in [i-1] \right] \wedge j' \in [r_2 - 1] \given x_i = x^{(j)}} \\
        &=\sum_{j' = j+1}^\infty \left(\frac{n(z+1)-i}{n(z+1)-1}\right)^{j' - 2} \cdot \frac{r_2 - 1}{n(z+1)-1} \\
        &= \left(1 - \frac{i-1}{n(z+1)-1}\right)^{j-1} \cdot \frac{r_2-1}{i-1}. \qedhere
    \end{align*}
\end{proof}

\prophetProbabilityRatioTop*

\begin{proof}[Proof of \Cref{lem:prophet-prob-ratio-top}]
    We start by observing that the conditional probability of \Cref{lem:condnl-prob-low-j} can be converted to an unconditional probability as follows. 
    By the law of total probability,
    \begin{equation} \label{eq:threephase-prophet-prob-over-time}
        \Pr{\ALG(\bX) = x^{(j)}} 
        = \sum_{i = r_1}^n \Pr{\ALG(\bX) = x^{(j)} \given x_i = x^{(j)}} \cdot \frac {1}{n(z+1)}.
    \end{equation}
    This is because $\ALG$ only accepts for $x_i \in \bX_2 \cup \bX_3$ and because $x^{(j)}$ is uniformly distributed in $\bX$. 

    Then applying \Cref{lem:condnl-prob-low-j}, using the fact that $\left(1- \frac{i-1}{n(z+1)-1}\right) \geq \left(1- \frac{1}{n(z+1)-1}\right)$, letting $Z \defeq n(z+1)-1$, and doing some integrals, we have that for $\ell' \leq \ell$,
    \begin{align}
        &\Pr{\ALG(\bX) \geq x^{(\ell')}} = \sum_{j=1}^{\ell' } \Pr{\ALG(\bX) = x^{(j)}} \notag \\
        &= \sum_{j=1}^{\ell' } \sum_{i = r_1}^n \Pr{\ALG(\bX) = x^{(j)} \given x_i = x^{(j)}} \cdot \frac {1}{Z+1} \: dt \tag{by \eqref{eq:threephase-prophet-prob-over-time}}\\
        &\geq \sum_{j=1}^{\ell' } \frac {1}{Z+1} \sum_{i=r_1}^n \left(1- \frac{i-1}{Z}\right)^{j - 1} \cdot \frac{r_1-1}{i-1}  + \left( 1- \frac{i-1}{Z}\right)^{\ell - 1} \cdot \left( \frac{\min(r_2 -1, i-1)}{i-1} - \frac{r_1-1}{i-1}\right) \: dt \tag{\Cref{lem:condnl-prob-low-j}}.
    \end{align}
    
    We now approximate this inner sum via an integral. 
    Letting $t \approx \frac{i-1}{n}$, recalling $r_1 = \lceil n \tau_1 \rceil$ and $r_2 = \lceil n \tau_2 \rceil$, and observing that $Z/n \approx z+1$, we have for constant $\tau_1>0$ that
    \begin{align}
        &\Pr{\ALG(\bX) \geq x^{(\ell')}} \notag \\
        &\geq \sum_{j=1}^{\ell' } \left( \frac {1}{z+1} \int_{\tau_1}^1 \left(1- \frac{t}{z+1}\right)^{j - 1} \cdot \frac{\tau_1}{t}  + \left( 1- \frac{t}{z+1}\right)^{\ell - 1} \cdot \left( \frac{\min(\tau_2, t)}{t} - \frac{\tau_1}{t}\right) \: dt - O\left(n^{-1}\right) \right) \notag \\
        &\geq \sum_{j=1}^{\ell' } \frac {1}{z+1} \int_{\tau_1}^1 \left(1- \frac{t}{z + 1}\right)^{j - 1} \cdot \frac{\tau_1}{t}  + \left( 1- \frac{t}{z + 1}\right)^{\ell - 1} \cdot \frac{\min(\tau_2, t)}{t} - \left( 1- \frac{1}{z + 1}\right)^{\ell - 1} \frac{\tau_1}{t} \: dt - O\left(\ell' n^{-1}\right) \notag \\
        &\geq \frac {1}{z+1} \sum_{j=1}^{\ell' } \left(1- \frac{1}{z + 1}\right)^{j - 1} \cdot \int_{\tau_1}^1 \frac{\tau_1}{t} dt  \notag\\
        &\qquad\qquad + \frac{\ell'}{z+1} \int_{\tau_1}^{\tau_2} \left( 1- \frac{t}{z + 1}\right)^{\ell - 1} dt 
        + \frac{\ell'}{z+1} \left( 1- \frac{1}{z + 1}\right)^{\ell - 1} \cdot \left( \int_{\tau_2}^1 \frac{\tau_2}{t} - \int_{\tau_1}^1 \frac{\tau_1}{t} dt \right)  
        - O\left(\ell' n^{-1}\right) \notag \\
        %
        &= \left(1 - \left(1- \frac{1}{z + 1}\right)^{\ell'} \right) \cdot \tau_1 \log\left( \frac{1}{\tau_1}\right) + \frac {\ell'}{z+1} \int_{\tau_1}^{\tau_2} \left( 1- \frac{t}{z + 1}\right)^{\ell - 1} dt - O\left(\ell' n^{-1}\right), \notag
        \intertext{using that $\int_{\tau_1}^1 \frac{\tau_1}{t} dt = \tau_1 \log\left(\frac{1}{\tau_1}\right) = \alpha = \tau_2 \log\left(\frac{1}{\tau_2}\right) = \int_{\tau_2}^1 \frac{\tau_2}{t} dt$ by our choice of $\tau_1$ and $\tau_2$ (\cref{eq:ass-tau1-tau2}). Then recalling that $\ell-1 = a(z+1)$ by definition (\Cref{ass:a-value}), we have}
        &= \left(1 - \left(1- \frac{1}{z + 1}\right)^{\ell'} \right) \cdot \tau_1 \log\left( \frac{1}{\tau_1}\right) + \frac {\ell'}{z+1} \left(\int_{\tau_1}^{\tau_2} e^{-a t} \:dt - O\left(\frac{a}{z+1}\right)\right) - O\left(\ell' n^{-1}\right),  \notag \\
        &= \left(1 - \left(1- \frac{1}{z + 1}\right)^{\ell'} \right) \cdot \tau_1 \log\left( \frac{1}{\tau_1}\right) + \frac {\ell'}{z+1} \left( \frac{1}{a} \left(e^{-a\tau_1} - e^{-a\tau_2}\right) - O\left(\frac{a}{z+1}\right)\right) - O\left(\ell' n^{-1}\right) , \label{eq:low-l'-alg-bound}
    \end{align}
    since we may approximate this integrand by
    \[
        \left( 1- \frac{t}{z + 1}\right)^{\ell - 1} = e^{-at} \left(1 - O\left(\frac{t}{z+1}\right)\right)^{at} = e^{-at} - O\left(\frac{a}{z+1}\right)
    \]
    for constant $a$ and $t \in [0,1]$. 

    We also need the probability the maximum of the arrivals in $\bX_+$ is at least $x^{(\ell')}$.
    This is
    \begin{align}
        \Pr{\MAX(\bX_+) \geq x^{(\ell')}} &= \sum_{j=1}^{\ell' } \Pr{\MAX(\bX_+) = x^{(j)}} \notag \\
        & = \sum_{j=1}^{\ell' } \Pr{\bigwedge_{j' < j} \left[j' \not\in [n] \right] \wedge \left[ j \in [n] \right]} \notag \\
        & \geq \sum_{j=1}^{\ell' } \left( \left(1- \frac{1}{z + 1}\right)^{j-1} \frac{1}{z + 1} - O\left(\frac{j}{n(z+1)} \right) \right) \notag \\
        &= \frac{1}{z + 1} \cdot \sum_{j=0}^{\ell' -1} \left(1- \frac{1}{z + 1}\right)^{j} - O\left(\frac{\ell'^2}{n(z+1)} \right) \notag \\
        & = 1- \left(1- \frac{1}{z + 1}\right)^{\ell'} - O\left(\frac{\ell'^2}{n(z+1)} \right). \label{eq:prob-max-above-ell}
        \intertext{Taking a union bound over the events that $x^{(1)}, \ldots, x^{(\ell')}$ arrive in the time interval $\bX_+$ also gives}
        \Pr{\MAX(\bX_+) \geq x^{(\ell')}} &\leq \frac{\ell'}{z+1}. \label{eq:prob-max-above-ell-UB}
    \end{align}

    In pursuit of an approximate stochastic dominance inequality along the lines of \eqref{eq:approx-stoch-dom}, for the case when $\cDp = \cD$, \cref{eq:low-l'-alg-bound,eq:prob-max-above-ell,eq:prob-max-above-ell-UB} together imply that for all $\ell' \leq \ell$, 
    \begin{align}
        \frac{\Pr{\ALG(\bX) \geq x^{(\ell')}}}{\Pr{\MAX(\bX_+) \geq x^{(\ell')}}} 
        &\geq \tau_1 \log\left( \frac{1}{\tau_1}\right) + \frac{1}{a}\left(e^{-a\tau_1} - e^{-a\tau_2}\right) -O\left(\frac{a}{z}\right)  - O\left(\frac{a^2 z}{n} \right) \notag \\
        &= \beta(\tau_1, \tau_2, z) \label{eq:prophet-prob-ratio}
    \end{align}
    for $a = 1$ as in \Cref{ass:a-value}, provided that $\frac{1}{a} \left(e^{-a\tau_1} - e^{-a\tau_2}\right) \geq O\left(\frac{a}{z+1}\right)$. This holds for fixed $\tau_1 \neq \tau_2$ and $a$ for sufficiently large $z$.
    
    This proves the claim.
\end{proof}

\prophetProbRatioBottom*

\begin{proof}[Proof of \Cref{lem:prophet-prob-ratio-bottom}]
    We start from \eqref{eq:low-l'-alg-bound} with $\ell' = \ell$, which satisfies the condition $\ell'\leq \ell$. 
    We will apply \Cref{lem:condnl-prob-high-j} and again integrate over $t \in [\tau_2, 1]$ to get probability bounds for $j > \ell$. 
    Taking $a = 1$ in accordance with \Cref{ass:a-value}, we have
    \begin{align} 
        \Pr{\ALG(\bX) \geq x^{(\ell')}}
        &= \Pr{\ALG(\bX) \geq x^{(\ell)}} + \sum_{j = \ell + 1}^{\ell'} \Pr{\ALG(\bX) = x^{(j)}} \notag \\
        &\geq \left(1 - \left(1- \frac{1}{z + 1}\right)^{\ell} \right) \cdot \tau_1 \log\left( \frac{1}{\tau_1}\right) + \frac {\ell}{z+1} \left( \left(e^{-\tau_1} - e^{-\tau_2}\right) - O\left(\frac{1}{z+1}\right)\right) \notag \\
        &\qquad\qquad + \sum_{j = \ell + 1}^{\ell'} \int_{\tau_2}^1 \left(1 - \frac{t}{z+1}\right)^{j-1} \frac{\tau_2}{t}\cdot \frac{1}{z+1} \:dt \notag \\
        &\geq \left(1 - \left(1- \frac{1}{z + 1}\right)^{\ell} \right) \cdot \tau_1 \log\left( \frac{1}{\tau_1}\right) + \frac {\ell}{z+1} \left(\left(e^{-\tau_1} - e^{-\tau_2}\right) - O\left(z^{-1}\right)\right) \notag \\
        &\qquad\qquad + \frac{1}{z+1} \cdot \sum_{j = \ell + 1}^{\ell'} \left(1 - \frac{1}{z+1}\right)^{j-1} \tau_2 \log \left(\frac{1}{\tau_2}\right) \notag \\
        &= \left(1 - \left(1- \frac{1}{z + 1}\right)^{\ell} \right) \cdot \tau_1 \log\left( \frac{1}{\tau_1}\right) + \frac {\ell}{z+1} \left(\left(e^{-\tau_1} - e^{-\tau_2}\right) - O\left(z^{-1}\right)\right) \notag \\
        &\qquad\qquad + \left(\left(1- \frac{1}{z + 1}\right)^{\ell} - \left(1- \frac{1}{z + 1}\right)^{\ell'}\right) \cdot \tau_2 \log \left(\frac{1}{\tau_2}\right) \notag \\
        &= \left(1 - \left(1- \frac{1}{z + 1}\right)^{\ell'} \right) \cdot \tau_1 \log\left( \frac{1}{\tau_1}\right) + \frac {\ell}{z+1} \left(\left(e^{-\tau_1} - e^{-\tau_2}\right) - O\left(z^{-1}\right)\right), \label{eq:large-l'-prob-bound}
    \end{align}
    where the last step followed from \Cref{eq:ass-tau1-tau2} and collecting terms.
    
    Recall that our expression \eqref{eq:prob-max-above-ell} for $\Pr{\MAX(\bX_+) \geq x^{(\ell)}}$ holds regardless of $\ell'$. We then only need an analog of \eqref{eq:prob-max-above-ell-UB} to hold here.
    But for $\ell \geq z + 1$ the bound $\Pr{\MAX(\bX_+) \geq x^{(\ell)}} \leq 1 \leq \frac{\ell}{z+1}$ holds trivially, regardless of $\ell' \geq \ell$.
    Therefore so long as $a \geq 1$ (as in \Cref{ass:a-value}), we may again upper bound $\Pr{\MAX(\bX_+) \geq x^{(\ell)}}$ by these terms.
    From \eqref{eq:large-l'-prob-bound} we then derive
    \begin{align} 
        \frac{\Pr{\ALG(\bX) \geq x^{(\ell')}}}{\Pr{\MAX(\bX_+) \geq x^{(\ell)}}}
        &\geq \tau_1 \log\left( \frac{1}{\tau_1}\right) + \left(e^{-\tau_1} - e^{-\tau_2}\right) -O\left(z^{-1}\right) \notag 
        = \beta(\tau_1,\tau_2, z), \notag
    \end{align}
    provided that $e^{-\tau_1} - e^{-\tau_2} \geq O\left(z^{-1}\right)$; for any fixed $\tau_1 \neq \tau_2$ this holds for all sufficiently large $z$. 

    This proves the claim.
\end{proof}

\algSecretaryGuarantee*

\begin{proof}[Proof of \Cref{thm:alg-secretary-guarantee}]
    We start by showing approximate stochastic dominance. 
    Let $\mu$ denote the probability measure for $\MAX(\bX_+)$. 
    By \Cref{lem:three-phase-prob-ratio-bound}, 
    \begin{align*}
        \Pr{\ALG(\bX_0, \bX_+) \geq y} &\geq \Pr{\ALG(\bX) = x_+^{(1)} \wedge x_+^{(1)} \geq y}  \\
        &= \int_{y}^\infty \Pr{\ALG(\bX) = x_+^{(1)} \given x_+^{(1)} = y'} \:d\mu(y') \\
        &\geq \alpha \cdot \int_{y}^\infty \:d\mu(y') = \alpha \cdot \Pr{\MAX(\bX_+) \geq y}.
    \end{align*}
    The competitive ratio then follows from the observation that
    \begin{align*}
        \expect{\ALG(\bX_0, \bX_+)} &= \int_{\bbR_{\geq 0}} \Pr{\ALG(\bX_0, \bX_+) \geq y} \:dy \\
        &\geq \alpha \cdot \int_{\bbR_{\geq 0}} \Pr{\MAX(\bX_+) \geq y} \:dy \\
        &= \alpha \cdot \expect{\MAX(\bX_+)}. \qedhere
    \end{align*}
\end{proof}

\threephaseProphetGuarantee*

\begin{proof}[Proof of \Cref{thm:alg-prophet-guarantee}]
    Since $\cDp = \cD$, we may consider the following alternative process for generating $\bX = \bX_0 \cup \bX_+$:
    \begin{enumerate}
        \item Sample arrival times $T = (t_1, t_2, \ldots)$ where each $ t_i \sim Unif([-z,1])$ independently. \label{item:step1}
        \item Sample a number of arrivals $k \sim Pois((z+1) \cdot n)$ and let $V = (v_1, \ldots, v_k)$ be $k$ i.i.d. draws from $\cD$. 
        \item Construct $\bX$ by placing $v^{(i)}$ at $t_i$, where $v^{(i)}$ is the $i$th largest value in $V$.
    \end{enumerate}
    Note that if $\ALG$ decides whether or not to stop based only on ordinal queries within $\bX$, then the event that $\ALG$ stops on a given $x^{(i)}$ depends only on Step \ref{item:step1}, and is independent of the value that $x^{(i)}$ takes in the subsequent steps.

    In pursuit of approximate stochastic dominance along the lines of \eqref{eq:approx-stoch-dom}, for a given $\bX$ and $y\in \bbR_{\geq 0}$, let $\lfloor y\rfloor_{\bX} \defeq \max \{X \in \bX: X \leq y\} \cup \{0\}$ be the largest value in $\bX$ that $y$ exceeds.
    Then by \Cref{lem:prophet-prob-ratio} we have
    \begin{align*}
        \Pr{\ALG(\bX_0, \bX_+) \geq y} &= \sum_i \Pr{\ALG(\bX_0, \bX_+) \geq x^{(i)} \:\wedge\: \lfloor y \rfloor_{\bX} = x^{(i)}} \\
        &= \sum_i \Pr{\ALG(\bX_0, \bX_+) \geq x^{(i)}} \cdot \Pr{\lfloor y \rfloor_{\bX} = x^{(i)}} \\
        &\geq \beta(\tau_1, \tau_2, z) \cdot \sum_i \Pr{\MAX(\bX_+) \geq x^{(i)}} \cdot \Pr{\lfloor y \rfloor_{\bX} = x^{(i)}} \\
        &= \beta(\tau_1, \tau_2, z) \cdot \sum_i \Pr{\MAX(\bX_+) \geq x^{(i)} \:\wedge\: \lfloor y \rfloor_{\bX} = x^{(i)}} \\
        &= \beta(\tau_1, \tau_2, z) \cdot \Pr{\MAX(\bX_+) \geq y}.
    \end{align*}
    
    The competitive ratio then follows as usual. 
\end{proof}  
\section{Proofs from \texorpdfstring{\Cref{sec:poisson}}{Section 4.2}} \label{app:poisson}

\begin{claim} \label{thm:hard-opt-CR}
    For the distribution defined in \eqref{eq:hard-dist-CDF}, the optimal policy has a competitive ratio of $\beta_n + \Theta(e^{-n})$. 
\end{claim}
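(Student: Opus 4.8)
The plan is to compute, in closed form up to exponentially small terms, both the optimal policy's expected reward $r(0)$ and $\Ex{\OPT}$ on instance~\eqref{eq:hard-dist-CDF}, and then show that $r(0)/\Ex{\OPT} = \beta_n + \Theta(e^{-n})$ by invoking the defining equation~\eqref{eq:bn-defn}. Two identities do all the work: differentiating~\eqref{eq:y-tilde-defn} gives $\tilde y''(t) = \tilde y'(t)\ln\tilde y(t)$, while~\eqref{eq:r-defn} gives $(\tilde r^\ast)'(t) = 1/\tilde y'(t)$, and~\eqref{eq:CDFdefn} reads $\CDF(\tilde r^\ast(t)) = 1 + \tfrac1n\ln\tilde y(t)$; the $\tfrac1n$ scaling in~\eqref{eq:hard-dist-CDF} is precisely what makes the rate-$n$ Poisson maximum behave like the idealized continuous process, since for such a process $\Pr{\OPT\le x} = e^{-n(1-\CDF(x))}$, hence $\Pr{\OPT\le\tilde r^\ast(t)} = \tilde y(t)$ for $t\ge q$ and $\tilde y(1)=e^{-n}$.

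First I would identify the optimal threshold on the ``bulk'' $[q,1]$. By \Cref{lem:opt-policy} the optimal value function satisfies the ODE of that lemma with equality and $r(1)=0$; differentiating $t\mapsto -\int_{\tilde r^\ast(t)}^\infty(1-\CDF(x))\,dx$, applying the chain rule, and substituting the two identities above reproduces exactly $(\tilde r^\ast)''(t)$, so $\tilde r^\ast$ solves the same ODE and, since $\tilde r^\ast(1)=0$, uniqueness forces $r=\tilde r^\ast$ on $[q,1]$. On $[0,q)$ the values $r(t)$ lie in the flat region $(\tilde r^\ast(q),H]$ where $\CDF$ is constant, so the ODE becomes the linear equation $r'(t) = (H-r(t))\ln\tilde y(q)$; solving it with $r(q)=\tilde r^\ast(q)$ and the definition of $H$ yields $r(0) = \tilde r^\ast(q) + \frac{1-\tilde y(q)^{q}}{\tilde y'(q)\ln\tilde y(q)}$. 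For $\Ex{\OPT}$ I would use that $\Pr{\OPT\le x}$ equals $\tilde y((\tilde r^\ast)^{-1}(x))$ on $[0,\tilde r^\ast(q)]$, equals $\tilde y(q)$ on $(\tilde r^\ast(q),H)$, and jumps to $1$ at $H$; integrating $\Pr{\OPT>x}$ and substituting $x=\tilde r^\ast(s)$ gives $\Ex{\OPT} = \int_q^1\frac{1-\tilde y(s)}{-\tilde y'(s)}\,ds + \frac{1-\tilde y(q)}{\tilde y'(q)\ln\tilde y(q)}$.

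Then comes the identification with $\beta_n$. Writing $c=\beta_n^{-1}-1$, equation~\eqref{eq:y-tilde-defn} says $-\tilde y' = c+\tilde y-\tilde y\ln\tilde y$, so $\frac{c+\tilde y}{-\tilde y'} = 1 + \frac{\tilde y\ln\tilde y}{-\tilde y'} = 1 + \tilde y\cdot\frac{d}{ds}\frac{1}{\tilde y'}$ (using $\tilde y''=\tilde y'\ln\tilde y$), and one integration by parts evaluates $\int_q^1\frac{\tilde y\ln\tilde y}{-\tilde y'}\,ds$ in terms of $\tilde y/\tilde y'$ at the endpoints. Combining this with the two closed forms above, and using $\tilde y'(q) = \tilde y(q)\ln\tilde y(q)-\tilde y(q)-c$ to cancel terms, I expect to reach $\Ex{\OPT} - \beta_n^{-1}\,r(0) = -\frac{e^{-n}}{\tilde y'(1)} + \frac{1}{\ln\tilde y(q)}\Bigl(1 + \frac{\beta_n^{-1}\,\tilde y(q)^{q}}{\tilde y'(q)}\Bigr)$. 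Here $\tilde y'(1) = -c-(n+1)e^{-n}$ is bounded away from $0$, so the first term is $\Theta(e^{-n})$; a Taylor expansion around $\tilde y(0)=1$, $\tilde y'(0)=-\beta_n^{-1}$ shows the bracketed factor vanishes to second order and the second term is $\approx -q$, so taking $q = O(e^{-n})$ (the regime in which $H$ is finite and the construction is the intended one) makes the whole right-hand side $\Theta(e^{-n})$. Since $\tilde y'$ is bounded on $[0,1]$, both $r(0)$ and $\Ex{\OPT}$ are $\Theta(1)$, so dividing gives $r(0)/\Ex{\OPT} = \beta_n\bigl(1-O(e^{-n})\bigr) = \beta_n + \Theta(e^{-n})$; and the consistency of~\eqref{eq:bn-defn} with $\tilde y(1)=e^{-n}$ is just the substitution $y=\tilde y(t)$, under which its integrand becomes $1/(-\tilde y'(t))$ and the integral becomes $\int_0^1 dt=1$.

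The main obstacle is this last step: the closed forms for $r(0)$ and $\Ex{\OPT}$ are unwieldy, and collapsing their combination into ``$\beta_n$ plus an $O(e^{-n})$ error'' requires juggling both ODE identities, one integration by parts, and a careful endpoint/Taylor analysis of the $q$-dependent remainder (in particular verifying that the $\tilde y(q)$-, $\ln\tilde y(q)$-, and $\tilde y(q)^q$-terms conspire to leave only exponentially small slack). Everything else — verifying the ODE on $[q,1]$, solving the linear ODE on the flat piece, and reading off $\Ex{\OPT}$ from the Poisson maximum — is routine.
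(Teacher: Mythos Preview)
Your proposal is correct and follows essentially the same route as the paper: both compute $OPT_q$ and $MAX_q$ via the substitution $x=\tilde r^\ast(t)$ and the identities $(\tilde r^\ast)'=1/\tilde y'$ and $\tilde y''=\tilde y'\ln\tilde y$, then collapse the ratio to $\beta_n$ plus an exponentially small remainder using the same integration-by-parts evaluation of $\int \frac{\tilde y\ln\tilde y}{\tilde y'}\,dt$. The only cosmetic difference is that the paper passes to the limit $q\to 0$ first (handling the $\frac{0}{0}$ terms by l'H\^opital) and then rearranges $1/R-1/\beta_n$ directly, whereas you keep $q$ finite, Taylor-expand the $q$-dependent remainder, and pick $q=O(e^{-n})$; both arrive at the same $e^{-n}/\tilde y'(1)$ source of the $\Theta(e^{-n})$ error.
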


\begin{proof}[Proof of \Cref{thm:hard-opt-CR}]
This proof proceeds in three parts.

\paragraph{Expectation of Maximum.} Since the arrival of Poisson draws from our distribution \eqref{eq:CDFdefn} with value at least $x$ is itself a Poisson process with rate $n(1-\CDF(x))$, we may write the expected maximum as
\begin{align}
    MAX_q & = \int_0^\infty \left(1 - \exp \left(-n \left(1 - \CDF(x)\right)\right)\right) dx \\
    &= \int_0^{\tilde{r}^\ast(q)} (1 - \tilde{y}((\tilde{r}^\ast)^{-1}(x)) dx + (H - \tilde{r}^\ast(q))(1 - \tilde{y}(q)) \\
    & = \int_1^q \frac{1 - \tilde{y}(t)}{\tilde{y}'(t)} dt + (H - \tilde{r}^\ast(q))(1 - \tilde{y}(q)) \label{eq:max-q-change-of-vars}\\
    & = - \int_q^1 \frac{1 - \tilde{y}(t)}{\tilde{y}'(t)} dt + (H - \tilde{r}^\ast(q))(1 - \tilde{y}(q)) \\
    & = -\int_q^1 \frac{1 - \tilde{y}(t)}{\tilde{y}'(t)} dt + \left(\frac{1}{\tilde y'(q) \ln \tilde y(q)}\right)(1 - \tilde{y}(q)),
    \intertext{where\eqref{eq:max-q-change-of-vars} follows by substitution with $x = \tilde{r}^*(t)$, together with the fact that $(\tilde{r}^*)'(t) = 1/\ty'(t)$ by \eqref{eq:r-defn}. Taking the limit then gives}
    \lim_{q \rightarrow 0} MAX_q &= -\int_0^1 \frac{1 - \tilde{y}(t)}{\tilde{y}'(t)} dt - \frac{1}{\tilde{y}'(0)}, \label{eq:max-form-no-q}
\end{align}
because by l'H\^{o}pital's rule
\[
    \lim_{q \to 0} \frac{1 - \tilde{y}(q)}{\ln \tilde{y}(q)} = \lim_{q \to 0} \frac{- \tilde{y}'(q)}{\frac{\tilde{y}'(q)}{\tilde{y}(q)}} = - \tilde{y}(0) = - 1.
\]

\paragraph{Performance of Optimal Policy.} By our definition of $\tilde{r}^\ast$, we have for any $q$ if $t > q$ then
\begin{align*}
\frac{d}{dt} \tilde{r}^\ast(t) & = \frac{1}{\tilde{y}'(t)} \\
& = - \int_q^t \frac{\tilde{y}''(t)}{(\tilde{y}'(t))^2} dt + \frac{1}{\tilde{y}'(q)} \\
& \stackrel{(\star)}{=} - \int_q^t \frac{\ln \tilde{y}(t)}{\tilde{y}'(t)} dt + \frac{1}{\tilde{y}'(q)}\\
& \stackrel{(\star\star)}{=} - \int_{q}^{t} n \cdot (1 - \CDF(\tilde{r}^\ast(t))) \cdot (\tilde{r}^\ast)'(t) dt + (H - \tilde{r}^\ast(q)) \ln \tilde{y}(q)  \\
& = - \int_{\tilde{r}^\ast(t)}^{\tilde{r}^\ast(q)} n \cdot (1 - \CDF(x)) dx - \int_{\tilde{r}^\ast(q)}^\infty n \cdot (1 - \CDF(x)) dx \\
& = - \int_{\tilde{r}^\ast(t)}^\infty n \cdot (1 - \CDF(x)) dx, 
\end{align*}
where in step $(\star)$, we use that $\tilde{y}''(t) = \tilde{y}'(t) \ln(\tilde{y}(t))$ and, in step $(\star\star)$, we use the definition of $H$.

So, $\tilde{r}^\ast$ defines an optimal policy on $[q, 1]$.

Before time $q$, the optimal policy accepts any $H$ value that arrives. These $H$ values arrive at rate $-n \left( - \frac{1}{n} \ln \tilde{y}(q) \right) = - \ln \tilde{y}(q)$. So, the probability of none of them arriving by time $q$ is $\exp\left(q \ln \tilde{y}(q))\right) = (\tilde{y}(q))^q$. Therefore, the optimal policy's expected reward is
\begin{align}
    OPT_q & = \left(1 - (\tilde{y}(q))^q \right) H + (\tilde{y}(q))^q \tilde{r}^\ast(q) \\
    &= \left(1 - (\tilde{y}(q))^q \right) (H - \tilde{r}^\ast(q)) + \tilde{r}^\ast(q) \\
    & = \left(1 - (\tilde{y}(q))^q \right) \frac{1}{\tilde y'(q) \ln \tilde y(q)} + \tilde{r}^\ast(q) \\
    \lim_{q \rightarrow 0} OPT_q &= \tilde{r}^\ast(0) = - \int_0^1 \frac{1}{\tilde{y}'(s)} ds, \label{eq:opt-form-no-q}
\end{align}
where we use that
\[
\lim_{q \to 0} \frac{1 - (\tilde{y}(q))^q}{\ln \tilde{y}(q)} = \lim_{q \to 0} \frac{- (\tilde{y}(q))^q \left( \ln(\tilde{y}(q)) + q \frac{\tilde{y}'(q)}{\tilde{y}(q)} \right)}{\frac{\tilde{y}'(q)}{\tilde{y}(q)}} = 0
\]
by l'Hôpital's rule, together with the observation that by \eqref{eq:y-tilde-defn} $\tilde{y}'(0)$ is a constant. 

\paragraph{Competitive Ratio.} 
Combining \eqref{eq:max-form-no-q} and \eqref{eq:opt-form-no-q}, we have
\begin{equation}
    \inf_{q > 0} \frac{OPT_q}{MAX_q} \leq \lim_{q \to 0} \frac{OPT_q}{MAX_q} =  \frac{\lim_{q \to 0} OPT_q}{\lim_{q \to 0} MAX_q} = \frac{\int_0^1 \frac{1}{\tilde{y}'(t)} dt}{\frac{1}{\tilde{y}'(0)} + \int_0^1 \frac{1 - \tilde{y}(t)}{\tilde{y}'(t)} dt}. \label{eq:Rdefn}
\end{equation}

We claim this ratio is nearly $\beta_n$, approaching it in the limit as $n$ becomes large. 
In particular, let $R := \lim_{q \rightarrow 0} \frac{OPT_q}{MAX_q}$, and let $\eps := 1/R - 1/\beta_n$, so that $1/R = 1/\beta_n + \eps$. This defines $\eps$. We will now get a handle on it.

As $\tilde{y}'(0) = -1/\beta_n$ by the definition of $\ty$ \eqref{eq:y-tilde-defn}, rearranging \eqref{eq:Rdefn} yields
\begin{align}
    \int_0^1 \frac{1}{\tilde{y}'(t)} \left( \left(1 - \frac{1}{R} \right) - \tilde{y}(t) - \beta_n \tilde{y}'(t) \right) dt &= 0 \notag \\
    -\int_0^1 \frac{1}{\tilde{y}'(t)} \eps \: dt + \int_0^1 \frac{1}{\tilde{y}'(t)} \left( \left(1 - \frac{1}{\beta_n} \right) - \tilde{y}(t) - \beta_n \tilde{y}'(t) \right) dt &= 0 \\
    \left(\int_0^1 \frac{dt}{\tilde{y}'(t)} \right)^{-1} \cdot \int_0^1 \frac{1}{\tilde{y}'(t)} \left( \left(1 - \frac{1}{\beta_n} \right) - \tilde{y}(t) - \beta_n \tilde{y}'(t) \right) dt &= \eps. \label{eq:rearranged-ratio}
\end{align}
Again applying \eqref{eq:y-tilde-defn} gives
\begin{align}
    \eps &= \left(\int_0^1 \frac{dt}{\tilde{y}'(t)} \right)^{-1} \cdot \int_0^1 \frac{1}{\tilde{y}'(t)} \left( - \ty(t) \ln \ty(t) + \left(1-\beta_n\right) \ty'(t) \right) dt \notag \\
    &= \left(\int_0^1 \frac{dt}{\tilde{y}'(t)} \right)^{-1} \cdot \left( - \int_0^1 \frac{\ty(t) \ln \ty(t)}{\tilde{y}'(t)} dt + \left(1-\beta_n\right)  \right) \notag \\
    &= \left(\int_0^1 \frac{dt}{\tilde{y}'(t)} \right)^{-1} \frac{1}{e^{n}\cdot (1/\beta_n - 1) - (n+1)}, \label{eq:eps-form}
\end{align}
where this last step uses the observation of \cite{liu21variable}) that by \eqref{eq:y-tilde-defn} and integration by parts,
\[
    \int_0^1 \frac{\ty(t) \log \ty(t)}{\tilde{y}'(t)} dt = \int_0^1 \frac{\ty''(t)  \ty(t)}{(\ty'(t))^2} dt = - \left.\frac{\ty(t)}{\ty'(t)}\right\vert_0^1 + \int_0^1 dt = 1 - \beta_n + \frac{e^{-n}}{e^{-n}(n+1) - (1/\beta_n - 1)}.
\]
Since $-\beta_n^{-1} + 1 \leq \ty'(t) \leq -\beta_n^{-1}$ for all sufficiently large $n$, this first definite integral term is a constant.

\medskip
We conclude that $\eps = \Theta(e^{-n})$ and the competitive ratio is $R = \beta_n + \Theta(e^{-n})$, as claimed. 
\end{proof}

In particular, this shows that the optimal competitive ratio approaches the Kertz constant $\beta$ at least exponentially quickly in the Poisson rate.

To prove this, we only need to understand the rate at which $\beta_n$ as defined in \eqref{eq:bn-defn} approaches $\beta$.

\begin{lemma} \label{lem:modifiedbetabound}
    For $t \in [0,1/4]$ let $\beta_t$ be the constant for which
    \[
        \int_{t}^1 \frac{1}{\beta_t^{-1} - 1 - y(\log y - 1)} \: dy = 1,
    \]
    where $\beta_0 = \beta=0.745...$ is the Kertz constant. Then $\beta_t$ is well-defined, and $\beta_t = \beta + O(t)$.
\end{lemma}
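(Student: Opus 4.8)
The plan is to reparametrize and reduce everything to a one‑variable monotonicity argument. Write $c \defeq \beta^{-1}-1$ and $h(y) \defeq y(1-\log y)$, so that the denominator in the defining equation is exactly $c+h(y)$, and set $G(c,t) \defeq \int_t^1 \frac{dy}{c+h(y)}$. Then $\beta_t$ is the number with $G(\beta_t^{-1}-1,t)=1$, and $\beta_0$ being the Kertz constant is precisely the statement $G(\beta_0^{-1}-1,0)=1$ (the footnote's defining equation). The only facts about $h$ I would need are that it is continuous on $[0,1]$ with $h(0)=0$, $h(1)=1$, and $h'(y)=-\log y\ge 0$, hence increasing with $0\le h\le 1$ on $[0,1]$; these trivial bounds will carry the whole argument.

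For well‑definedness, fix $t\in[0,1/4]$. On the domain $c\in(-h(t),\infty)$ the integrand is positive and bounded, so $G(c,t)$ is finite; differentiating under the integral gives $\partial_c G(c,t) = -\int_t^1 (c+h(y))^{-2}\,dy<0$, so $G(\cdot,t)$ is continuous and strictly decreasing. I would then check the boundary limits: $G(c,t)\to 0$ as $c\to\infty$, and $G(c,t)\to\infty$ as $c\downarrow -h(t)$ (the integrand blows up like $(h(y)-h(t))^{-1}$ near $y=t$, or like $(y(1-\log y))^{-1}$ near $0$ when $t=0$, and both are non‑integrable — this last divergence is exactly why $\beta_0$ equals the Kertz constant rather than $1$). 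Hence $G(\cdot,t)$ is a decreasing bijection of $(-h(t),\infty)$ onto $(0,\infty)$, so there is a unique $c_t$ with $G(c_t,t)=1$, and we set $\beta_t \defeq 1/(1+c_t)$; at $t=0$ uniqueness pins $c_0=\beta_0^{-1}-1$. (For $t$ close to $1/4$, $\beta_t$ need not be $\le 1$, but it is still uniquely defined; in the intended application $t=e^{-n}$ is tiny, so this does not arise.)

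For the quantitative bound I would first record the crude a priori estimate $1+c_t\ge 1-t\ge 3/4$, obtained by plugging $h\le 1$ into $G(c_t,t)=1$. Next I compare $c_t$ to $c_0$ using $G(c_t,t)=1=G(c_0,0)$, which rearranges to $G(c_t,t)-G(c_0,t) = \int_0^t \frac{dy}{c_0+h(y)}\in[0,\,t/c_0]$, the upper bound coming from $h\ge 0$. Since $G(\cdot,t)$ is strictly decreasing this forces $c_t\le c_0$, and bounding $-\partial_c G$ below on $[c_t,c_0]$ via $c+h(y)\le c_0+1=\beta_0^{-1}$ (using $c\le c_0$ and $h\le 1$) gives $G(c_t,t)-G(c_0,t)\ge (c_0-c_t)\cdot\frac{3}{4}\beta_0^2$. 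Combining the two displays yields $0\le c_0-c_t\le \frac{4t}{3\beta_0(1-\beta_0)}=O(t)$, and then
\[
0\le \beta_t-\beta_0 = \frac{c_0-c_t}{(1+c_t)(1+c_0)}\le \frac{16\,t}{9(1-\beta_0)}=O(t),
\]
using $1+c_t\ge 3/4$ and $1+c_0=\beta_0^{-1}$, which is the claimed estimate.

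The only step requiring real care is the well‑definedness over the whole range $t\in[0,1/4]$: one must confirm that the zero of $c\mapsto c+h(\cdot)$ stays strictly below the lower limit of integration, and that the limiting values of $G(\cdot,t)$ genuinely straddle $1$. Everything after that is a routine monotonicity sandwich, and the reason it stays painless is that the bounds $0\le h\le 1$ already furnish uniform two‑sided control on $G$ and $\partial_c G$ over the relevant range of $c$, so no delicate analysis of $h$ near $0$ is needed.
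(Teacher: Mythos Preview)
Your proof is correct and follows essentially the same route as the paper: reparametrize to $c=\beta^{-1}-1$, exploit that $G(\cdot,t)$ is strictly decreasing, bound the prefix integral $\int_0^t (c_0+h)^{-1}\,dy\le t/c_0$ via $h\ge 0$, and combine with a lower bound on $\lvert\partial_c G\rvert$ to trap $c_0-c_t$. One point worth noting: your lower bound $\lvert\partial_c G\rvert\ge (1-t)\beta_0^2$ via $c+h(y)\le c_0+1$ (i.e.\ $h\le 1$) is the right direction, whereas the paper's stated inequality $\partial_c F\le -1/c^2$ from $h\ge 0$ actually goes the wrong way---so your version is the cleaner execution of the same idea.
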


\begin{proof}
    For convenience let $c^* = \beta^{-1}-1$ (corresponding to the Kertz constant) and $c_t = \beta_t^{-1}-1$, and define $F(t, c)$ to be
    \[
        F(t, c) := \int_{t}^1 \frac{1}{c - y(\log y - 1)} \: dy,
    \]
    for $t \in [0,1]$ and $c > 0$. Then $c_t$ is defined to be the $c$ for which $F(t,c) = 1$, and $F(0, c^*) = 1$.

    To begin observe that $F(t,c)$ is decreasing in $t$ for fixed $c$, and decreasing in $c$ for fixed $t$. Then $F(t,c) \leq F(0,c^*)=1$ and $F(t,c_t)=1$ imply that $c_t \leq c^*$. Our goal is therefore to establish the bound $c_t \geq c^* - O(t)$, which will imply the claim. We will do this by demonstrating some appropriate $c'$ for which $F(t,c') \geq 1$; then $c_t \geq c'$ since $F(t,c')$ is decreasing in $c'$.
    
    To this end, consider the derivative of $F(0,c)$ with respect to $c$ (which is negative):
    \begin{align}
        \frac{\partial}{\partial c} F(0,c) 
        &= \frac{\partial}{\partial c} \int_{0}^1 \frac{1}{c - y(\log y - 1)} \: dy \notag \\
        &= \int_{0}^1 \frac{\partial}{\partial c} \frac{1}{c - y(\log y - 1)} \: dy \notag \\
        &= \int_{0}^1 \frac{-1}{(c - y(\log y - 1))^2} \: dy \notag \\
        &\leq \frac{-1}{c^2} \notag
    \end{align}
    by the Leibniz integral rule and using that $-y \log y + y \geq 0$. This implies that for $c' \leq c$,
    \begin{equation} \label{eq:betaintegralderivativelb}
        F(0,c') \geq F(0,c) + (c' - c)\cdot\frac{-1}{c^2}.
    \end{equation}
    
    Next consider the first part of the integral:
    \begin{align}
        F(0,c) - F(t,c) &= \int_{0}^t \frac{1}{c - y(\log y - 1)} \: dy \notag \\
        &\leq \int_{0}^t \frac{1}{c} \: dy = t/c, \label{eq:betaintegralprefixbound}
    \end{align}
    because $-y \log y + y$ is nonnegative for $y \in [0,1]$.

    Combining \eqref{eq:betaintegralderivativelb} and \eqref{eq:betaintegralprefixbound} we have
    \begin{align}
        F(t,c') &= F(0,c') - (F(0,c') - F(t,c')) \notag \\
        & \geq F(0,c^*) + \frac{c^* - c'}{(c^*)^2} - \frac{t}{c'}. \notag
    \end{align}
    Since $F(0,c^*) = 1$, this is true so long as $\frac{c^* - c'}{(c^*)^2} \geq \frac{t}{c'}$, which holds for $c' = c^*(1-2t)$ so long as $t \leq 1/4$. It follows that for all such $t$ we have $c_t \geq c^* (1-2t)$. In this case 
    \[
        \beta_t \leq \beta' = \frac{1}{(\beta^{-1} - 1)(1 - 2t) +1} = \beta + O(t),
    \]
    as claimed.
\end{proof}

We are now ready to prove \Cref{thm:kertz-hardness-n-asymptotic}.
 
\begin{proof}[Proof of \Cref{thm:kertz-hardness-n-asymptotic}]
    This follows directly from \Cref{thm:hard-opt-CR}, together with the definition of $\beta_n$ \eqref{eq:bn-defn} and \Cref{lem:modifiedbetabound}, which demonstrates that $\beta_n = \beta + O(e^{-n})$.
\end{proof}
\section{Proofs from \texorpdfstring{\Cref{sec:hardness}}{Section 4}} \label{app:hardness}

\optimalpolicystructure* 

\begin{proof}[Proof of \Cref{lem:opt-policy}]
    Let $\epsilon > 0$ and consider $r(t) - r(t+\epsilon)$. Let $Z$ be the value of the highest arrival in $[t, t+\epsilon)$; set $Z = 0$ if there is no arrival. Note that $r(t) \leq \Ex{\max\{r(t+\epsilon), Z\}} = r(t+\epsilon) + \Ex{\max\{0, r(t+\epsilon)\}}$.

    As we have $\Pr{Z \geq x} = n \cdot \epsilon \cdot (1 - \CDF(x))$, this gives us
    \[
    r(t) - r(t+\epsilon) \leq \int_{r(t+\epsilon)}^\infty n \cdot \epsilon \cdot (1 - \CDF(x)) dx.
    \]
    Taking the limit $\epsilon \to 0$, we get the first claim.

    Furthermore, accepting any arrival above $r(t + \epsilon)$, we obtain
    \[
    r(t) - r(t+\epsilon) \geq \int_{r(t+\epsilon)}^\infty (1 - \exp(-n\epsilon) \cdot (1 - \CDF(x)) dx.
    \]
    Taking the limit $\epsilon \to 0$, this gives $r'(t) \geq \int_{r(t)}^\infty n \cdot (1 - \CDF(x)) dx$.
\end{proof}

\lemprophimpossprobabilities*
\begin{proof}[Proof of \Cref{lem:proph-imposs-probabilities}]
    Let
    \begin{itemize}
    \item $\mathcal{E}_1$ be the event that we stop on the first value of at least $b$ until time $\epsilon$,
    \item $\mathcal{E}_2$ be the event that we stop on a value of at least $b$ until time $\epsilon$ that is not the first one,
    \item $\mathcal{E}_3$ be the event that we stop on a value below $b$ but at least $c$ until time $\epsilon$,
    \item $\mathcal{E}_1'$ be the event that there is exactly one arrival of value at least $b$ until time $\epsilon$, and
    \item $\mathcal{E}_2'$ be the event that there are two or more of them. 
    \end{itemize}

    Note that for one of the events $\mathcal{E}_1, \mathcal{E}_2, \mathcal{E}_3$ to take place it is necessary that there is an arrival of value at least $c$ until time $\epsilon$. Therefore, $\Pr{\mathcal{E}_1 \cup \mathcal{E}_2 \cup \mathcal{E}_3} \leq 1 - e^{-n \epsilon (1 - \CDF(c))}$.

    The arrivals above the value $b$ are themselves a Poisson process with rate $\lambda = n (1 - \CDF(b))$. So, letting $s(t)$ denote the probability that the policy has not stopped by time $t$, the overall probability of accepting any arrival above $b$ up to time $\epsilon$ is given by
    \begin{align*}
    \Pr{\mathcal{E}_1 \cup \mathcal{E}_2} & = \int_0^\epsilon s(t) \cdot \lambda \cdot (1 - p(t)) dt \geq s(\epsilon) \cdot \lambda \cdot \left( \epsilon - \int_0^\epsilon p(t) dt \right) \\
    & = (1 - \Pr{\mathcal{E}_1 \cup \mathcal{E}_2 \cup \mathcal{E}_3}) \cdot \lambda \cdot ( \epsilon - P ) \\
    & \geq e^{-n \epsilon (1 - \CDF(c))} \cdot \lambda \cdot ( \epsilon - P ).
    \end{align*}
    
    Our assumption on the policy is that $\Pr{\mathcal{E}_1 \given \mathcal{E}_1' \cup \mathcal{E}_2'} \leq (1-\delta)$.
    And since $\mathcal{E}_1 \subseteq \mathcal{E}_1' \cup \mathcal{E}_2'$, this implies that $\Pr{\mathcal{E}_1} \leq (1 - \delta) \Pr{\mathcal{E}_1' \cup \mathcal{E}_2'}$. 
    
    The number of arrivals above $b$ until time $\epsilon$ is drawn from a Poisson distribution with rate $\lambda' \defeq - \eps \lambda$ and so we have
    \begin{align*}
        \Pr{\mathcal{E}_1'} &= \lambda' e^{- \lambda'}, \\
        \Pr{\mathcal{E}_2'} &= 1 - e^{- \lambda'} - \lambda' e^{-\lambda'}, \\
        \Pr{\mathcal{E}_1' \cup \mathcal{E}_2'} &= 1 - e^{- \lambda'}. 
    \end{align*}
    By concavity we have $\frac{1 - e^{- \lambda'} - \lambda' e^{- \lambda'}}{1 - e^{- \lambda'}} \leq \frac{1}{2} \lambda'$, and so $\Pr{\mathcal{E}_2} \leq \Pr{\mathcal{E}_2'} \leq \frac{1}{2} \lambda' \Pr{\mathcal{E}_1' \cup \mathcal{E}_2'} $.
    Therefore 
    \begin{align*}
        \Pr{\mathcal{E}_1 \cup \mathcal{E}_2} &\leq  \Pr{\mathcal{E}_1} + \Pr{\mathcal{E}_2} \\
        &\leq (1 - \delta) \Pr{\mathcal{E}_1' \cup \mathcal{E}_2'} + \frac{1}{2} \lambda' \Pr{\mathcal{E}_1' \cup \mathcal{E}_2'} \\
        &\leq \left(1 - \delta + \frac{1}{2} \lambda' \right) \cdot \Pr{\mathcal{E}_1' \cup \mathcal{E}_2'} \\
        &\leq \left(1 - \delta + \frac{1}{2} \lambda' \right) \cdot \lambda'.
    \end{align*}

    In combination, we have
    \begin{align*}
    P \geq \epsilon - \frac{1}{\lambda \cdot e^{-n \epsilon (1 - \CDF(c))}} \Pr{\mathcal{E}_1 \cup \mathcal{E}_2} = \epsilon \left( 1 - \frac{1 - \delta + \frac{1}{2} \epsilon \lambda}{e^{-n \epsilon (1 - \CDF(c))}} \right). \qedhere
    \end{align*}
\end{proof}

\obstildeyderivativebounds*
\begin{proof}[Proof of \Cref{obs:tilde-y-derivative-bounds}]
    This follows from the differential equation \eqref{eq:y-tilde-defn} that defines $\tilde{y}$ and the observation regarding it.
    From these prior observations and the definition of $\beta_n$, we know that $\tilde{y}(t)$ is decreasing and continuous, and ranges from $\tilde{y}(0)=1$ to $\tilde{y}(1)=e^{-n}$.

    Next consider the function $y (\log y - 1)$, which appears in the first part of \eqref{eq:y-tilde-defn}.
    For $y \in [0,1]$, it attains a maximum of $0$ at $y=0$ and a minimum of $-1$ at $y = 1$.
    From \eqref{eq:y-tilde-defn} we then conclude that 
    \begin{equation*}
        - \beta_n^{-1} \leq \tilde{y}'(t) \leq - \left( \beta_n^{-1} - 1\right).
    \end{equation*} 
    Since $\tilde{y}$ is differentiable, integrating starting from $t=0$ implies the first part of the second claim.
    To get the last part, by Taylor's theorem we have
    \[
        \ty(t) \leq 1 + t\cdot \ty'(0) + \frac{t^2}{2} \cdot \max_{y(s) \in [1/e, 1]} \ty''(s), \qquad\qquad t \leq \ty^{-1}(1/e).
    \]
    The claim follows by observing that $\ty'' = \ty' \log \ty$ is decreasing in $y$, and furnishing an upper bound at $\ty = 1/e$.
\end{proof}

\lemprophetmuststopearly*
\begin{proof}[Proof of \Cref{lem:prophet-must-stop-early}]
    Consider any policy that fulfills the constraint from the lemma, and consider its performance on the hard instance \eqref{eq:hard-dist-CDF}. Without loss of generality, we can assume that it does not stop on values below $\tilde{r}^\ast(\epsilon)$ before time $\epsilon$. Otherwise, modify the policy so that it rejects such a value it wanted to accept, waits until time $\epsilon$, and then starts an optimal (unconstrained) policy. The expected reward of this is at least $\tilde{r}^\ast(\epsilon)$. So, overall, the expected reward of our constrained policy can only have increased.

    We will first derive a lower bound on $P$ in terms of $\eps$ and $\delta$.
    We start from \Cref{lem:proph-imposs-probabilities} with the choice $c = \tilde{r}^\ast(\epsilon)$, and apply that $1-\CDF(\rts(t)) = -\frac{1}{n}\ln(\ty(t))$ by \eqref{eq:CDFdefn}.
    Then for $t_b \defeq (\rts)^{-1}(b)$, 
    \begin{align}
        P &\geq \epsilon \left( 1 - \frac{1 - \delta + \frac{1}{2} \epsilon n (1 - \CDF(b))}{e^{-n \epsilon (1 - \CDF(c))}} \right) \notag \\ 
        &= \epsilon \left( 1 - \frac{1 - \delta + \frac{1}{2} \epsilon  \ln \frac{1}{\ty(t_b)}}{(\ty(\eps))^{\eps}} \right) \notag \\
        &\geq \eps \left( 1 - \left(1-\frac{1}{\beta_n} \eps\right)^{-\eps} \left(1 - \delta + \frac{1}{2} \epsilon  \ln \frac{1}{\ty(\eps)}\right) \right) \notag \\
        &\geq \eps \left( 1 - \left(1-\frac{1}{\beta_n} \eps\right)^{-\eps} \left(1 - \delta + \frac{1}{2} \eps \ln \frac{1}{1 - \frac{1}{\beta_n} \eps }\right) \right), \notag \\
        &= \eps \left( \delta  - (1 - \delta)\left(\left(1-\frac{1}{\beta_n} \eps\right)^{-\eps} - 1 \right) - \left(1-\frac{1}{\beta_n} \eps\right)^{-\eps} \frac{1}{2} \eps \ln \frac{1}{1 - \frac{1}{\beta_n} \eps }\right) . \notag 
        \intertext{
            We next apply two observations. First is that $\left(1-\frac{1}{\beta_n} \eps\right)^{-\eps} \leq 1 + \frac{3}{2} \eps^2$ for $\eps \leq \frac{1}{8}$ and all sufficiently large $n$. The second is that this second term is bounded by $\left(1-\frac{1}{\beta_n} \eps\right)^{-\eps} \frac{1}{2} \eps \ln \frac{1}{1 - \frac{1}{\beta_n} \eps } \leq \frac{3}{4} \eps^2 $ for all $\eps \leq \frac{1}{8}$ and all sufficiently large $n$. Then
        }
        &\geq \eps \left( \delta  - (1 - \delta)\frac{3}{2} \eps^2 - \frac{3}{4} \eps^2 \right) \notag \\
        &= \eps \delta \left(1 -\frac{3}{2} \eps^2\right) - \frac{9}{4} \eps^3.
        \label{eq:hardness-prophet-p-lower-bound}
    \end{align}
    This concludes our lower bound on $P$.
    \medskip

    We now upper bound the reward of the policy in terms of $P$.
    By \Cref{lem:proph-imposs-CR-bound}, the reward of the policy is no more than $\max\left\{ a, \tilde{r}^\ast(\frac{\kappa P (1-\gamma)}{2 (1 + \kappa)})\right\}$, where $a = \tilde{r}^\ast(\gamma P)$ and
    \begin{align*}
        \kappa &= \frac{(b - a) \cdot (1 - \CDF(b))}{\int_{\rts(\eps)}^\infty (1-\CDF(x)) dx} \\
        &= (b - a) \frac{(-\ln(\tilde{y}((\tilde{r}^\ast)^{-1}(b)))) } {\frac{-1}{\tilde{y}'(\epsilon)}} \\
        \intertext{by \eqref{eq:r-defn} and again by \eqref{eq:CDFdefn}. Then by \Cref{obs:tilde-y-derivative-bounds},}
        &\geq(b - a) \beta_n \cdot (-\ln(\tilde{y}((\tilde{r}^\ast)^{-1}(b)))).
        \intertext{We will choose $\gamma = 2/3$ so that $a = \rts\left(\frac{2}{3} P\right)$, and let $b = \rts\left(\frac{1}{3} P \right)$. Then by \Cref{obs:tilde-y-derivative-bounds} we have}
        &= (b - a) \beta_n \cdot \ln\frac{1}{\tilde{y}(P/3)}  \\
        &\geq (b - a) \beta_n \ln \left(1 - \frac{1}{\beta_n} \cdot \frac{P}{3} + \frac{1}{2}\frac{P^2}{9} \cdot \left(1 + \frac{2}{e} - \frac{1}{\beta_n}\right)\right)^{-1} \\
        &\geq (b - a) \frac{P}{3}.
    \end{align*}

    We will now apply this lower bound on $\kappa$.
    Observing that $\frac{\kappa}{1 + \kappa}$ is increasing in $\kappa$ and that $(b-a)P/3 \leq 1/24$ for $\eps \leq 1/8$, this gives
    \begin{align*}
        \frac{\kappa P (1-\gamma)}{2 (1 + \kappa)} &= \frac{P}{6} \cdot \frac{\kappa }{1 + \kappa} \\
        &\geq \frac{P}{6} \cdot \frac{(b - a) \frac{P}{3} }{1 + (b - a) \frac{P}{3}} \\
        &\geq \frac{P^2}{18} \cdot \frac{19}{20}(b-a),
    \end{align*}
    where we applied that $\frac{x}{1+x} \geq \frac{19}{20}x$ for $x\in [0,1/24]$.

    We now bound $\rts$. By \eqref{eq:r-defn} and \Cref{obs:tilde-y-derivative-bounds} we have
    \begin{align}
        \rts(t) &\defeq \int_t^1 \frac{-1}{\ty'(s)} \: ds 
        = \rts(0) - \int_0^t \frac{-1}{\ty'(s)} \: ds \notag\\
        &= \beta_n - \int_0^t \frac{-1}{\ty'(s)} \: ds \notag\\
        &\leq \beta_n - \int_0^t \frac{1}{\beta_n^{-1}} \: ds \notag\\
        &=\beta_n (1 - t). \label{eq:rts-t-upper-bound}
    \end{align}
    
    Substituting, we then have from \Cref{lem:proph-imposs-CR-bound} that one component of our reward upper bound is
    \begin{align}
        a & = \rts(2P/3) \notag \\
        & \leq \beta_n - \beta_n \cdot \frac{2P}{3} .
        \label{eq:hardness-prophet-a-upperbound}
    \end{align}
    On the other hand, by \eqref{eq:rts-t-upper-bound} the other component of our reward upper bound is
    \begin{align}
        \rts\left(\frac{\kappa P (1-\gamma)}{2 (1 + \kappa)}\right) &\leq \beta_n - \beta_n \cdot \left(\frac{\kappa P (1-\gamma)}{2 (1 + \kappa)}\right) \notag \\
        &\leq \beta_n - \beta_n \cdot \frac{P^2}{18} \cdot \frac{19}{20}(b-a). \label{eq:hardness-prophet-second-reward-bound}
    \end{align}

    It remains only to lower bound $(b-a)$ for our choice of $a = \rts(2P/3)$ and $b = \rts(P/3)$.
    From \eqref{eq:rts-t-upper-bound} we have an upper bound on $a$, so for a combined bound, we again employ \Cref{obs:tilde-y-derivative-bounds} to obtain
    \begin{align}
        \rts(t) &\defeq \int_t^1 \frac{-1}{\ty'(s)} \: ds 
        = \rts(0) - \int_0^t \frac{-1}{\ty'(s)} \: ds \notag\\
        &= \beta_n - \int_0^t \frac{-1}{\ty'(s)} \: ds \notag\\
        &\geq \beta_n - t \frac{1}{\frac{1}{\beta_n} - t \cdot \left(1 + \frac{2}{e} - \frac{1}{\beta_n}\right)}. \label{eq:rts-t-lower-bound}
    \end{align}
    Combining, we have 
    \begin{align}
        \rts(t/2) - \rts(t) &\geq \beta_n - \frac{t}{2} \frac{1}{\frac{1}{\beta_n} - \frac{t}{2} \cdot \left(1 + \frac{2}{e} - \frac{1}{\beta_n}\right)} - \beta_n (1 - t) \notag \\
        &=\beta_n \cdot t \cdot \left(1 - \frac{1}{2} \left(1 - \frac{t}{2} \frac{1 + \frac{2}{e} - \frac{1}{\beta_n}}{1/\beta_n}\right)^{-1} \right) \notag \\
        &\geq \beta_n\cdot t \cdot \frac{9}{20}, \notag 
    \end{align}
    for all $t \leq 1/2$ and sufficiently large $n$. 
    Applying this to \eqref{eq:hardness-prophet-second-reward-bound} with $t = 2P/3$ and using our prior bound on $P$ of \eqref{eq:hardness-prophet-p-lower-bound}, our second part of the reward upper bound is
    \begin{align}
        \rts\left(\frac{\kappa P (1-\gamma)}{2 (1 + \kappa)}\right) &\leq  \beta_n - \beta_n \cdot P^3 \cdot \beta_n \frac{19}{800}
        \notag \\
        &\leq  \beta_n - \frac{P^3}{76} \notag
    \end{align}
    for all sufficiently large $n$.
    The larger of this bound and \eqref{eq:hardness-prophet-a-upperbound} are an upper bound on our reward by \Cref{lem:proph-imposs-CR-bound}; however this bound is clearly the larger.

    \medskip
    It finally remains to substitute our lower bound \eqref{eq:hardness-prophet-p-lower-bound} for $P$ as a function of $\eps$ and $\delta$ into this expression. 
    Doing so yields the claim.
\end{proof}

\end{document}